\newcommand{\add}[1]{{\color{black}#1}} 
\newtheorem{theorem}{Theorem}
\newtheorem{lemma}[theorem]{Lemma}
\newtheorem{definition}[theorem]{Definition}
\DeclarePairedDelimiter{\norm}{\lVert}{\rVert}
\newcommand{\rvline}{\hspace*{-\arraycolsep}\vline\hspace*{-\arraycolsep}}
\definecolor{mpscolor}{RGB}{190,169,213}
\definecolor{gatecolor}{RGB}{245,178,151}
\newenvironment{diagram}
{
\begin{tikzpicture}[baseline = (X.base),every node/.style={scale=1.},scale=0.7]
}
{
\end{tikzpicture}
}
\newenvironment{diagram_05}
{
\begin{tikzpicture}[baseline = (X.base),every node/.style={scale=0.7},scale=0.4]
}
{
\end{tikzpicture}
}
\newenvironment{diagram_08}
{
\begin{tikzpicture}[baseline = (X.base),every node/.style={scale=0.7},scale=0.6]
}
{
\end{tikzpicture}
}
\newcommand{\MpsRect}[4]{
% Arguments: x, y, text, legs
\draw[fill=white] (#1-0.5, #2+0.5) rectangle (#1+0.5, #2-0.5);
\draw (#1, #2) node {#3};
\IfSubStr{#4}{u}{ \draw (#1, #2+0.5) -- (#1, #2+1);  }{  }
\IfSubStr{#4}{d}{ \draw (#1, #2-0.5) -- (#1, #2-1); }{  }
\IfSubStr{#4}{l}{ \draw (#1-0.5, #2) -- (#1-1, #2); }{  }
\IfSubStr{#4}{r}{ \draw (#1+0.5, #2) -- (#1+1, #2); }{  }
}
\newcommand{\MpsRectBig}[4]{
% Arguments: x, y, text, legs
\draw[fill=white] (#1-0.7, #2+0.7) rectangle (#1+0.7, #2-0.7);
\draw (#1, #2) node {#3};
\IfSubStr{#4}{u}{ \draw (#1, #2+0.7) -- (#1, #2+1.4);  }{  }
\IfSubStr{#4}{d}{ \draw (#1, #2-0.7) -- (#1, #2-1.4); }{  }
\IfSubStr{#4}{l}{ \draw (#1-0.7, #2) -- (#1-1.2, #2); }{  }
\IfSubStr{#4}{r}{ \draw (#1+0.7, #2) -- (#1+1.2, #2); }{  }
}
\newcommand{\MpsRectRed}[4]{
% Arguments: x, y, text, legs
\draw[fill=red!30] (#1-0.5, #2+0.5) rectangle (#1+0.5, #2-0.5);
\draw (#1, #2) node {#3};
\IfSubStr{#4}{u}{ \draw (#1, #2+0.5) -- (#1, #2+1);  }{  }
\IfSubStr{#4}{d}{ \draw (#1, #2-0.5) -- (#1, #2-1); }{  }
\IfSubStr{#4}{l}{ \draw (#1-0.5, #2) -- (#1-1, #2); }{  }
\IfSubStr{#4}{r}{ \draw (#1+0.5, #2) -- (#1+1, #2); }{  }
}
\newcommand{\MpsRectGreen}[4]{
% Arguments: x, y, text, legs
\draw[fill=white] (#1-0.5, #2+0.5) rectangle (#1+0.5, #2-0.5);
\draw (#1, #2) node {#3};
\IfSubStr{#4}{u}{ \draw (#1, #2+0.5) -- (#1, #2+1);  }{  }
\IfSubStr{#4}{d}{ \draw (#1, #2-0.5) -- (#1, #2-1); }{  }
\IfSubStr{#4}{l}{ \draw (#1-0.5, #2) -- (#1-1, #2); }{  }
\IfSubStr{#4}{r}{ \draw (#1+0.5, #2) -- (#1+1, #2); }{  }
}
\newcommand{\GivensGate}[4]{
\draw[fill=gatecolor] (#1, #2) rectangle (#1+2.5, #2+1);
\draw (#1, #2) node {#3};
\IfSubStr{#4}{u}{ \draw (#1, #2+0.5) -- (#1, #2+1);  }{  }
\IfSubStr{#4}{d}{ \draw (#1, #2-0.5) -- (#1, #2-1); }{  }
\IfSubStr{#4}{l}{ \draw (#1-0.5, #2) -- (#1-1, #2); }{  }
\IfSubStr{#4}{r}{ \draw (#1+0.5, #2) -- (#1+1, #2); }{  }
}
\newcommand{\MpsCircle}[4]{
% Arguments: x, y, text, legs
\draw[fill=white] (#1, #2) circle (.5);
\draw (#1, #2) node {#3};
\IfSubStr{#4}{u}{ \draw (#1, #2+0.5) -- (#1, #2+1);  }{  }
\IfSubStr{#4}{d}{ \draw (#1, #2-0.5) -- (#1, #2-1); }{  }
\IfSubStr{#4}{l}{ \draw (#1-0.5, #2) -- (#1-1, #2); }{  }
\IfSubStr{#4}{r}{ \draw (#1+0.5, #2) -- (#1+1, #2); }{  }
}
\newcommand{\MpsWalkerCirclePattern}[4]{
% Arguments: x, y, text, legs
\draw[fill=mpscolor] (#1, #2) circle (.5);
\draw (#1, #2) node {#3};

    \tikzset{mypattern/.style={pattern=north east lines, pattern color=white}}
    
    \begin{scope}
        \tikzset{clip}
        \draw[mypattern] (#1, #2) circle (.5);
    \end{scope}
    
\IfSubStr{#4}{u}{ \draw (#1, #2+0.5) -- (#1, #2+1);  }{  }
\IfSubStr{#4}{d}{ \draw (#1, #2-0.5) -- (#1, #2-1); }{  }
\IfSubStr{#4}{l}{ \draw (#1-0.5, #2) -- (#1-1, #2); }{  }
\IfSubStr{#4}{r}{ \draw (#1+0.5, #2) -- (#1+1, #2); }{  }
}
\newcommand{\MpsWalkerCircle}[4]{
% Arguments: x, y, text, legs
\draw[fill=mpscolor] (#1, #2) circle (.5);
\draw (#1, #2) node {#3};
\IfSubStr{#4}{u}{ \draw (#1, #2+0.5) -- (#1, #2+1);  }{  }
\IfSubStr{#4}{d}{ \draw (#1, #2-0.5) -- (#1, #2-1); }{  }
\IfSubStr{#4}{l}{ \draw (#1-0.5, #2) -- (#1-1, #2); }{  }
\IfSubStr{#4}{r}{ \draw (#1+0.5, #2) -- (#1+1, #2); }{  }
}
\newcommand{\MpsWalkerCircleOrange}[4]{
% Arguments: x, y, text, legs
\draw[fill=gatecolor] (#1, #2) circle (.5);
\draw (#1, #2) node {#3};
\IfSubStr{#4}{u}{ \draw (#1, #2+0.5) -- (#1, #2+1);  }{  }
\IfSubStr{#4}{d}{ \draw (#1, #2-0.5) -- (#1, #2-1); }{  }
\IfSubStr{#4}{l}{ \draw (#1-0.5, #2) -- (#1-1, #2); }{  }
\IfSubStr{#4}{r}{ \draw (#1+0.5, #2) -- (#1+1, #2); }{  }
}
\newcommand{\MpsCircleGR}[4]{
% Arguments: x, y, text, legs
\draw[fill=white] (#1, #2) circle (.8);
\draw (#1, #2) node {#3};
\IfSubStr{#4}{u}{ \draw (#1, #2+0.5) -- (#1, #2+1);  }{  }
\IfSubStr{#4}{d}{ \draw (#1, #2-0.5) -- (#1, #2-1); }{  }
\IfSubStr{#4}{l}{ \draw (#1-0.5, #2) -- (#1-1, #2); }{  }
\IfSubStr{#4}{r}{ \draw (#1+0.5, #2) -- (#1+1, #2); }{  }
}
\newcommand{\MpsCircleRed}[4]{
% Arguments: x, y, text, legs
\draw[fill=red!30] (#1, #2) circle (.5);
\draw (#1, #2) node {#3};
\IfSubStr{#4}{u}{ \draw (#1, #2+0.5) -- (#1, #2+1);  }{  }
\IfSubStr{#4}{d}{ \draw (#1, #2-0.5) -- (#1, #2-1); }{  }
\IfSubStr{#4}{l}{ \draw (#1-0.5, #2) -- (#1-1, #2); }{  }
\IfSubStr{#4}{r}{ \draw (#1+0.5, #2) -- (#1+1, #2); }{  }
}
\newcommand{\MpsCircleBlue}[4]{
% Arguments: x, y, text, legs
\draw[fill=blue!30] (#1, #2) circle (.5);
\draw (#1, #2) node {#3};
\IfSubStr{#4}{u}{ \draw (#1, #2+0.5) -- (#1, #2+1);  }{  }
\IfSubStr{#4}{d}{ \draw (#1, #2-0.5) -- (#1, #2-1); }{  }
\IfSubStr{#4}{l}{ \draw (#1-0.5, #2) -- (#1-1, #2); }{  }
\IfSubStr{#4}{r}{ \draw (#1+0.5, #2) -- (#1+1, #2); }{  }
}
\newcommand{\MpsRectQP}[4]{
% Arguments: x, y, text, legs
\draw (#1-0.5, #2+0.5) rectangle (#1+0.5, #2-0.5);
\fill [red!30]    (#1-0.49, #2) rectangle (#1+0.49, #2-0.49);
\draw (#1, #2) node {#3};
\IfSubStr{#4}{u}{ \draw (#1, #2+0.5) -- (#1, #2+1);  }{  }
\IfSubStr{#4}{d}{ \draw (#1, #2-0.5) -- (#1, #2-1); }{  }
\IfSubStr{#4}{l}{ \draw (#1-0.5, #2) -- (#1-1, #2); }{  }
\IfSubStr{#4}{r}{ \draw (#1+0.5, #2) -- (#1+1, #2); }{  }
}
\newcommand{\MpsRectPQ}[4]{
% Arguments: x, y, text, legs
\draw (#1-0.5, #2+0.5) rectangle (#1+0.5, #2-0.5);
\fill [red!30]    (#1-0.49, #2) rectangle (#1+0.49, #2+0.49);
\draw (#1, #2) node {#3};
\IfSubStr{#4}{u}{ \draw (#1, #2+0.5) -- (#1, #2+1);  }{  }
\IfSubStr{#4}{d}{ \draw (#1, #2-0.5) -- (#1, #2-1); }{  }
\IfSubStr{#4}{l}{ \draw (#1-0.5, #2) -- (#1-1, #2); }{  }
\IfSubStr{#4}{r}{ \draw (#1+0.5, #2) -- (#1+1, #2); }{  }
}
\newcommand{\MpsRectrho}[4]{
% Arguments: x, y, text, legs
\draw[fill=red!30] (#1-0.5, #2+0.5) rectangle (#1+0.5, #2-0.5);
\draw (#1, #2) node {#3};
\IfSubStr{#4}{u}{ \draw (#1, #2+0.5) -- (#1, #2+1);  }{  }
\IfSubStr{#4}{d}{ \draw (#1, #2-0.5) -- (#1, #2-1); }{  }
\IfSubStr{#4}{l}{ \draw (#1-0.5, #2) -- (#1-1, #2); }{  }
\IfSubStr{#4}{r}{ \draw (#1+0.5, #2) -- (#1+1, #2); }{  }
}
\tikzstyle{fcicirc}=[ellipse, draw, fill=white, minimum width=15.0em, minimum height=2.0em]
\tikzstyle{fcirect}=[rectangle, draw, rounded corners=.55em, fill=blue!20, minimum width=15.0em, minimum height=2.0em]
\tikzstyle{state}=[circle, draw, fill=red!30, thick, minimum width=2.0em]
\tikzstyle{basis}=[circle, draw, fill=green!30, thick, minimum width=2.5em]
\tikzstyle{operator}=[rectangle, draw, fill=red!40, thick, minimum width=2.5em, minimum height=2.5em]
\tikzstyle{other}=[rectangle, draw, fill=yellow!50, thick, minimum width=2.5em, minimum height=2.5em]
\begin{document}

\author{Tong Jiang}
\affiliation{Department of Chemistry and Chemical Biology, Harvard University, Cambridge, MA, 02138, USA}

\author{Bryan O'Gorman}
\affiliation{IBM Quantum, IBM Research Cambridge, Cambridge, MA 02142, USA}
\author{Ankit Mahajan}
\affiliation{Department of Chemistry, Columbia University, New York, New York 10027, USA}
\author{Joonho Lee}
\email{joonholee@g.harvard.edu}
\affiliation{Department of Chemistry and Chemical Biology, Harvard University, Cambridge, MA, 02138, USA}

\title{Unbiasing Fermionic Auxiliary-Field Quantum Monte Carlo with \\Matrix Product State Trial Wavefunctions}

\date{\today}% It is always \today, today,
             %  but any date may be explicitly specified

\begin{abstract}
In this work, we report, for the first time, an implementation of fermionic auxiliary-field quantum Monte Carlo (AFQMC) using matrix product state (MPS) trial wavefunctions, dubbed MPS-AFQMC. 
Calculating overlaps between an MPS trial and arbitrary Slater determinants up to a multiplicative error, a crucial subroutine in MPS-AFQMC, is proven to be \#P-hard.
Nonetheless, we tested several promising heuristics in successfully improving fermionic phaseless AFQMC energies.
We also proposed a way to evaluate local energy and force bias evaluations free of matrix product operators. This allows for larger basis set calculations without significant overhead. We showcase the utility of our approach on one- and two-dimensional hydrogen lattices, even when the MPS trial itself struggles to obtain high accuracy. Our work offers a new set of tools that can solve currently challenging electronic structure problems with future improvements.
\end{abstract}
\maketitle

\section{Introduction}
One of the long-standing goals in quantum chemistry and condensed matter physics is to elucidate the $\textit{ab initio}$ electronic structure of interacting fermions~\cite{helgaker_molecular_2000,friesner_ab_2005}.
Although the dimension of the Hilbert space grows exponentially with system size, $N$, most studies rely on numerical techniques that exploit properties of chemical problems, such as weak correlation~\cite{jones2015density,koch2015chemist}, low entanglement and locality~\cite{lee_evaluating_2023}, \textit{etc}.
While methods such as density functional theory could often describe weakly correlated systems well, the computational challenge remains for strongly correlated systems with non-negligible dynamic correlation, including transition metal oxides, iron-sulfur clusters, \textit{etc.}~\cite{dagotto_complexity_2005,li_electronic_2019,sayfutyarova_constructing_2019,goings_reliably_2022,cui2022systematic}.
In the challenging regimes, the density matrix renormalization group (DMRG)~\cite{verstraete_density_2023,baiardi_density_2020,schollwock_density-matrix_2011,chan_density_2011} and quantum Monte Carlo (QMC)~\cite{austin_quantum_2012,becca_quantum_2017} methods emerge as particularly powerful many-body approaches with often reduced scaling for target accuracy. This paper explores the marriage of these two methods in fermionic simulations and offers insights from complexity theory and numerical results.

DMRG, a prominent tensor network algorithm, is extensively utilized for studying low-dimensional lattice systems in condensed matter physics~\cite{white1992density,*white1993density} and operates as an approximate full configuration interaction solver for the active space in quantum chemistry~\cite{olivares2015ab}. 
It leverages matrix product states (MPS) as the underlying wavefunction ansatz and refines the wavefunction through variational optimization. 
Representing arbitrary wavefunctions typically requires MPS's bond dimension $D$ to grow exponentially with system size (i.e., $D\sim\exp({N})$.) However, the innate locality in many physical systems limits the entanglement in the ground state, which MPS can capture with a constant $D$ in one-dimension~\cite{PhysRevLett.100.070502,*PhysRevLett.90.227902,*RevModPhys.82.277,*orus2014practical}.
The accuracy of DMRG systematically improves with higher values of $D$, and its cost scales as $\mathcal{O}(ND^3)$. 
In terms of {\it ab initio} fermionic simulations, DMRG solvers have been limited to a small active space and have been used to handle strong correlation. In contrast, the remaining correlation outside active space is left out~\cite{hachmann2007radical,*marti2008density,*ghosh2008orbital,*kurashige2009high,*sharma_spin-adapted_2012,*ma2013assessment}.
Currently, DMRG solvers have become the {\it de facto} solvers for active space with emerging strong correlation.
Addressing the dynamic correlation treatment on top of DMRG requires post-DMRG methods, which tend to add layers of complexity and elevate computational demands~\cite{yanai2010multireference,*kurashige2011second,*wouters_thouless_2013,*sharma2014low,*sharma2014communication,*cheng_post-density_2022,*larsson2022matrix}, as they require higher reduced density matrices (RDMs) beyond two-body RDMs.

Projector quantum Monte Carlo (QMC) stochastically performs imaginary-time evolution to sample the ground state and efficiently evaluates a statistical estimate for the ground state energy. Among different types of QMC, auxiliary-field quantum Monte Carlo (AFQMC)~\cite{zhang_constrained_1995,Zhang2003Apr}, has emerged as a particularly accurate and efficient approach for solving {\it ab initio} problems.
AFQMC has gained extensive use in quantum chemistry~\cite{motta_ab_2018,lee_twenty_2022,Lee2020Jul,lee2021phaseless,lee2021constrained,mahajan_selected_2022,mahajan2023response}. In recent years, it is also seen as a powerful tool for quantum computation of chemical systems in the noisy intermediate-scale quantum era~\cite{huggins_unbiasing_2022,wan_matchgate_2023,amsler_quantum-enhanced_2023,kiser_classical_2023,Blunt2024Oct,Kiser2024Aug,Jiang2024Jul}.
The constraint-free, exact AFQMC, known as free-projection AFQMC, inherently faces an exponential-scaling challenge in sample complexity due to the notorious fermionic phase/sign problem. To circumvent this, a trial wavefunction is introduced to constrain the phase/sign of the QMC walkers, referred to as the phaseless approximation~\cite{Zhang2003Apr}.

Phaseless AFQMC (ph-AFQMC) effectively controls the phase problem and is polynomial scaling in sample complexity at the expense of introducing biases.
Advancements in AFQMC research often center around the quest for improved trial wavefunctions~\cite{huggins_unbiasing_2022,mahajan_taming_2021,vitali_calculating_2019,chang_auxiliary-field-based_2016,kanno_quantum_2023,Mahajan2024Oct}. The accuracy and scalability of ph-AFQMC hinge on the selection of an appropriate trial wavefunction. Traditionally, the unrestricted Hartree-Fock (UHF) wavefunction has been the most common choice of trial due to the good tradeoff between accuracy and cost.  ph-AFQMC with UHF trial has demonstrated success in solving complex systems such as Hubbard model~\cite{zheng_stripe_2017} and \textit{ab initio} systems~\cite{lee_twenty_2022}.
While single determinant trial wavefunctions are favored for their scalability, their accuracy is inherently limited, as shown in various applications~\cite{al-saidi_bond_2007, simons_collaboration_on_the_many-electron_problem_direct_2020,simons_collaboration_on_the_many-electron_problem_towards_2017,mahajan_taming_2021}.
Recent research on selected configuration interaction trial wavefunctions has investigated the use of a large number of Slater determinants, leading to significant enhancements in accuracy~\cite{mahajan_selected_2022}, even when the trial wavefunction does not capture any correlation outside active space~\cite{lee_twenty_2022}. 
\add{Although alternative trial wavefunctions may offer 
benefits in specific cases, their ability to systematically approach 
true ground state accuracy remains an open question
~\cite{PhysRevB.94.235119,vitali_calculating_2019,chang_auxiliary-field-based_2016,simons_collaboration_on_the_many-electron_problem_towards_2017}.
}   

It is an enticing prospect that, for systems too large to be accurately handled by DMRG alone due to the requirement of large bond dimension and the large correlation space left out from DMRG, ph-AFQMC can account for the remaining electron correlations (i.e., dynamic correlation).
Such a goal can be achieved by exploiting MPS as a trial wavefunction for ph-AFQMC.
While the combination of tensor network states and projector QMC has demonstrated success in spin model systems~\cite{wouters_projector_2014, qin_combination_2020}, 
such combination has not been shown in {\it ab initio} electronic structure problems.
While some flavors of projector QMC can relatively straightforwardly combine the two methods, we prove by using complexity theoretic tools that such a combination for ph-AFQMC is \#P-hard in this work.

One may imagine performing the imaginary time evolution directly with MPS via the time-dependent variational principle (TDVP)~\cite{paeckel_time-evolution_2019,haegeman_time-dependent_2011,haegeman_unifying_2016,ren2022time}, as opposed to relying on QMC to perform the imaginary time evolution. This approach is usually used for model systems and is impractical when one needs to study a large basis set of \textit{ab initio} systems where the construction of matrix product operator quickly becomes intractable~\cite{chan_matrix_2016,ren_general_2020}.
We also note that MPS was also used as a trial wavefunction in quantum-classical hybrid AFQMC~\cite{amsler_quantum-enhanced_2023} by expanding the MPS as a linear combination of multiple Slater determinants. This post-processing step for MPS trials is not scalable due to the exponential growth in the number of Slater determinants, which is not feasible beyond 20 orbitals.
The objective of this paper is to explore scalable heuristic approaches to employ the MPS trial obtained via DMRG in ph-AFQMC, which aims to leverage the strengths of both methods. The resulting method, dubbed MPS-AFQMC, will benefit from the strength of DMRG in accurately capturing static correlations within active spaces and the power of AFQMC in efficiently capturing electron correlation across the entire set of orbitals. A successful implementation of such a method will allow for direct, quantitative comparisons with experiments towards the basis set limit. 
\add{For higher-dimensional systems where DMRG may struggle to describe the ground state efficiently, one could extend our approach to the PEPS-AFQMC method, where a projected entangled-pair state (PEPS) is the trial wavefunction. This idea was recently explored in spin systems~\cite{qin_combination_2020}.} 

The structure of this paper is as follows: 
Section~\ref{sec:theory} presents theoretical and computational details developed and used in this work.
Section~\ref{sec:afqmc} provides a brief overview of AFQMC. Section~\ref{sec:MPS} then summarizes the MPS/MPO representations of fermionic many-body states and operators. 
Section~\ref{sec:mps_afqmc} provides a detailed introduction to the MPS-AFQMC algorithm, including various strategies for calculating the overlap between MPS trial and walker wavefunctions and the computation of force bias and local energy.
Section~\ref{sec:virt} describes the MPO-free technique in MPS-AFQMC to capture dynamic correlation beyond the active space by employing the MPS solution in the active space as the trial. 
Section~\ref{sec:BQP} demonstrates the \#P-hardness of computing the overlap between the MPS trial and the walker's SD.
Section~\ref{sec:results} presents numerical results and discussions, 
and Section~\ref{sec:conclusion} concludes the paper with suggestions for future directions.

\begin{figure*}
    \centering
    \includegraphics[width=\textwidth]{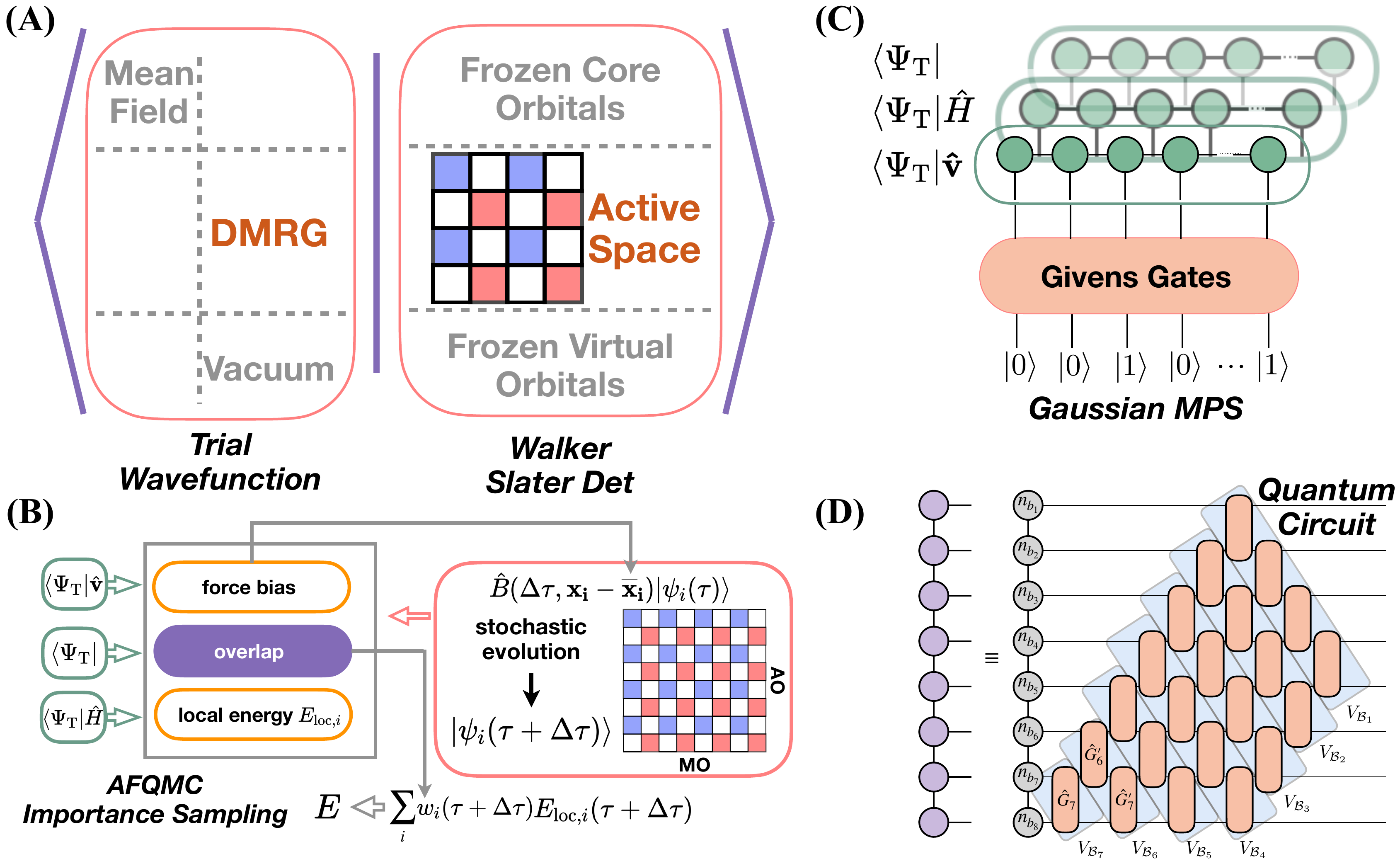}
    \caption{
    \add{\textbf{Overview of MPS-AFQMC with the walker in the Gaussian MPS form.}
    (A) Evaluating the overlap between the walker and trial
    whose active space component is an MPS wavefunction.
    (B) The workflow of a general AFQMC algorithm
    where the computation of force bias, overlap and local energy 
    are performed, which is described in (C) when a MPS trial is used. 
    In (C), without loss of generality, we assume both walker and MPS trial are 
    in the active space. 
    (D) describes the process of construct the Gaussian MPS for 
    the walker Slater determinant.} 
    }
    \label{fig:wflow}
\end{figure*} 

\section{Theory}\label{sec:theory}
\subsection{Brief review of ph-AFQMC}\label{sec:afqmc}
The ph-AFQMC method aims to perform imaginary time evolution:
\begin{equation}
    \left|\Psi_0\right\rangle \propto \lim _{\tau \rightarrow \infty} \exp (-\tau \hat{H})\left|\Phi_0\right\rangle=\lim _{n\rightarrow \infty}(\exp 
 (-\Delta \tau \hat{H}))^n|\Phi_0\rangle\label{eq:afqmc}
\end{equation}
where $\left|\Psi_0\right\rangle$ is the exact ground state wavefunction, and $\left|\Phi_0\right\rangle$ is an initial state that satisfying $\langle\Phi_0|\Psi_0\rangle\neq 0$. The repeated application of short-time propagators implements the limit of $\tau=n\Delta\tau\rightarrow\infty$. With spin-orbital notation, we can express the \textit{ab initio} electronic Hamiltonian in second quantization as
\begin{equation}
\hat{H}=\sum_{p,q=1}^N h_{p q} \hat{a}_p^{\dagger} \hat{a}_q+\frac{1}{2} \sum_{p,q,r,s=1}^N g_{psqr} \hat{a}_p^{\dagger} \hat{a}_q^{\dagger} \hat{a}_r \hat{a}_s\label{eq:qcham}
\end{equation}
where the two-electron repulsion integral (ERI) can be factorized with the Cholesky decomposition $g_{psqr}=(ps|qr)=\sum_{\gamma=1}^{N_\gamma} L_{ps}^\gamma L_{qr}^{\gamma}$. 
Using the Hubbard-Stratonovich transformation~\cite{hubbard1959calculation,*Stratonovich}, we can write the short-time propagator as
\begin{equation}
    e^{-\Delta \tau \hat{H}} = \int \mathrm{d} \mathbf{x}\: p(\mathbf{x}) \hat{B}(\Delta \tau, \mathbf{x})+\mathcal{O}\left(\Delta \tau^2\right)
\end{equation}
where $p(\mathbf x)$ is a Gaussian distribution and $\hat{B}(\mathbf{x})$ is a one-body propagator that is coupled to the auxiliary field $\mathbf x$.
According to Thouless' theorem~\cite{thouless1960stability,*thouless1961vibrational}, applying $\hat{B}(\mathbf{x})$ to a single determinant results in another (non-orthogonal) single determinant. 
Utilizing this property, in ph-AFQMC, we write the global wavefunction at an imaginary time $\tau$ as a weighted sum over walkers (i.e., single determinants),
\begin{equation}
|\Psi(\tau)\rangle=\sum_i^{N_\textrm{w}} w_i(\tau) \frac{\left|\psi_i(\tau)\right\rangle}{\left\langle\Psi_\textrm{T}|\psi_i(\tau)\right\rangle}.
\end{equation}
where $w_i(\tau)$ is the weight for the $i$-th walker at time $\tau$, $|\psi_i(\tau)\rangle$ is the $i$-th walker wavefunction and $|\Psi_\textrm{T}\rangle$ is the trial wavefunction introduced for importance sampling.

In ph-AFQMC, the walker wavefunction, $\left|\psi_i(\tau)\right\rangle$, is updated upon applying the effective one-body propagator, and the weight is updated following the phaseless approximation:
\begin{equation}
\left|\psi_i(\tau+\Delta \tau)\right\rangle  =\hat{B}\left(\Delta \tau, \mathbf{x}_i-\overline{\mathbf{x}}_i\right)\left|\psi_i(\tau)\right\rangle\label{eq: propagate}
\end{equation}
\begin{equation}
\label{eq:phaseless}
w_i(\tau+\Delta \tau)  =I_\text{ph}\left(\mathbf{x}_i, \overline{\mathbf{x}}_i, \tau, \Delta \tau\right) \times w_i(\tau),
\end{equation}
where $\overline{\mathbf{x}}_i$ is the force bias that dynamically shifts Gaussian probability distribution, and the phaseless importance function, $I_\text{ph}$, is given as
\begin{equation}
    I_\text{ph}\left(\mathbf{x}_i, \overline{\mathbf{x}}_i, \tau, \Delta \tau\right)=\left|I\left(\mathbf{x}_i, \overline{\mathbf{x}}_i, \tau, \Delta \tau\right)\right| \times \max \left(0, \cos \left(\theta_i(\tau)\right)\right)
\end{equation}
with the hybrid importance function being
\begin{equation}
    I\left(\mathbf{x}_i, \overline{\mathbf{x}}_i, \tau, \Delta \tau\right)=S_i(\tau, \Delta \tau) e^{\mathbf{x}_i \cdot \overline{\mathbf{x}}_i-\overline{\mathbf{x}}_i \cdot \overline{\mathbf{x}}_i / 2}.
\end{equation}
Here, the overlap ratio of the $i$-th walker is
\begin{equation}
    S_i(\tau, \Delta \tau)=\frac{\langle\Psi_{\textrm{T}}|\hat{B}\left(\Delta \tau, \mathbf{x}_i-\overline{\mathbf{x}}_i\right)| \psi_i(\tau)\rangle}{\left\langle\Psi_{\textrm{T}}|\psi_i(\tau)\right\rangle}\label{eq:ovlp_ratio}
\end{equation}
whose phase is
\begin{equation}
    \theta_i(\tau)=\arg \left(S_i(\tau, \Delta \tau)\right).
\end{equation}
The weight update in \cref{eq:phaseless} ensures the positivity of weights of all walkers, but
it introduces a bias, which can be eliminated if the trial wavefunction is exact.
This cosine projection can be viewed as constraining open-ended random walks with 
a boundary condition set by a trial wavefunction.
If the trial wavefunction is exact, the boundary condition is also exact and thereby ph-AFQMC recovers the exact ground state energy.

Lastly, we calculate the ph-AFQMC energy as a function of $\tau$ by
\begin{equation}
E = \frac{\sum_iw_i E_{L,i}}{\sum_i w_i}
\end{equation}
where the local energy of the $i$-th walker is
\begin{equation}
E_{L,i} = 
\frac{\langle \Psi_\text{T}|\hat{H}|\psi_i(\tau)\rangle}{\langle \Psi_\text{T}|\psi_i(\tau)\rangle}
\end{equation}
Other details, such as the definition of the force bias and the one-body propagator $\hat{B}(\Delta \tau, \mathbf{x})$, are presented in \cref{app:afqmc}.

%%%%%%%%%%%%%
\subsection{Brief review of MPS}\label{sec:MPS}
Here, we briefly review some formalisms for MPS that will be useful in discussing MPS-AFQMC.
A generic \textit{ab initio} AFQMC trial can be written as a linear combination of determinants (i.e., product states)
\begin{equation}
    |\Psi_\textrm{T}\rangle = \sum_{\{n\}} C_{n_1 n_2\cdots n_N}|n_1 n_2\cdots n_N\rangle\label{eq:msd}
\end{equation}
where $|n_1 n_2\cdots n_N\rangle$ ($n_i\in\{0,1\}$) is an occupancy basis state of $N$ spin-orbitals $\{\phi_1\phi_2\cdots\phi_N\}$ and the coefficients $C$ is a vector of dimension of $d^N$ with $d = 2$.
The MPS representation is written as
\begin{equation}
\label{eq:mps-generic}
|\Psi_\textrm{T}\rangle  = \sum_{\{a\},\{n\}}
 A^{n_1}_{a_1}\cdots A^{n_i}_{a_{i-1}a_i} \cdots
       A^{n_N}_{a_{N-1}}  |n_1\cdots n_i\cdots n_N\rangle
\end{equation}
\begin{equation}
% \label{eq:mps-diagram}
   % \ket{\Psi_{\textrm{T}}} 
   = 
   \begin{diagram}
\MpsCircle{0}{0}{$A$}{ur};
   \MpsCircle{2}{0}{$A$}{url};
   \draw (4, 0) node {$\cdots$};
   \MpsCircle{6}{0}{$A$}{url};
   \MpsCircle{8}{0}{$A$}{lu};
   
   \draw (0,1.5) node {$n_1$};
   \draw (2,1.5) node {$n_2$};
   \draw (8,1.5) node {$n_N$};
   \draw (6,1.5) node {$n_{N-1}$};
   
   \draw (1, 0.5) node {$a_1$};
   \draw (3, 0.5) node {$a_2$};
   \draw (5, 0.5) node {$a_{N-2}$};    	    
   \draw (7, 0.5) node {$a_{N-1}$};    	    
   
   \draw (0,0) node (X){$ $};
   \end{diagram}\nonumber
\end{equation}
where the tensor $A^{n_i}_{a_{i-1}a_i}$ is a three-dimensional entity with dimensions ($D_{i-1}, d, D_i$). Without compression, $D_{i-1}$ equals $d^{i-1}$ and $D_i$ equals $d^i$.
In MPS-AFQMC, we variationally optimize an MPS with a preset bond dimension, $D_{\textrm{T}}$, and use it as an AFQMC trial.
The larger $D_\textrm{T}$ value, the more closely the trial wavefunction approximates the ground state wavefunction.
The matrix product operator (MPO) of the Hamiltonian in Eq.~\eqref{eq:qcham},
\begin{align}
\label{eq:mpo}
   \hat{H} =  \sum_{\{w\},\{n\},\{n'\}} &
    W^{n'_1, n_1}_{w_1} W^{n'_2, n_2}_{w_1w_2} \cdots
                   W^{n'_N, n_N}_{w_{N-1}} \\ \nonumber &|n'_1 n'_2\cdots n'_N\rangle
                   \langle n_N n_{N-1} \cdots n_1|
\end{align}
\begin{equation}\label{eq:mpo-generic}
= 
   \begin{diagram}
   \draw (0,1.5) node {$n'_1$};
   \MpsRectGreen{0}{0}{$W$}{udr};
   \draw (1, 0.75) node {$w_1$};
   \draw (0,-1.5) node {$n_1$};
   
   \draw (2,1.5) node {$n'_2$};
   \MpsRectGreen{2}{0}{$W$}{udrl};
   \draw (3, 0.75) node {$w_2$};
   \draw (2,-1.5) node {$n_2$};
   
   \draw (4, 0) node {$\cdots$};
  
   \draw (4.7, 0.75) node {$w_{N-2}$};
   \draw (6,1.5) node {$n'_{N-1}$};
   \MpsRectGreen{6}{0}{$W$}{ldur}; 
   \draw (6,-1.5) node {$n_{N-1}$};
   
   \draw (7, 0.75) node {$w_{N-1}$};
   \draw (8,1.5) node {$n'_N$};
   \MpsRectGreen{8}{0}{$W$}{dlu}; 
   \draw (8,-1.5) node {$n_N$};
   
   \draw (0,0) node (X){$ $};
   \end{diagram} \nonumber
\end{equation}
can be expressed in the spin-orbital basis using the Jordan-Wiger transformation~\cite{jordan_about_1928,li_spin-projected_2017,ren_general_2020}.
While we have an MPO-free formulation of our approach in Sec.~\ref{sec:virt}, we found that using MPOs can greatly accelerate the simulations for minimal basis set examples.
%%%%%%%%%%%%%
\subsection{MPS-AFQMC}\label{sec:mps_afqmc}
In this section, we introduce the MPS-AFQMC algorithm, as illustrated in Fig.~\ref{fig:wflow}(A-C), which integrates the MPS trial into ph-AFQMC. 
Here, we apply the propagator to walkers (as one-body rotation) so that no approximation is made in this step.
To update the weights, one must evaluate the overlap between the MPS trial and walkers and the force bias.
Furthermore, local energy must be calculated to estimate the final ph-AFQMC energy.
Evaluating these quantities is straightforward if walkers were also in the MPS format.
Therefore, the central subroutine in MPS-AFQMC is the conversion of walker wavefunctions (i.e., Slater determinants in an arbitrary basis) into an MPS.
This step can be done only approximately in practice, and therefore, MPS-AFQMC will inherently have errors in overlap, force bias, and local energy evaluation.
Below, we will discuss the heuristics that we examined to perform this step.

\subsubsection{Slater determinants in an arbitrary basis}\label{sec:sd}
In ph-AFQMC, the walker wavefunction is a Slater determinant in an arbitrary basis, 
\begin{equation}
    |\psi\rangle  =\prod_{i=1}^{N_{\textrm{e}}/2} \hat{c}_{u_i \downarrow}^{\dagger}\hat{c}_{u_{i \uparrow}}^{\dagger}|0\rangle
\end{equation}
where we assumed an equal number of spin up and spin down fermions ($N^\uparrow=N^\downarrow=N_{\textrm{e}}/2$) for simplicity in this work, and
the ``walker orbital'' creation operator $\hat{c}_{{i\sigma}}^\dagger=\sum_{p=1}^N U_{pi}^\sigma \hat{a}_{p}^\dagger$ defined through a unitary transformation, $U$, of the initial orbitals, $\{\hat{a}_p^\dagger\}$. 
The matrix $U$ is complex-valued in the context of AFQMC, as indicated by Eqs.(\ref{eq:Bprop},~\ref{eq: propagate}). By ordering the walker orbitals as an alternating spin-orbital series, $\{\uparrow\downarrow\cdots \uparrow\downarrow\cdots\uparrow\downarrow\}$, the parameterized matrix $U$ exhibits the checkerboard pattern, 
\begin{equation}
    U_{p, i}=
    \begin{cases}
        U_{p/2,i/2}^{\downarrow}& \text{if } p, i \text{ are even}\\
        U_{(p+1)/2,(i+1)/2}^{\uparrow}& \text{if } p, i \text{ are odd}\\
        0 & \text{otherwise}
    \end{cases}\label{eq:checkerboard}
\end{equation}
Using this checkerboard ordering, we write
\begin{equation}
    |\psi\rangle =\prod_{i=1}^{N_{\textrm{e}}/2} \sum_{\substack{p \text{ even} \\ q \text{ odd}}}^N U_{pi}U_{qi}\hat{a}_p^{\dagger}\hat{a}_q^{\dagger}|0\rangle
\label{eq:mpo-mps}
\end{equation}
where $\hat{a}_p^{\dagger}$, $\hat{a}_q^{\dagger}$ are consistent with those defined in \cref{eq:qcham}.
Our goal is to form an MPS representation of \cref{eq:mpo-mps} in the original orbital basis $\{\hat{a}_q^\dagger\}$ such that subsequent operations with an MPS trial (also in the original orbital basis) become straightforward.

\subsubsection{Slater determinant to Gaussian MPS}\label{sec:sd2mps_White}  
A direct formulation of creation operators into MPOs~\cite{wu_tensor_2020,jin_efficient_2020} in \cref{eq:mpo-mps} runs into
steep computational cost due to the long-rangedness of the operators, as examined by Fig.~\ref{fig:mpomps}.
The approach presented by Fishman and White~\cite{fishman_compression_2015} circumvents this by decomposing products of creation operators into a sequence of short-ranged operators with only nearest neighbor couplings (i.e., 2-qubit Givens rotation gates).
Here, we summarize the relevant part of the algorithm and modifications for our use in ph-AFQMC.

The correlation matrix \add{(or one-body reduced density matrix)} of a SD $|\psi\rangle$ is $\Lambda=U U^{\dagger}$,
\begin{equation}
    \Lambda_{pq} = \langle \phi_p | \phi_q \rangle = \sum_{i}U_{p,i} (U_{q, i})^* \label{eq:sd}
\end{equation}
The correlation matrix $\Lambda$ is a $N\times N$ matrix that takes the checkerboard pattern with alternating spin ordering in Eq.~\eqref{eq:checkerboard}.
The algorithm starts with compressing a given SD as a Gaussian MPS (GMPS).
We start from site 1 and move to the next site until we reach the last site. At site 1, 
we first \add{diagonalize} a \( B \times B \) ($2\le B\le N$) upper left subblock of \( \Lambda \), denoted as \( \Lambda_{\mathcal{B}_1} \). $B$ sets the correlation length within GMPS, and $B=N$ recovers the full matrix diagonalization. The eigenvalues of this subblock, \( n_b \), lie between 0 and 1. In the case of full matrix diagonalization of \( \Lambda \), there will be \( N^\uparrow + N^\downarrow \) eigenvalues equal to 1, with the remaining \( N - (N^\uparrow + N^\downarrow) \) eigenvalues being 0. As we increase \( B \), at least one eigenvalue will progressively approach either 1 or 0.
An adaptive threshold value  \( \epsilon_{\textrm{ad}} \) is set to control the block size, which applies to determine the proximity of an eigenvalue to 1 or 0. 
Consequently, a smaller \( \epsilon_{\textrm{ad}} \) results in a larger block size \( B \). 
The smaller this threshold, the more precise the compression becomes. Setting \( \epsilon_{\textrm{ad}} \) to 0 allows for an exact conversion of SD to a GMPS.

Upon diagonalization of the correlation matrix, we obtain a set of 2-qubit Givens rotation gates that contribute to the final GMPS. Once the specified threshold is met, the eigenvector \( \vec{v}^T = (v_1, v_2, \cdots, v_B) \) is chosen, based on its corresponding eigenvalue \( n_{b_1} \) being nearest to either 0 or 1. This selection is facilitated through the application of a transformation matrix \( \vec{v}^T V_{\mathcal{B}_1} = (1, 0, \cdots, 0) \), where \( V_{\mathcal{B}_1} \) denotes a series of local transformations that are associated with site 1:
\begin{equation}
    V_{\mathcal{B}_1} = V_{B-1}(\theta_{B-1})V_{B-2}(\theta_{B-2})\cdots V_1(\theta_1)\label{eq:block_Givens}
\end{equation}
\begin{equation}
    V_{i}(\theta_i)=\left(\begin{array}{llllll}
1 & & & & &\\
& 1 & & & &\\
& & (\cos{\theta_i})^* & -\sin{\theta_i} & & \\ 
& & (\sin{\theta_i})^* & \cos{\theta_i} & & \\
& & & & 1 & \\
& & & & & 1
\end{array}\right)\label{eq:Givens}
\end{equation}
which is an identity matrix except from the $i-$th and $(i+1)-$th rows and columns, and $\theta_i=\tan^{-1}{(v_{i+1}/v_{i})}$.
The correlation matrix is then transformed into
\begin{equation}
    V_{\mathcal{B}_1}^{\textrm{T}} \Lambda V_{\mathcal{B}_1}^* =\begin{pmatrix}
  n_{b_1}
  &   \\
   &
    \Lambda'
\end{pmatrix}
\end{equation}
with the top left entry becoming $n_{b_1}$, and the rest of the elements of the first row and the first column should be nearly zero as long as a tight enough $\epsilon_{\textrm{ad}}$ was chosen.
Then we move to site 2 and pick the subblock $\Lambda_{\mathcal{B}_2}$ with the block size of $B$ ($2\le B\le N-1$) which is the top left entry of $(V_{\mathcal{B}_1}^{\textrm{T}} \Lambda V_{\mathcal{B}_1}^*)$ starting from $(V_{\mathcal{B}_1}^{\textrm{T}} \Lambda V_{\mathcal{B}_1}^*)_{2,2}$.
After reaching the threshold, we have an eigenvalue $n_{b_2}$ of subblock $\mathcal{B}_2$ that is close to 0 or 1. Another transformation matrix $V_{\mathcal{B}_2}$ leads to
\begin{equation}
    V_{\mathcal{B}_2}^{\textrm{T}}V_{\mathcal{B}_1}^{\textrm{T}} \Lambda V_{\mathcal{B}_1}^*V_{\mathcal{B}_2}^* =\begin{pmatrix}
  n_{b_1}
  &  &  \\
  & n_{b_2} & \\
   & &
    \Lambda''.
\end{pmatrix}
\end{equation}
By repeating this procedure for the rest of the sites, the correlation matrix is finally transformed into a diagonal matrix
\begin{equation}
    V_{\mathcal{B}_{N-1}}^{\textrm{T}}\cdots V_{\mathcal{B}_1}^{\textrm{T}} \Lambda V_{\mathcal{B}_1}^*\cdots V_{\mathcal{B}_{N-1}}^* =\begin{pmatrix}
  n_{b_1} &  &  &\\
  & n_{b_2} & & \\
   & & & &\\
   & & & & n_{b_N}
\end{pmatrix}=U_0 U_0^\dagger.
\end{equation}
This diagonal matrix corresponds to a correlation matrix of a 
product state $|\psi_0\rangle = \ket{n_{b_1}n_{b_2}\cdots n_{b_N}}$ 
occupying the \add{natural orbitals $\{\tilde{\phi}_1\tilde{\phi}_2\cdots\tilde{\phi}_N\}$} 
where $n_{b}\in [0, 1]$ with 1 representing occupied and 0 representing 
unoccupied, and $U_0$ is the parameterized matrix for the SD of $|\psi_0\rangle$,
To go back to the original single-particle basis, we can use 
\begin{equation}
    U=V_{\mathcal{B}_1}\cdots V_{\mathcal{B}_{N-1}}U_0
\end{equation}
This sequence of 2-by-2 rotations defines the GMPS state corresponding to the original SD state.

Using this, we can represent walker states as
\begin{equation}
    |\psi\rangle=\hat{V}_{\mathcal{B}_1}\cdots \hat{V}_{\mathcal{B}_{N-1}}|\psi_0\rangle\label{eq:full_sequence}
\end{equation}
where $|\psi_0\rangle$ can be represented as an MPS whose bond dimension is 1,
\begin{equation}
  |\psi_0\rangle = 
   \begin{diagram_08}
\MpsCircle{0}{0}{$n_{b_1}$}{ur};
   \MpsCircle{2}{0}{$n_{b_2}$}{url};
   \draw (4, 0) node {$\cdots$};
   \MpsCircle{6}{0}{$n_{b_{N}}$}{lu};
      
   \draw (0,0) node (X){$ $};
   \end{diagram_08}
\end{equation}
The MPS representation of $|\psi\rangle$ can be obtained by applying a sequence of Givens rotations, 
as shown in Fig.~\ref{fig:wflow}(D).
This gate application to the initial state converts the GMPS into an MPS. 
Each $V$ in Eq.~\eqref{eq:Givens} becomes a two-qubit gate that is used to represent the Givens rotation in \cref{eq:block_Givens,eq:Givens}) 
\begin{equation}
    \hat{G}_i(\theta_i)=\left(\begin{array}{cccc}
1 & 0 & 0 & 0 \\
0 & (\cos{\theta_i})^* & -\sin{\theta_i} & 0 \\
0 & (\sin{\theta_i})^* & \cos{\theta_i} & 0 \\
0 & 0 & 0 & 1 
\end{array}\right)
\end{equation}
The two-qubit gate is written under the site occupation basis of $\{\ket{00},\ket{01},\ket{10},\ket{11}\}$ for the $i$-th and $(i+1)$-th spin orbitals where 0 refers to unoccupied spin and 1 refers to occupied spin. 

Successive applications of Givens rotations result in a growing bond dimension of the MPS. Consequently, it becomes necessary to approximate the corresponding quantum circuits by employing singular value decomposition (SVD) compression with a fixed bond dimension.
\begin{equation}
\begin{diagram_05}
\MpsWalkerCircle{0}{0}{$ $}{ulr};
\MpsWalkerCircle{1.5}{0}{$ $}{ulr};
\draw[->, thick] (0, 2.8) -- (0, 2.0);
\draw[->, thick] (1.5, 2.8) -- (1.5, 2);
\draw[<-, thick] (-0.5, 0.) -- (-1, 0.);
\draw[<-, thick] (2, 0.) -- (2.5, 0.);

\GivensGate{-0.5}{1}{$ $}{}
\draw (-1.5, 0) node {$n_L$};
\draw (3, 0) node {$n_R$};
\draw (0, 3) node {$n_1$};
\draw (1.5, 3) node {$n_2$};

\draw (4, 1.2) node {$\xrightarrow[]{\text{compress}}$};
\MpsWalkerCirclePattern{6.5}{0.5}{$ $}{ulr};
\draw (7.2, 1) node {$D_\textrm{w}$};
\MpsWalkerCirclePattern{8}{0.5}{$ $}{ulr};

\end{diagram_05}\label{eq:svd}
\end{equation}
For approximate contractions, we introduce a parameter, called walker bond dimension ($D_\textrm{w}$), that caps the bond dimension
during the Givens rotation circuit evaluation as shown in Eq.~\eqref{eq:svd}. 
The tradeoff between accuracy and cutoff in this strategy is controlled by $\epsilon_\text{ad}$ and $D_\text{w}$.
\add{Additional strategies and details are provided in  Appendix~\ref{app:sd2mps_others}.} 
\subsubsection{Overlap, force bias and local energy}\label{sec:loce}
Once the walker wavefunction is converted to an MPS format, the overlap can be computed by contracting it with the trial, as shown in Eq.~\eqref{eq:calc_ovlp}.
\begin{equation}\label{eq:calc_ovlp}
\braket{\Psi_{\textrm{T}}|\psi} = 
    \begin{diagram_05}

    \MpsCircle{0}{2}{}{dr};
    \MpsWalkerCircle{0}{0}{}{ur};
    \MpsCircle{2}{2}{}{ldr};
    \MpsWalkerCircle{2}{0}{}{url};
    
	\MpsCircle{6}{2}{}{dlr};
    \MpsWalkerCircle{6}{0}{}{url};
    
    \draw (4, 0) node {$\cdots$};
    \draw (4, 2) node {$\cdots$};
    
    \MpsCircle{8}{2}{}{ld}; 
    \MpsWalkerCircle{8}{0}{}{lu}; 
    
    \draw (0,1) node (X){$ $};
    \end{diagram_05}
\end{equation}

The computation of force bias boils down to performing the following evaluation:
\begin{equation}
\bar{x}_\gamma = i\sqrt{\Delta t}\sum_{pq}L_{pq}^\gamma\frac{\left\langle\Psi_{\textrm{T}}\left| \hat{a}_p^\dagger \hat{a}_q\right| \psi\right\rangle}{\left\langle\Psi_{\textrm{T}} |\psi\right\rangle}\label{eq:vbias}
\end{equation}
There are two ways to compute it.
Most AFQMC codes compute the one-body Green's function first~\cite{motta_ab_2018}. Then the force bias can be computed by simply contracting $G_{pq}$ and $L_{pq}^\gamma$.
However, this approach is expensive in our context because it requires contraction of the MPS trial with MPO $\hat{a}_p^\dagger\hat{a}_q$ and walker MPS for each $p$ and $q$.
Instead, we loop over all auxiliary fields since the number of fields is usually small, and for each field $\alpha$, we express $\hat{v}_\alpha$ as the form of MPO,
\begin{equation}
\label{eq:mps-mpo-expecatation}
    \braket{\Psi_{\textrm{T}} | \hat v_\alpha | \psi} = 
    \begin{diagram_05}
    % MPS part
    \MpsCircle{0}{2}{}{dr};
    
    \MpsCircle{2}{2}{}{drl};
    
    \draw (4, 2) node {$\cdots$};
   
    \MpsCircle{6}{2}{}{ldr}; 
    
    \MpsCircle{8}{2}{}{ld}; 
    
    % MPO part
    \MpsRect{0}{0}{}{udr};
    
    \MpsRect{2}{0}{}{udrl};
    
    \draw (4, 0) node {$\cdots$};
   
    \MpsRect{6}{0}{}{ldur}; 
    
    \MpsRect{8}{0}{}{dlu}; 
    
    \draw (0,0) node (X){$ $};
    
    % MPS part
    \MpsWalkerCircle{0}{-2}{}{ur};
    
    \MpsWalkerCircle{2}{-2}{}{url};
    
    \draw (4, -2) node {$\cdots$};
   
    \MpsWalkerCircle{6}{-2}{}{lur}; 
    
    \MpsWalkerCircle{8}{-2}{}{lu}; 
    \end{diagram_05}  
\end{equation}
The computation of local energy is completely analogous to force bias. In practice, we precompute the MPS of $\langle \Psi_{\textrm{T}} | \hat{v}_\alpha$ and $\langle \Psi_{\textrm{T}} |\hat{H}$ and compress them to maximize the computational efficiency.
This amounts to the ``half-rotation'' used in the AFQMC literature ~\cite{Lee2020Jul}.
Shared memory allocation is utilized across all AFQMC child processes if these half-rotated MPSs require substantial memory.
The bond dimension of the Hamiltonian MPO can also be made smaller by screening the integrals while preserving the accuracy.

We denote the bond dimension of the half-rotated MPS for $\langle \Psi_{\textrm{T}} | \hat{v}_\alpha$ and $\langle \Psi_{\textrm{T}} |\hat{H}$ as $D_\textrm{chol}$ and $D_\textrm{hr}$ separately.
Both $D_\textrm{chol}$ and $D_\textrm{hr}$ can be further compressed.
While the Cholesky operator $\hat{v}_\alpha=\sum_{pq}L_{pq}^\alpha a_p^\dagger a_q$ only contains one-body operators, the compression is very efficient.
As for $D_\textrm{hr}$, the exact $D_\textrm{hr}$ is equal to $D_\textrm{T}D_H$ where $D_H$ is the MPO bond dimension of the Hamiltonian. 
When the trial state becomes more accurate, the compression of $D_\textrm{hr}$ becomes more efficient since in the limit of approaching ground state, $D_\textrm{hr}$ can be compressed to be $D_\textrm{T}$ without losing accuracy. The numerical example will be shown in Sec.~\ref{sec:4by4}.
In Table.~\ref{tab:computational_costs}, we summarize the leading order computational scaling for the subroutines of MPS-AFQMC.
\begin{table}[ht]
\centering
\begin{tabular}{lc}
\hline
Method & Leading order scaling \\
\hline
SD-to-MPS & $O(NBd^3D_\textrm{w}^3)$ \\
Overlap & $O(NdD_\textrm{T}D_\textrm{w}^2)$ \\
Force bias & $O(NdD_\textrm{chol}D_\textrm{w}^2)$ \\
Local energy & $O(NdD_\textrm{hr}D_\textrm{w}^2)$ \\
\hline
\end{tabular}
\caption{Computational costs for the subroutines of MPS-AFQMC. $d=2$ is the physical degrees of freedom of MPS (Eq.~\eqref{eq:mps-generic}) and $B$ is the block size of the corresponding GMPS.}
\label{tab:computational_costs}
\end{table}
\subsection{Virtual correlation energy}\label{sec:virt}
In the case of using MPS as the trial wavefunction for the ``active space", we provide an algorithm for calculating the correlation energy outside the active space. This step is essential for achieving convergence in simulation results towards the basis set limit (or the continuum limit). The correlation energy outside the active space will be called the ``virtual correlation energy"~\cite{huggins_unbiasing_2022}. 
Compared to existing dynamic correlation methods, the uniqueness of our approach is that 
we can calculate the overlap, force bias, and local energy {\it of the entire single-particle space} using overlap evaluations between the trial and walkers only {\it within the active space}.

The original proposal in ref. \citenum{huggins_unbiasing_2022} had a significant overhead in evaluating the local energy using overlap, resulting in the cost being $\mathcal O(N^4)$ more expensive than the overlap evaluation.
Here, we show how to remove such an overhead using ideas from differentiable programming.
Furthermore, the original proposal did not provide implementation details, even for the overlap evaluation. We provide the complete details below.

We begin by writing the trial wavefunction as,
\begin{equation}
    \left|\Psi_{\textrm{T}}\right\rangle= \left|\Xi_{\mathrm{c}}\right\rangle\otimes\left|\Psi_{\textrm{T},a}\right\rangle\otimes |0_v\rangle
\end{equation}
where $\left|\Psi_{\textrm{T},a}\right\rangle$ is the MPS trial wavefunction within the active space with $N_a$ electrons, $\left|\Xi_{c}\right\rangle$ is a Slater determinant composed of occupied orbitals with $N_c$ electrons outside the active space (i.e., frozen core orbitals), and $|0_v\rangle$ is the vacuum state in the space of unoccupied orbitals (i.e., frozen virtual orbitals).
The trial wavefunction with the above form can always be expanded by,
\begin{equation}
    |\Psi_{\textrm{T}}\rangle = |\Xi_{\mathrm{c}}\rangle\otimes \sum_i c_i|\chi_{i}\rangle\otimes |0_v\rangle
\end{equation}
where $|\chi_i\rangle$ is the $i$-th Slater determinant within the Hilbert space of the active space.
We use this uncontracted form only for demonstration purposes and emphasize that our algorithm directly uses the MPS trial without converting it into a set of determinants. 

\paragraph{Overlap.}
With this form, the overlap between trial and walker reads 
\begin{equation}
\sum_i c_i \left\langle\Xi_c\chi_i 0_v|\psi\right\rangle
=
\sum_i c_i 
\operatorname{det}\left(\left(\begin{array}{l|l}
\Xi_c & \mathbf{0} \\
\hline \mathbf{0} & \chi_i\\
\hline \mathbf{0} & \mathbf{0} \\
\end{array}\right)^\dagger\left(\begin{array}{l}
\psi_c \\
\psi_a \\
\psi_v 
\end{array}\right)\right)\label{eq:ovlp_vir}
\end{equation}
which equal to
\begin{equation}
\left\langle\Xi_c \chi_i|\psi\right\rangle=\operatorname{det}\left(\begin{array}{c}
\Xi_c^\dagger \psi_c \\
\chi_i^\dagger \psi_a
\end{array}\right)\label{eq:ovlp_vir2}
\end{equation}
where $\psi_c$ and $\psi_a$ are $N_a+N_c$ column MO coefficient matrices over the molecular orbital basis.
$\Xi_c$ is diagonal with ones up to the number of core electrons and zeros elsewhere.
While virtual orbital degrees of freedom no longer appear, this form still contains core degrees of freedom.

To further remove the core degrees of freedom, we perform SVD on
\begin{equation}
    \Xi_{\mathrm{c}}^\dagger\psi_c = U_{\mathrm{c}}\Sigma_{\mathrm{c}}V^\dagger_{\mathrm{c}},
\end{equation}
where $U_{\mathrm{c}} \in \mathbb{C}^{N_c \times N_c}$ and $V_{\mathrm{c}} \in \mathbb{C}^{(N_a+N_c) \times N c}$.
Then, we define new unitary matrices $U \in \mathbb{C}^{(N_a+N_c) \times (N_a+N_c)}$ and $V \in \mathbb{C}^{(N_a+N_c) \times (N_a+N_c)}$ by padding orthonormal vectors to $U_c$ and $V_c$,
\begin{equation}
U = 
\begin{pmatrix}
  U_c
  & \rvline & \mathbf{0} \\
\hline
  \mathbf{0} & \rvline &
    \mathbf{I}
\end{pmatrix},
V = \begin{pmatrix}
  V_c
  & \rvline & V'
\end{pmatrix},
\end{equation}
With these new unitary matrices, we can rewrite the overlap in  Eq.~\eqref{eq:ovlp_vir2} as
\begin{align}
\left\langle\Xi_c \chi_i|\phi\right\rangle &=\operatorname{det}\left( U^\dagger \left(\begin{array}{l|l}
\Xi_c & \\
\hline & \chi_i
\end{array}\right)^\dagger \left(\begin{array}{l}
\psi_c \\
\psi_a
\end{array}\right) V\right) \\ &= 
\operatorname{det}(\Sigma_c)\operatorname{det}(\chi_i^\dagger \tilde{\psi}_a)\operatorname{det}(R)
\end{align}
where $\tilde{\psi}_a$ is the normalized
Slater determinant within the active space,
and $\operatorname{det}(R)$ is the normalization matrix obtained by performing QR decomposition of the matrix $\psi_a V'$.
Therefore, computing the overlap between the trial and Slater determinant in the total space only requires the evaluation of the overlap between the MPS trial and SD in the active space,
\begin{equation}
    \langle\Psi_{\textrm{T}}|\psi\rangle = \operatorname{det}(\Sigma_c R)\langle \Psi_{\textrm{T},a}|\tilde{\psi}_a\rangle\label{eq:ovlp_virt_simple}
\end{equation}
which we already know how to compute, as described in Sec.~\ref{sec:mps_afqmc}.
We will show that the computation of force bias and local energy can be performed by differentiating overlap. 

\paragraph{Force bias.}
\begin{equation}\label{eq:fb_fd}
\frac{\left\langle\Psi_{\textrm{T}}\left|\hat{v}_\gamma\right| \psi\right\rangle}{\left\langle\Psi_{\textrm{T}} |\psi\right\rangle}=\frac{\left .\frac{\partial\langle\Psi_{\textrm{T}}|e^{\lambda_\gamma\hat{v}_\gamma}| \psi\rangle}{\partial \lambda_\gamma}\right|_{\lambda_\gamma=0}}{\langle\Psi_{\textrm{T}} |\psi\rangle}=\left .\frac{\partial
\ln{\langle\Psi_{\textrm{T}}|e^{\lambda_\gamma\hat{v}_\gamma}| \psi\rangle}}{\partial \lambda_\gamma}\right|_{\lambda_\gamma=0}
\end{equation}
where $e^{\lambda_\gamma\hat{v}_\gamma}| \psi\rangle$ is a SD by rotating the walker's SD $|\psi\rangle$.
Therefore, the function $\ln{\langle\Psi_{\textrm{T}}|e^{\lambda_\gamma\hat{v}_\gamma}| \psi\rangle}$ in Eq.~\eqref{eq:fb_fd} corresponds to the overlap value between trial and the rotated SD (see Eq.~\eqref{eq:ovlp_virt_simple} and the derivative at $\lambda_\gamma=0$ can be computed with finite difference approximations or even algorithmic differentiation~\cite{mahajan2023response}.
When using algorithmic differentiation, one could obtain a $\mathcal O(N_\gamma)$ speed-up compared to the na{\"i}ve derivative implementation. 
\paragraph{Local energy.}
In Sec.~\ref{sec:loce}, we introduced using MPO to compute the local energy, which is not possible within the context of considering both active space, frozen cores, and frozen virtuals.
The local energy can be computed similarly to that of force bias without using MPO.
\add{
    Previously, the derivative method was also proposed to 
compute the local energy in the context of real space QMC~\cite{filippi2016simple},
and shifted-contour AFQMC~\cite{baer2001ab,baer2002shifted}. 
}
Especially for the one-body term, we  replace $\hat{v}_{\gamma}$ with $\sum_{pq}h_{pq}a_p^\dagger a_q$,
\begin{equation}
    E_1 = \frac{\partial\ln\langle\Psi_{\textrm{T}}|e^{\lambda\sum_{pq}h_{pq} a_p^\dagger a_q}| \psi\rangle}{\partial\lambda}\big|_{\lambda=0}
\end{equation}
which requires
a derivative of an overlap evaluation between $|\Psi_{\textrm{T}}\rangle$ and a SD.
For the two-body term,
\begin{equation}
    E_2 = \frac{1}{2}\frac{\left\langle\Psi_{\textrm{T}}\left|\sum_{\gamma}\sum_{pqrs}L_{pq}^\gamma L_{rs}^{\gamma} a_p^\dagger a_q a_r^\dagger a_s\right| \psi\right\rangle}{\left\langle\Psi_{\textrm{T}} |\psi\right\rangle}\\
\end{equation}
where 
\begin{align}
    &\langle\Psi_{\textrm{T}}|\sum_{pqrs}L_{pq}^\gamma L_{rs}^{\gamma} a_p^\dagger a_q a_r^\dagger a_s| \psi\rangle \\\nonumber &= \left .\frac{\partial^2 \langle\Psi_{\textrm{T}}|e^{\lambda \sum_{pq}L_{pq}^\gamma a_p^\dagger a_q} | \psi\rangle}{\partial\lambda^2}\right|_{\lambda=0}\\\nonumber.
\end{align}
Then, the mixed partial derivative can be computed using a finite difference approximation.
\begin{table}[ht]
\centering
\begin{tabular}{lcc}
\hline
Method & Cost Relative to Overlap \\
\hline
Overlap & $\mathcal O(1)$ \\
Force bias$^\dagger$ & $\mathcal O(1)$ \\
One-body energy & $\mathcal O(1)$ \\
Two-body energy & $\mathcal O(N_\gamma)$ \\
\hline
\end{tabular}
\caption{Computational costs of the subroutines relative to the cost of overlap. $^\dagger$ We assume an implementation based on algorithmic differentiation.}
\label{tab:relative_computational_costs}
\end{table}

We note that these apply to arbitrary trial wavefunctions. For instance, they will be critically important in accelerating quantum-classical AFQMC algorithms with matchgate shadows where classical post-processing has been claimed to scale \add{$\mathcal O(N^{8})$~\cite{kiser_classical_2023,Huang2024Apr} per walker per shadows sample, which can be reduced to $\mathcal O(N^{4-5})$ with our approach.}
\section{Complexity of Gaussian-MPS overlaps}\label{sec:BQP}

In general, the classical simulation of quantum computations (and, more generally, the evaluation of tensor networks) is hard.
However, there are certain classes of computations (and tensor networks) that are classically tractable.
The most familiar such class consists of stabilizer states and circuits \cite{gottesman_heisenberg_1998} (and their generalization, affine tensor networks~\cite{cai_clifford_2018}), also known as Clifford circuits.
Another class consists of Gaussian states and circuits \cite{terhal_classical_2002} (and their generalization, matchgate tensor networks~\cite{cai_holographic_2007}).
The last known class consists of states and circuits expressible by bounded-treewidth tensor networks~\cite{markov_simulating_2008}. 
An MPS with a bounded bond dimension is one example of a bounded-treewidth tensor network.
% TODO: Note completeness of list?

All three classes can be extended to universality by adding certain gates or states outside the class (e.g., ``Clifford+T''). The classical simulation time then typically scales exponentially with the number of such additional resources.
Importantly, the ``magic'' resource extending each class can be taken from either of the other two classes.
Here we show one such instantiation of this idea, that, under standard complexity-theoretic assumptions, it is hard to compute the overlap of a (pure) Gaussian state $\ket{\psi_{\mathrm{G}}}$ with a state $\ket{\psi_{\mathrm{NG}}}$ that is both a stabilizer state and an MPS with constant bond dimension. 
This is stated formally in \cref{thm:overlap-hardness}.

\begin{theorem}\label{thm:overlap-hardness}
Let $\ket{\psi_{\mathrm{G}}}$ be a normalized Gaussian state and $\ket{\psi_{\mathrm{NG}}}$ be either a normalized stabilizer state or a normalized MPS with bond dimension $O(1)$, both of $n$ qubits.
Then estimating $\left|\braket{\psi_{\mathrm{G}} | \psi_{\mathrm{NG}}}\right|$ to within multiplicative error $<1/\sqrt{2}$ is $\#\P$-hard under \P-reduction.
\end{theorem}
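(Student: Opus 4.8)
The plan is to reduce from a standard $\#\P$-hard problem: estimating, to within a constant multiplicative factor, the squared transition amplitude $\bigl|\braket{0^k|C|0^k}\bigr|^2$ of a polynomial-size quantum circuit $C$ over a fixed universal gate set. This quantity is $\#\P$-hard under polynomial-time Turing reduction; the cleanest justification is that $\mathsf{PostBQP}=\mathsf{PP}$ and $\mathsf{P}^{\#\P}=\mathsf{P}^{\mathsf{PP}}$, so it suffices to decide $\mathsf{PP}$ languages, which --- after amplifying $\mathsf{PostBQP}$ to exponentially small error --- reduces to comparing two such amplitudes that are promised to differ by an exponentially large factor. Consequently \emph{any} constant relative error strictly below $1$ is enough to decide it; in particular a relative error $<1/\sqrt{2}$ on the magnitude, which after squaring is a relative error $<1/2$ on the probability, certainly is.

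Next I would convert $\braket{0^k|C|0^k}$ into a Gaussian--stabilizer overlap using the universality of matchgate (fermionic Gaussian) circuits assisted by a constant-size non-Gaussian resource. I would decompose $C$ as a matchgate circuit interleaved with a polynomial number of injection gadgets, each consuming one copy of the four-qubit cat state $\ket{\mathrm{GHZ}_4}\propto\ket{0000}+\ket{1111}$; under Jordan--Wigner this state carries a four-fermion amplitude but no two-fermion amplitude, so it violates Wick's theorem and is non-Gaussian, yet it is a stabilizer state and, being a cat state, an MPS of bond dimension $2$. Each gadget's measurement is handled by post-selecting a fixed ``lucky'' outcome and pre-compensating the resulting \emph{known} matchgate byproduct operator with an extra matchgate upstream, so the entire computation collapses to a single matchgate unitary $U_{\mathrm{G}}$ acting on the normalized input $\ket{\Phi}:=\ket{0^k}\otimes\ket{\mathrm{GHZ}_4}^{\otimes p}$, with every ancillary mode projected onto $\ket{0}$ at the end. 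Writing $n=k+4p$, this yields, up to a known nonzero constant $\kappa$,
\begin{equation}
\braket{0^k|C|0^k}\;=\;\kappa\,\braket{0^{n}|U_{\mathrm{G}}|\Phi}\;=\;\kappa\,\braket{\psi_{\mathrm{G}}|\psi_{\mathrm{NG}}},
\end{equation}
where $\ket{\psi_{\mathrm{G}}}:=U_{\mathrm{G}}^{\dagger}\ket{0^{n}}$ is a normalized pure Gaussian state (matchgate unitaries preserve Gaussianity and norm, and $\ket{0^{n}}$ is the Fock vacuum) and $\ket{\psi_{\mathrm{NG}}}:=\ket{\Phi}$ is simultaneously a stabilizer state and --- as a tensor product of bond-dimension-$2$ cat blocks placed as consecutive segments of the chain --- an MPS of bond dimension $O(1)$, so both cases of the theorem are covered at once. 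Both states are normalized and constructible from $C$ in polynomial time, so a multiplicative estimate of $\bigl|\braket{\psi_{\mathrm{G}}|\psi_{\mathrm{NG}}}\bigr|$ is one of $\bigl|\braket{0^k|C|0^k}\bigr|$, completing the reduction.

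I expect the real work to lie in the gadget engineering: exhibiting a concrete matchgate-plus-$\ket{\mathrm{GHZ}_4}$ universality construction (or adapting the result of Hebenstreit--Jozsa--Kraus--Strelchuk--Yoganathan that every pure non-Gaussian state is a matchgate magic state), verifying that its byproduct operators in the post-selected branch are genuinely matchgates that can be commuted out and cancelled, and tracking the fermion-parity sector --- adding a single reference mode if needed, since parity-odd corrections such as a lone $X$ are not matchgates. A second delicate point is the $O(1)$-bond-dimension requirement in the MPS case: it forces each magic block to be injected as a \emph{local} segment of the one-dimensional ordering, with all non-local routing absorbed into $U_{\mathrm{G}}$, which is possible only because nearest-neighbour matchgates already include fermionic-SWAP gates. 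As a fallback that avoids circuit universality, one can instead reduce directly from a $\#\P$-hard (and hard-to-multiplicatively-approximate) Ising partition function on a non-planar graph: when $\ket{\psi_{\mathrm{NG}}}$ is a stabilizer state, $\braket{\psi_{\mathrm{G}}|\psi_{\mathrm{NG}}}$ is a quadratic-phase-weighted sum of Pfaffian minors over an affine subspace, and one encodes a planar skeleton into the Pfaffians and the non-planar crossings into the affine and quadratic-form data.
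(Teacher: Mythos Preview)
Your plan is correct and essentially identical to the paper's: reduce from multiplicative approximation of $|\braket{0^k|C|0^k}|^2$, compile $C$ into nearest-neighbour matchgates plus post-selected four-qubit magic-state gadgets, and take $\ket{\psi_{\mathrm{G}}}=U_{\mathrm{G}}^{\dagger}\ket{0^{n}}$ and $\ket{\psi_{\mathrm{NG}}}=\ket{0^k}\otimes\ket{\text{magic}}^{\otimes s}$. The one difference is the choice of magic state: instead of $\ket{\mathrm{GHZ}_4}$ the paper uses $\ket{\mathrm{M}}\propto\ket{0000}+\ket{0101}+\ket{1010}+\ket{1111}$ (two Bell pairs on qubits $(1,3)$ and $(2,4)$, still stabilizer and $O(1)$-bond MPS), which turns the ``gadget engineering'' you defer into a short explicit verification --- after writing each two-qubit gate via single-qubit gates and $\mathrm{CZ}=\mathrm{SWAP}\cdot\mathrm{fSWAP}$, the only non-matchgate left is SWAP, and each $\ket{\mathrm{M}}$ gadget implements exactly one SWAP with no byproduct operator and a known scalar $4^{-1}$, so your parity and byproduct concerns disappear.
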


After some introductory material in~\cref{sec:background}, we will prove \cref{thm:overlap-hardness} in~\cref{sec:hardness-proof}.

\subsection{Background}\label{sec:background}

\begin{definition}[Affine]
An \emph{affine tensor network} is a tensor network; all of its tensors are \emph{affine} tensors.
An \emph{affine tensor} $f: {\{0, 1\}}^r \mapsto \mathbb C$ is one defined by a constant $\lambda \in \mathbb C$, a matrix $A \in \mathbb Z_2^{l \times (r + 1)}$, and a symmetric matrix $Q \in \mathbb Z^{(r + 1)\times(r + 1)}$, such that
    \begin{align}
        f(\mathbf x) &= 
        \lambda \cdot 
        \chi_{A \cdot (\tilde{\mathbf x}, 1)} 
        i^{\tilde{\mathbf x}^T Q \tilde{\mathbf x}},
    \end{align}
    where $\tilde{\mathbf x} = (x_1, \ldots, x_r, 1)$ is just $\mathbf x$ with 1 appended and
    \begin{align}
        \chi_{A\mathbf x} &= \begin{cases}
        1, & A\mathbf x = \mathbf 0, \\
        0, & \text{otherwise}
        \end{cases}
\end{align} is the indicator function of the affine relation defined by $A$. 
A (pure) stabilizer state (also known as a Clifford state) is simply an affine tensor network that has unit Euclidean norm (i.e., is a normalized quantum state).
\end{definition}

\begin{definition}[Matchgate]
A \emph{matchgate} tensor $f: {\{0, 1\}}^r \mapsto \mathbb C$ is one defined by a weighted planar graph $G = (V, E, w)$, with $r$ vertices identified as ``external'' such that
\begin{align}
f(\mathbf x)
&=
\sum_{M \in \mathrm{PerfectMatchings}(G^{\mathbf x})} \prod_{e \in M} w(e),
\end{align}
where $G^{\mathbf x}$ is the subgraph of $G$ induced by the vertices $\left\{v: x_v = 0\right\}$ and $\mathrm{PerfectMatchings}(H)$ is the set of perfect matchings of a graph $H$.
A (pure) Gaussian state is simply a matchgate tensor that has unit Euclidean norm.
A Slater determinant is a Gaussian state with the further restriction that its support is on a single Hamming level, i.e. $\norm{\mathbf x}_1 = k$ for some $k$.
\end{definition}
Importantly, because of the planarity requirement, and unlike with affine tensors, the ordering of the legs of a matchgate tensor matters.
Every 2-qubit unitary matchgate can be specified by two 1-qubit unitaries $A$ and $B$ with $\det A = \det B$.
Specifically, $G(A, B)$ acts as $A$ on the even subspace $\{\ket{0, 0}, \ket{1, 1}\}$ and as $B$ on the odd subspace $\{\ket{0, 1}, \ket{1, 0}\}$.
We will not define the treewidth of the tensor network; see \cite{ogorman_parameterization_2019} for more details.
It will suffice to note that an MPS of bond-dimension $d$ has treewidth $O(d)$.

\subsection{Proof of overlap hardness}\label{sec:hardness-proof}
The main idea for showing the hardness of computing the overlap of interest is to show that it is essentially equivalent (up to scaling and with some blowup in the number of qubits) to computing the output amplitude of an arbitrary quantum circuit.
This is captured in \cref{lem:overlap-amplitude}.

\begin{lemma}\label{lem:overlap-amplitude}
    Given a quantum circuit $C$ on $n$ qubits and containing $m=\poly(n)$ 1- and 2-qubit gates, there is
    a normalized Gaussian state $\ket{\psi_{\mathrm{G}}}$, a normalized state $\ket{\psi_{\mathrm{NG}}}$, and an $s = \poly(n)$ such that 
    \begin{align}
        \braket{0^n | C | 0^n} &= 4^{s} \braket{\psi_{\mathrm{G}} | \psi_{\mathrm{NG}}},
    \end{align}
    $\ket{\psi_{\mathrm{G}}}$ and $\ket{\psi_{\mathrm{NG}}}$ are states of $n'=\poly(n)$ qubits,
    and $\ket{\psi_{\mathrm{NG}}}$ is both a stabilizer state and an MPS with bond dimension $O(1)$.
\end{lemma}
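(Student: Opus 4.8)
The plan is to encode the amplitude $\braket{0^n|C|0^n}$ of an arbitrary polynomial-size quantum circuit as a single inner product between a Gaussian (matchgate) state and a state that is simultaneously a stabilizer state and a constant-bond-dimension MPS. The strategy is to split the ``magic'' needed for universality from each of the two classically-tractable classes so that one side carries only Gaussian resources and the other carries only Clifford/MPS resources, with gate teleportation gluing them together. Concretely, I would first recall that matchgates together with the SWAP gate (equivalently, matchgates on non-nearest-neighbor lines) are universal for quantum computation; so without loss of generality $C$ can be taken to be a circuit of $G(A,B)$ matchgates and SWAPs, of depth and size $\poly(n)$. The SWAP gates are the only non-Gaussian elements.

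Next I would implement each SWAP by gate teleportation: replace every SWAP acting on two wires by the standard identity that consumes a Bell-type resource state and a few Bell-basis measurements (equivalently, by inserting an appropriate entangled ``gadget'' state and projecting onto stabilizer measurement outcomes). The key accounting point is that the resource states and the projectors used in this teleportation gadget are all \emph{stabilizer} objects, and a tensor network of stabilizer states/projectors of bounded local arity is a stabilizer state that also admits an MPS description with $O(1)$ bond dimension (after fixing a linear ordering of the qubits, padding with ancillas as needed). Collecting all SWAP gadgets into one tensor gives $\ket{\psi_{\mathrm{NG}}}$; collecting the remaining all-matchgate circuit, the input $\ket{0^n}$, and the output projector $\ket{0^n}$ into one tensor gives a (generally unnormalized) matchgate tensor, which after normalization is the Gaussian state $\ket{\psi_{\mathrm{G}}}$. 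Contracting the two reproduces $\braket{0^n|C|0^n}$ up to the product of normalization constants, and since each teleported SWAP and each state normalization contributes a fixed factor (a power of $2$ per gadget), the overall scalar is $4^{s}$ for some $s=\poly(n)$; the number of qubits $n'$ is $n$ plus a constant number of ancillas per SWAP, hence $\poly(n)$.

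The main obstacle I anticipate is bookkeeping rather than conceptual: (i) making sure the teleportation gadget is written so that \emph{all} of its tensors land on the stabilizer/MPS side while \emph{all} matchgate tensors (including the boundary $\ket{0^n}$ bras/kets, which are themselves both Gaussian and stabilizer, so can be assigned to either side) land on the Gaussian side, with no tensor straddling the cut; (ii) verifying that the stabilizer side, as an abstract contraction graph, really has bounded treewidth — here one uses that SWAP gadgets can be arranged in a brick-wall pattern so the induced interaction graph is a bounded-degree, bounded-pathwidth graph, which by the remark after the Matchgate definition gives $O(1)$ bond dimension as an MPS — and (iii) tracking the scalar prefactor precisely so it comes out as an exact power of $4$. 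Once the gadget and the cut are fixed, the identity $\braket{0^n|C|0^n}=4^{s}\braket{\psi_{\mathrm{G}}|\psi_{\mathrm{NG}}}$ follows by the standard correctness of gate teleportation together with the composition rules for matchgate and affine tensor networks recalled in \cref{sec:background}.
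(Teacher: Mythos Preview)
Your high-level plan matches the paper's: compile $C$ into nearest-neighbour matchgates plus SWAPs, then offload each SWAP to a stabilizer resource so that the circuit side becomes purely Gaussian and the resource side a constant-bond-dimension stabilizer state. The genuine gap is in ensuring the Gaussian side is actually Gaussian. A matchgate tensor network defines a Gaussian state only when its open legs sit on the outer face of a planar embedding --- equivalently, when it is a nearest-neighbour matchgate circuit applied to a computational basis state. If you simply excise each SWAP and push ``resource states and projectors'' onto $\ket{\psi_{\mathrm{NG}}}$, the remaining G-side network has open legs stranded in the \emph{interior} of the circuit, and your outline supplies no mechanism to route them out. ``Standard'' Bell-basis measurements do not solve this: decomposed as CNOT, Hadamard, and computational projection, the CNOT is not a matchgate; kept as a single $\bra{\Phi^+}$ tensor, it is a matchgate only between \emph{adjacent} wires. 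This routing issue is the substantive content of the proof, not the bookkeeping you anticipate.

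The paper resolves it by appending four ancilla wires per SWAP at one end of the line, placing on them the magic state $\ket{\mathrm M}=\tfrac12(\ket{0000}+\ket{0101}+\ket{1010}+\ket{1111})$, and using \emph{fermionic} swaps (which are matchgates) to shuttle the two data qubits next to the ancilla block and back; the only other gates in the gadget are two $G(H,H)$'s, and postselection is onto $\ket{0^4}$ at the circuit output. One then has $\ket{\psi_{\mathrm G}}=C_{\mathrm{mgc}}^\dagger\ket{0^{n+4s}}$, manifestly Gaussian, and $\ket{\psi_{\mathrm{NG}}}=\ket{0^n}\otimes\ket{\mathrm M}^{\otimes s}$, a tensor product --- so its stabilizer and $O(1)$-MPS properties are immediate and no brick-wall or treewidth argument is needed. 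The $4^s$ factor drops out of a direct two-line check of the gadget. Your outline becomes a proof once the unspecified Bell gadget is replaced by one whose circuit-side operations are genuine nearest-neighbour matchgates, with the fermionic-swap routing made explicit.
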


\begin{proof}
First, we define a new circuit $C_{\mathrm{NN}}$ that is equivalent to $C$ except that all of its 2-qubit gates only act on neighboring qubits, for some linear ordering of the qubits.
The simplest way of doing so is, for every gate acting on qubits $i$ and $j$, to insert a series of swap gates bringing the state on qubit $j$ to qubit $i + 1$. This introduces at most $n-1$ swap gates, so that $C_{\mathrm{NN}}$ has at most $m_{\mathrm{NN}} \leq nm$ gates. Note that the final ordering of the qubits does not matter because we are only calculating the overlap of the output state with $\ket{0^n}$.

Next, we define a circuit $C_{\mathrm{swp}}$ that compiles each 2-qubit gate into a sequence of matchgates and swaps (on the same two qubits).
There are many ways of doing this.
    One way is to start with the compilation of each gate of $C_{\mathrm{NN}}$ into single-qubit gates and (at most 3) CNOTs.
Then turn each CNOT into a CZ by inserting a Hadamard on the target qubit on each side of the CNOT.
Lastly, replace each CZ by a swap and a fermionic swap.
    The resulting circuit $C_{\mathrm{swp}}$ still acts on the original $n$ qubits and has $m_{\mathrm{swp}} = O(m_{\mathrm{NN}}) = O(nm)$ gates.
Note that all single-qubit gates and the two-qubit fermionic swap are matchgates, so the only non-matchgates contained in $C_{\mathrm{swp}}$ are the swaps.

In the last transformation, we introduce the magic state 
\begin{align}
\ket{\mathrm{M}} &= \frac{1}{4} \left(\ket{0,0,0,0} + \ket{0,1,0,1} + \ket{1,0,1,0} + \ket{1, 1, 1, 1}\right).
\end{align}
Importantly, $\ket{\mathrm{M}}$ is both a stabilizer state and an MPS of constant bond dimension.
Let $s$ be the number of swap gates in $C_{\mathrm{swp}}$.
We construct a new circuit $C_{\mathrm{mgc}}$ that acts on $n_{\mathrm{mgc}} = n_{\mathrm{swp}} + 4s$ qubits in the following manner.
Start with $C_{\mathrm{swp}}$.
Identify each swap gate in $C_{\mathrm{swp}}$ with a 4-qubit ancilla register.
Then replace the swap gate in $C_{\mathrm{swp}}$ with the gadget described in \cref{fig:swap-gadget}, consisting of a series of fermionic swaps and two $G(H, H)$ gates.
\begin{figure}
\centering
% \includesvg[width=1.2\linewidth,pretex=\relscale{1.0}]{swap_gadget.svg}
\includegraphics[width=0.4\textwidth]{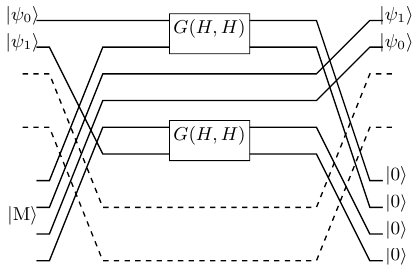}
    \caption{Gadget for implementing a swap gate using only matchgates and the magic state $\ket{\mathrm{M}}$.
    Every intersection of two-qubit wires indicates a \emph{fermionic} swap.
    The dashed lines indicate the arbitrary number of qubits between the two to be swapped and the ancilla register containing the magic state.
    }
\label{fig:swap-gadget}
\end{figure}
If we start with the magic state $\ket{\mathrm{M}}$ on the ancilla register and project at the end onto $\ket{0^4}$, the overall effect is to implement the swap, with an additional scale factor.

To see how the gadget works, let us zoom in on the part in between the swaps.
    It will be sufficient to show that it works for computational basis states $\ket{\psi_0, \psi_1} = \ket{x_0, x_1}$.
    \begin{align}
    &\left(\bra{0, 0} \otimes I_2 \otimes \bra{0, 0}\right)    \left(G(H, H) \otimes I_2 \otimes G(H, H) \right) \ket{x_0} \ket{\mathrm{M}} \ket{x_1}
    \\
    &= 
        2^{-1} 
        \hspace{-1em}
        \sum_{y_1, y_4 \in \{0, 1\}}
        \hspace{-1em}
         \left(\bra{y_1, y_1} \otimes I_2 \otimes \bra{y_4, y_4}\right)     
         \ket{x_0} \ket{\mathrm{M}} \ket{x_1}
    \\
    &=
        2^{-2} \hspace{-2em}
        {\sum_{y_1, y_2, y_3 y_4 \in \{0, 1\}}}
        \hspace{-2em}
        \left(\bra{y_1, y_1} \otimes I_2 \otimes \bra{y_4, y_4}\right)     
        \ket{x_0} \ket{y_2, y_3, y_2, y_3} \ket{x_1}
    \\
    &=
   2^{-2} 
       \left(\bra{x_0, x_0} \otimes I_2 \otimes \bra{x_1, x_1}\right)     
       \ket{x_0} \ket{x_0, x_1, x_0, x_1} \ket{x_1}
   \\
   &=
   2^{-2} \ket{x_1, x_0},
    \end{align}
    where we used the fact that
    \begin{align}
    G(H, H) \ket{0, 0} &= 2^{-1/2} \sum_{x \in \{0, 1\}} \ket{x, x}.
    \end{align}
    Overall, this implies
    \begin{align}
    \braket{0^n | C | 0^n}
    &=
    \braket{0^n | C_{\mathrm{NN}} | 0^n} \\
    &=
    \braket{0^n | C_{\mathrm{swp}} | 0^n}\\
    &=
    4^s \braket{0^{n + 4s} | C_{\mathrm{mgc}} | 0^n, \mathrm{M}^s}.
    \end{align}
    Defining
    \begin{align}
    \ket{\psi_{\mathrm{NG}}} &= \ket{0^n, \mathrm{M}^s},\\
    \ket{\psi_[\mathrm{G}} &= C_{\mathrm{mgc}}^{\dagger} \ket{0^{n + 4s}} 
    \end{align}
    completes the proof.
\end{proof}

Finally, we get to the proof of the main result.

\begin{proof}[Proof of \cref{thm:overlap-hardness}]
We reduce the problem of estimating the output probability of a quantum circuit.
    Specifically, we show that polynomial-time algorithm that produces an approximation to the magnitude of the overlap $\left|\braket{\psi_{\mathrm{G}}  | \psi_{\mathrm{NG}}}\right|$ with multiplicative error $<1/\sqrt{2}$ can be used to construct a polynomial-time algorithm to produce an approximation to the output probability ${\left|\braket{0^n | C | 0^n}\right|}^2$ with multiplicative error $<1/2$, which is $\#\P$-hard [Corollary 9~\cite{hangleiter_computational_2023}].

Consider a circuit $C$ on $n$ qubits and containing $m=\poly(n)$ gates.
    By \cref{lem:overlap-amplitude}, we can efficiently construct normalized states $\ket{\psi_{\mathrm{G}}}$ and $\ket{\psi_{\mathrm{NG}}}$ such that $\braket{0^n | C | 0^n} = 4^s \braket{\psi_{\mathrm{G}} | \psi_{\mathrm{NG}}}$, where $s=\poly(n)$ is known from the contsruction.
Suppose that we can produce an estimate $\hat{v} > 0$ to $\left|\braket{\psi_{\mathrm{G}} | \psi_{\mathrm{NG}}}\right|$ with multiplicative error $<1/\sqrt{2}$, i.e.,
    \begin{align}
        \frac{1}{\sqrt{2}} \left|\braket{\psi_{\mathrm{G}} | \psi_{\mathrm{NG}}}\right|
        &<
        \hat{c} < 
        \sqrt{2} \left|\braket{\psi_{\mathrm{G}} | \psi_{\mathrm{NG}}}\right|
        .
\end{align}
    Then $\hat{v}' = 16^s \hat{v}^2$ is an estimate of ${\left|\braket{0^n | C | 0^n}\right|}^2$ with multiplicative error $<1/2$:
    \begin{align}
        & \frac{1}{2} {\left|\braket{0^n | C | 0^n}\right|}^2
       = {\left(
        \frac{1}{\sqrt{2}} 4^s \left|\braket{\psi_{\mathrm{G}} | \psi_{\mathrm{NG}}}\right|
       \right)}^2 \\
       & < 16^s \hat{v}^2 = \hat{v}'
       \\
        & < 
        \sqrt{2} \cdot 4^s {\left|\braket{\psi_{\mathrm{G}} | \psi_{\mathrm{NG}}}\right|}^2
        =
        {2} {\left|\braket{0^n | C | 0^n}\right|}^2.
    \end{align}
In other words, the ability to compute an estimate $\hat{v}$ to the overlap with multiplicative error $<1/\sqrt{2}$ yields the ability to compute an estimate $\hat{v}'$ to the output probability with multiplicative error $<1/2$, which is $\#\P$-hard.
\end{proof}
\section{Results and Discussion}\label{sec:results}
The focus now shifts to investigating practical implementation and performance evaluations of MPS-AFQMC.
We proved the combination of MPS and AFQMC is \#P-hard but have 
several classical heuristics to implement MPS-AFQMC.
How the formal complexity limits average-case chemistry applications is unclear, and this is what we hope to investigate numerically. 
% Specifically, we seek answers to the following questions:
% \begin{enumerate}[label=(\roman*)]
% \item What is the tradeoff between accuracy and computational cost in each of the proposed heuristics? Which one should one use for practical calculations?
% \item How do the approximations of SD-to-MPS and the overlap calculations affect the final AFQMC energies?
% \item In practical calculations, how does the approach perform across systems, particularly in variations of system sizes, two-dimensional layouts, and large basis sets?
% \end{enumerate}
% Here, we present systematic evaluations of these heuristics to answer these questions.

The one- and two-dimensional hydrogen lattice models display rich many-body effects~\cite{simons_collaboration_on_the_many-electron_problem_ground-state_2020} and serve as a testbed for electronic structure methods where strong correlation is controlled by inter-hydrogen distances~\cite{stella2011strong,simons_collaboration_on_the_many-electron_problem_towards_2017}.
For instance, the \textit{ab initio} hydrogen lattice in a minimal basis closely resembles the Hubbard model, augmented with long-range Coulomb interactions. Larger basis sets employ many orbitals per site, as in typical materials simulations.
We consider these systems below for numerical explorations of our heuristics.

We utilize \textsc{PySCF}~\cite{sun_recent_2020} to obtain the electron integrals for the Hamiltonian in Eq.\eqref{eq:qcham} and conduct preliminary electronic structure calculations such as Hartree-Fock. These integrals are then supplied to subsequent DMRG and AFQMC calculations. 
DMRG calculations are performed with the \textsc{Renormalizer}~\cite{renormalizer,ren2022time} package.
The MPS-AFQMC algorithm is developed in the AFQMC package \textsc{ipie}~\cite{ipie,malone2022ipie,jiang2024improved}.

\subsection{Comparisons of the SD-to-MPS strategies}\label{sec:overlapconv}
\begin{figure}
    \centering
    \includegraphics[width=0.5\textwidth]{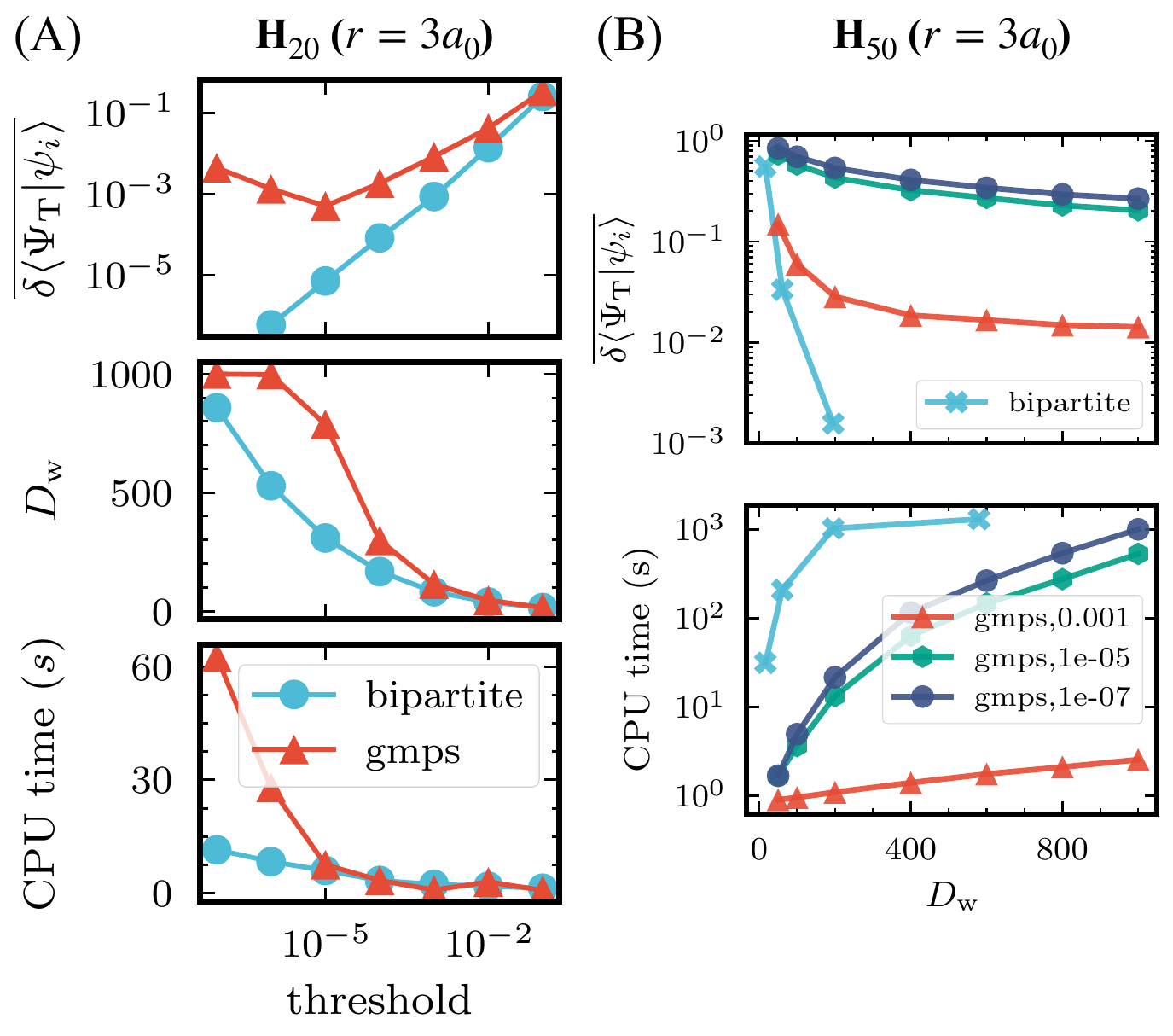}
    \caption{\textbf{Comparing performance of GMPS-to-MPS, bipartite strategies for the overlap calculation across different sybstems}. (A) The relative error, the walker MPS's bond dimension and CPU time of performing SD-to-MPS, averaged over 3200 equilibrated walkers of H$_{20}$, as a function of the threshold for the bipartite approach ($\epsilon_{\textrm{trunc}}$) and the GMPS-to-MPS approach ($\epsilon_{\textrm{ad}}$). The maximal bond dimension of the GMPS-to-MPS approach is set as 1000. The reference overlap values are taken from the bipartite approach with truncation threshold $10^{-7}$.
    (B) The averaged relative error and averaged CPU time as a function of walker MPS's bond dimension for H$_{50}$. The reference values are taken from the bipartite approach with truncation threshold $10^{-4}$. The bipartite results are obtained with $\epsilon_{\textrm{trunc}}\in[10^{-1}, 10^{-2}, 10^{-3}, 10^{-4}]$.}
    \label{fig:H20}
\end{figure}
As mentioned, in ph-AFQMC calculations, the overlap between trial and walker wavefunctions is the central quantity to calculate when performing the phaseless approximation.
In this section, we compare the performance of different strategies 
for this task in terms of accuracy and efficiency. 
The approaches we consider are GMPS-to-MPS, bipartite, and perfect sampling, 
\add{as outlined in Sec.~\ref{sec:sd2mps_White} 
and Appendix.~\ref{app:sd2mps_others}.   
}
To see this, in Fig.~\ref{fig:H20}, we employ one-dimensional hydrogen chains with varying system sizes and inter-atomic distances.

Both bipartite and GMPS-to-MPS approaches involve converting the walker to MPS with some bond dimension $D_\textrm{w}$ before computing the overlap with the MPS trial. We aim to understand the tradeoff between accuracy and cost set by different $D_\textrm{w}$.
First, we compare these two approaches for \ce{H20} in a minimal basis with $r=3a_0$ ($a_0$ represents the Bohr radius). 
In Fig.~\ref{fig:H20}(A), we investigate the impact of the threshold $\epsilon_{\textrm{trunc}}$, which denotes the coefficient truncation threshold in the bipartite method (as defined in Eq.\eqref{eq:bipar_coeff}), and the threshold $\epsilon_{\textrm{ad}}$, which adaptively controls the block size in the GMPS-to-MPS conversion.
Tightening the threshold for the bipartite strategy enhances the precision and increases the bond dimension of the walker MPS, albeit with a slightly increased time cost.
Unlike the bipartite approach, which utilizes a single parameter to control the conversion accuracy, the GMPS-to-MPS approach relies on two quantities: the adaptive threshold, which determines the block size and hence the number of gates, and the maximum allowed bond dimension during compression.
Here, we cap the maximum bond dimension at 1000 and tune only the adaptive threshold $\epsilon_{\textrm{ad}}$. 
Tightening the threshold initially leads to a decrease in error but eventually to an increase in error. 
This arises because a smaller threshold leads to a larger block size and, hence, more gates to be applied, resulting in a higher bond dimension for achieving the same level of accuracy. 
For $\epsilon_{\textrm{ad}} \leq 10^{-5}$, the GMPS-to-MPS approach strikes a good balance between the cost and accuracy. The bipartite method demonstrates greater cost-effectiveness and accuracy for H$_{20}$ overall.

For larger systems (e.g., H$_{50}$), the bipartite approach becomes significantly more expensive, as illustrated in Fig.~\ref{fig:H20}(B). 
The error and computational cost of the bipartite and GMPS-to-MPS approaches with different thresholds are analyzed as a function of the walker bond dimension in H$_{50}$.
We select $\epsilon_{\textrm{trunc}}$ values ranging from $0.1, 0.01, 0.001$ to $0.0001$ for the bipartite approach, where each threshold yields an MPS with a different bond dimension $D_\textrm{w}$. For the GMPS-to-MPS approach, we maintain the maximum allowed $D_\textrm{w}$ at 1000 and adjust the adaptive threshold $\epsilon_{\textrm{ad}}$. The reference values of the overlap are obtained from the bipartite approach with $\epsilon_{\textrm{trunc}}=0.0001$.
It is observed that for similar walker bond dimensions $D_\textrm{w}$, the bipartite approach still achieves higher accuracy but at the cost of substantially longer computational times. For example, when compared to the GMPS-to-MPS approach with $\epsilon_{\textrm{ad}}=0.001$ and $D_\textrm{w}=200$, the accuracy of the bipartite approach with $\epsilon_{\textrm{trunc}}=0.01$ is comparable, yet the computational time is approximately 1000 times slower.
% The complexity associated with maintaining and manipulating a high-accuracy representation of the system's state becomes a significant bottleneck, as compared to smaller systems.

Similarly to \ce{H20}, tightening the threshold for the GMPS-to-MPS approach results in decreased accuracy for a fixed bond dimension, $D_\textrm{w}$. In principle, a tighter threshold should lead to improved accuracy, provided that no approximation is made during the gate applications. However, when we impose a restriction on the walker's bond dimension for compressions during the gate applications, the GMPS-to-MPS method becomes more approximate for that given threshold. One needs a larger bond dimension to maintain a similar accuracy.
% Therefore, a balance need to reach between $D_\textrm{w}$ and the threshold $\epsilon_{\textrm{ad}}$.
% Further analysis of this effect will be presented in the following sections (\textit{eg.}, Figs.~(\ref{fig:chunk_ovlp}(B),\ref{fig:mps_msd_afqmc}).
GMPS-to-MPS approach with $\epsilon_{\textrm{ad}}=10^{-3}$ with $D_{\textrm{w}} \le 200$ seems to strike the balance between cost and accuracy needed for \ce{H50}. 
These observations indicate that the GMPS-to-MPS approach, while potentially less accurate than the bipartite approach, will likely offer a more scalable solution for larger systems.

\begin{figure}
    \centering
    \includegraphics[width=0.45\textwidth]{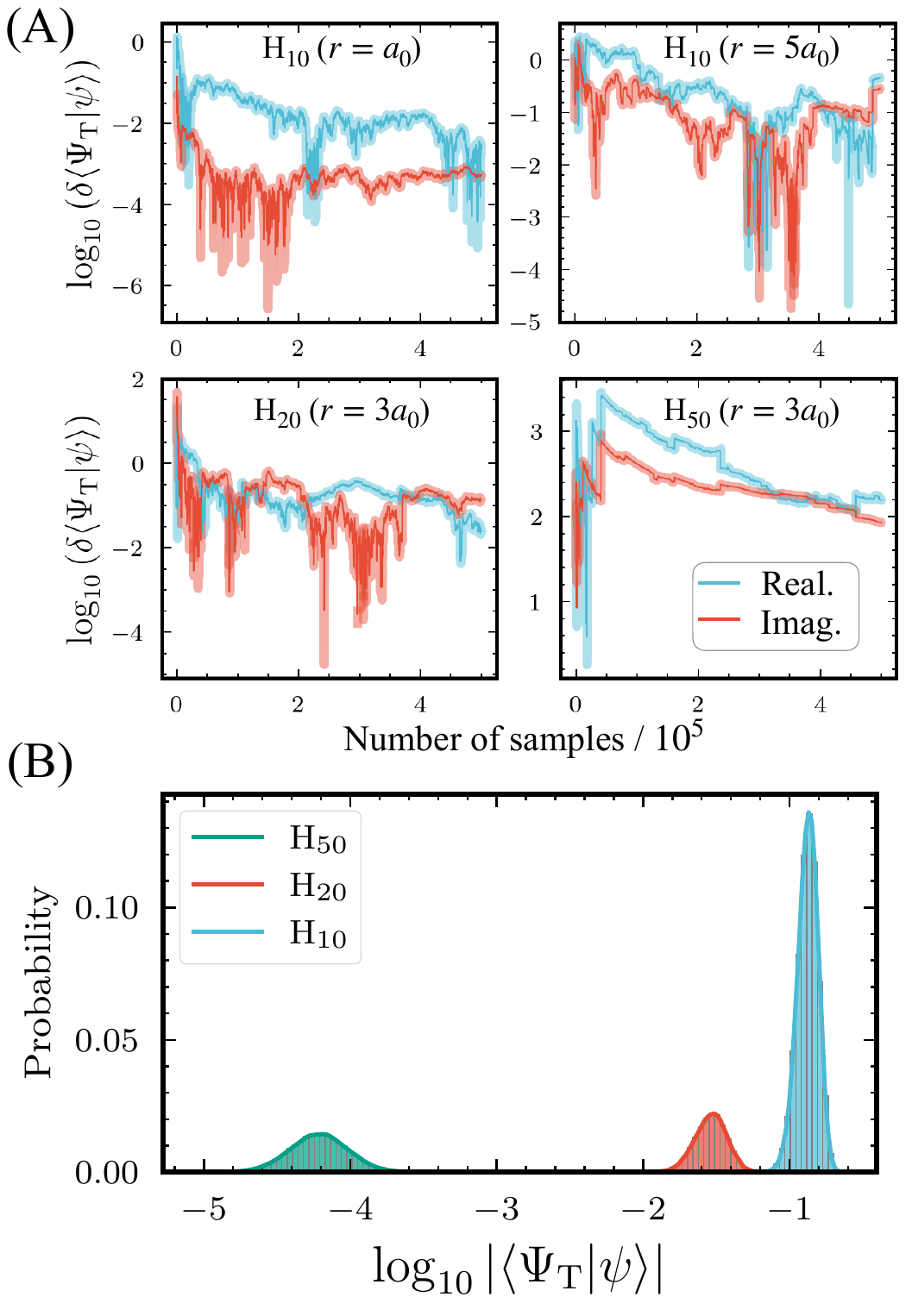}
    \caption{\textbf{(A) Overlap error from perfect sampling with different system sizes and (B) the overlap distribution for different system sizes.} 
    We use the nearest neighbor bond length of $r=3a_0$ in all cases.
    In (A), we show the relative error of the perfect sampling approach for an equilibrated walker of H$_{10}$, H$_{20}$, and H$_{50}$ as a function of number of samples. The walker state in H$_{50}$ is projected with a block size of $5$ (see Appendix.~\ref{sec:lesp} for more information) for demonstration.
    The presented distributions are derived from $6.4 \times 10^5$ equilibrated walkers. 
    For these analyses, the block-averaged distributions are calculated by averaging groups of 50 walkers per block.
}
    \label{fig:sampling}
\end{figure}
The third strategy circumvents the expensive conversion of the walker state to MPS and instead samples the overlap value using the perfect sampler. Following the formalism outlined in Sec.\ref{sec:strategy3}, we investigate the perfect sampling method by selecting an equilibrated walker from systems of varying sizes and assessing the relative error against the number of samples, as depicted in Fig.\ref{fig:sampling}(A).
We sample exactly from the probability distribution given by the walker matchgate state, which corresponds to Eq.~\eqref{eq:sample_from_walker}.
The results are similar to a strategy where the sampling is performed with the probability from the trial, which is not shown here for brevity.
This technique demands a significantly higher number of samples for larger systems due to the exponential decrease of overlap as system size increases. Specifically, the magnitude of the overlap we examined here —H$_{10}(r=a_0)$, H$_{10}(r=5a_0)$, H$_{20}(r=3a_0)$, and H$_{50}(r=3a_0)$—are on the order of $10^{-2}$, $10^{-3}$, $10^{-3}$, and $10^{-6}$, respectively.
We show the overlap distribution for systems of H$_{10}$, H$_{20}$ and H$_{50}$ in Fig.~\ref{fig:sampling}(B), where the overlap values become exponentially smaller as system size increases.

According to Sec.~\ref{sec:sd2mps_White}, the accuracy of SD-to-MPS via the GMPS approach is determined by two factors.
Although a tighter $\epsilon_{\textrm{ad}}$ leads to higher accuracy, it concurrently necessitates a substantially larger $D_{\textrm{w}}$ for convergence. As illustrated in Fig.~\ref{fig:balance}, 
a balanced choice employing a moderately small $\epsilon_\textrm{ad}$ and an appropriate $D_{\textrm{w}}$ appears to be the most effective.
Based on this, we chose the GMPS strategy for the rest of the paper.
Although careful optimization of different methods is beyond the scope of this work, it is worth revisiting this in the future.
Additional numerical details for these strategies are given in the Appendix.~\ref{app:sd2mps_others}.

\begin{figure}
    \centering
    \includegraphics[width=0.5\textwidth]{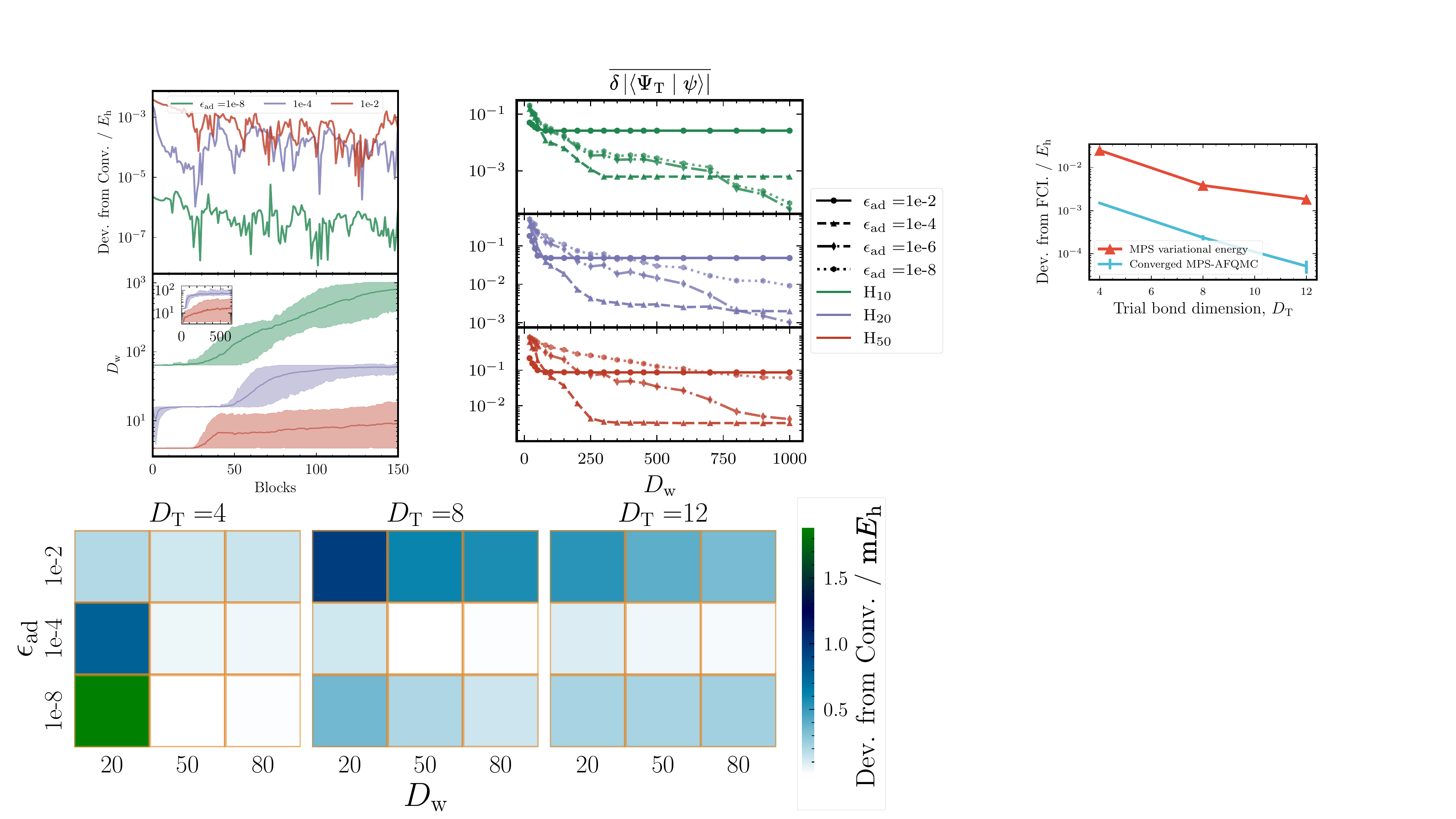}
    \caption{\textbf{Convergence of MPS-AFQMC overlap as a function of system size.} 
    All systems under study are characterized by an interatomic bond distance of $r=3a_0$ and employ the STO-6G basis set.
    The convergence analysis is performed using the mean relative error of $|\langle\Psi_\textrm{T}|\psi\rangle|$ over 100 walkers with respect to $\epsilon_{\textrm{ad}}$ and $D_\textrm{w}$.}
    \label{fig:balance}
\end{figure}

\subsection{Performance of MPS-AFQMC with the GMPS-to-MPS strategy}
Given the comparison between various strategies for implementing the overlap evaluation subroutine for AFQMC, we adopt the GMPS strategy as our primary approach.
In this section, we hope to benchmark how different convergence parameters in the GMPS strategy affect the final AFQMC energies.

\subsubsection{Case study of 1D \ce{H10} chain}
We test our MPS-AFQMC on a one-dimensional stretched H$_{10}$ chain.
We used a minimal basis set, STO-6G, and an interatomic distance of 3$a_0$ to get strong correlation effects. 
With such a small system, it is feasible to transform the MPS trial into a linear combination of multiple Slater determinants (MSD)~\cite{lee_externally_2021}, and the MSD-AFQMC results serve as the reference data for benchmarking MPS-AFQMC,
\begin{equation}
|\Psi_{\textrm{T}}\rangle = \sum_{\substack{s,\ |c_s|>\eta}} c_s|s\rangle,
\end{equation}
where the threshold $\eta$ is used to truncate the MSD expansion~\cite{lee_externally_2021}, which means that SDs with coefficients smaller than $\eta$ are omitted. Here, we set $\eta=10^{-14}$ to obtain a numerically exact representation of the original MPS trial.

We use this numerically exact representation to assess the accuracy of MPS-AFQMC when the overlap could be evaluated exactly.
In other words, in this test, we evaluate the necessary overlap exactly for a given MPS trial.
As illustrated in Fig.~\ref{fig:trial}, with the bond dimension of MPS increasing, the variational energies gradually improve towards the exact energy. 
When using the corresponding MPS as the trial of AFQMC, the deviation from exact full configuration interaction (FCI) energy decreases. The convergence of MPS-AFQMC energy to FCI is much faster than MPS itself, demonstrating the synergy between MPS and AFQMC, even within a small active space. The encouraging results presented in Fig.~\ref{fig:trial} are what is expected when SD-to-MPS conversion can be done exactly or very accurately.
However, in practice, the MPS-AFQMC must make approximations that might cause a large error in the overlap as we saw in Fig~\ref{fig:H20} and Fig.~\ref{fig:balance}.
\begin{figure}
    \centering
    \includegraphics[width=0.4\textwidth]{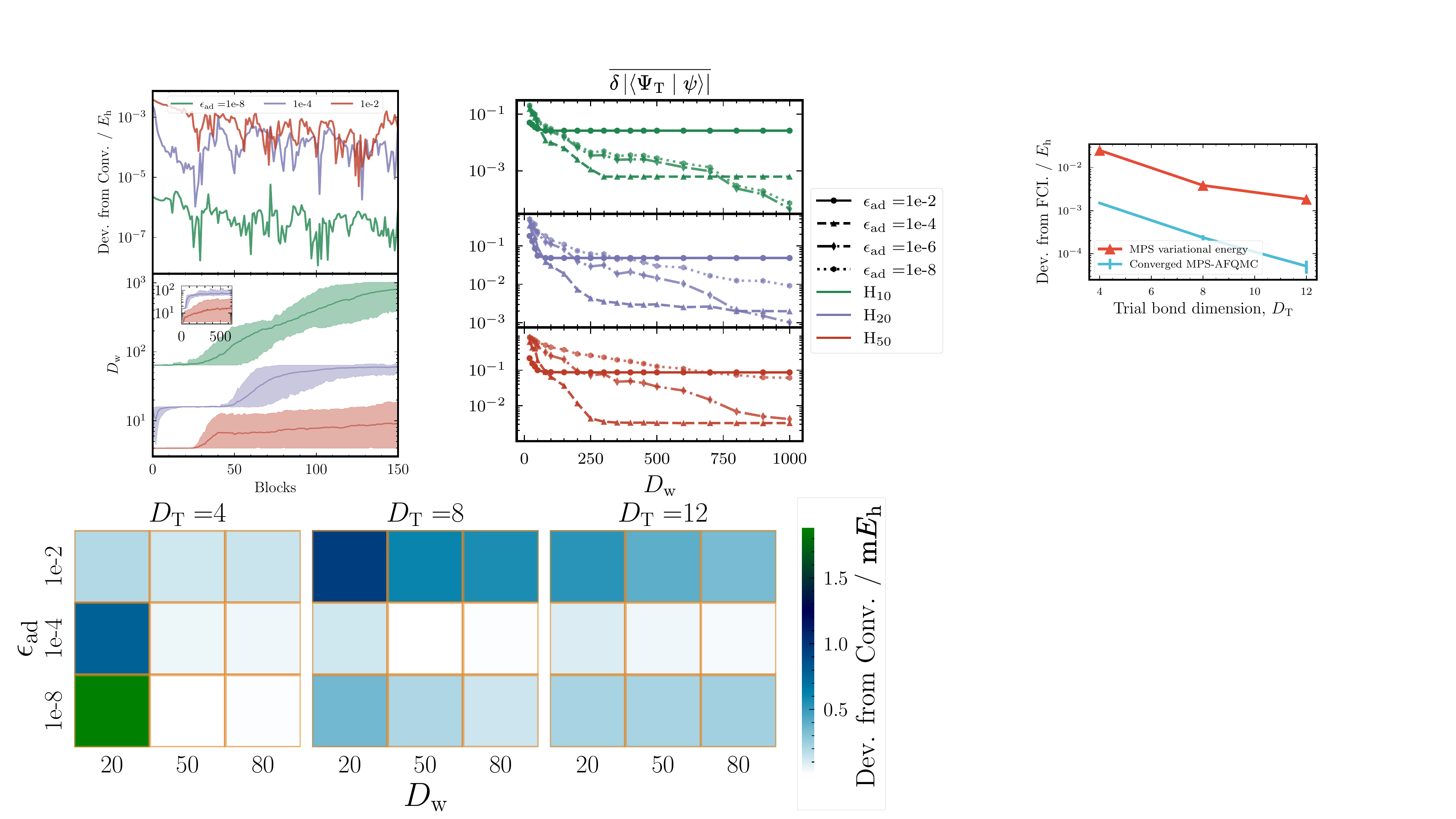}
    \caption{\textbf{Deviation of MPS trial and MPS-AFQMC from the FCI Energy of H$_{10}$.} The MPS-AFQMC calculations are conducted without any approximation in SD-to-MPS. We employ minimal STO-6G atomic orbitals with L\"owdin orthogonalization and the interatomic distance of $r=3a_0$.}
    \label{fig:trial}
\end{figure}

\begin{figure*}
    \centering
    \includegraphics[width=0.75\textwidth]{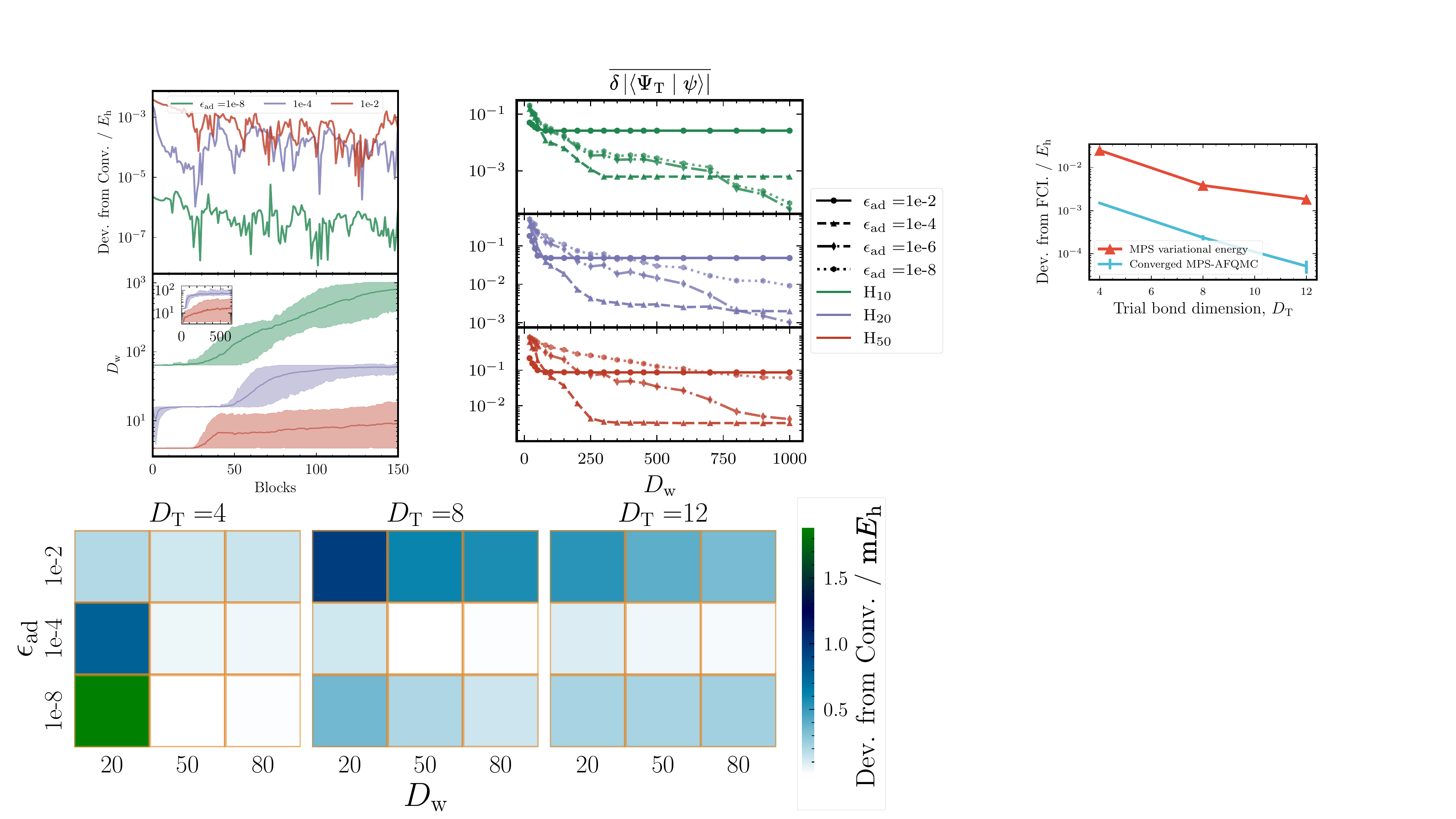}
    \caption{\textbf{Impact of approximating SD-to-MPS on AFQMC energy for H$_{10}$.} 
    The heatmaps detail the deviation of the MPS-AFQMC energy from the converged MPS-AFQMC value shown in Fig.~\ref{fig:trial}.
    }
    \label{fig:mps_msd_afqmc-1}
\end{figure*}

\begin{figure}
    \centering
    \includegraphics[width=0.45\textwidth]{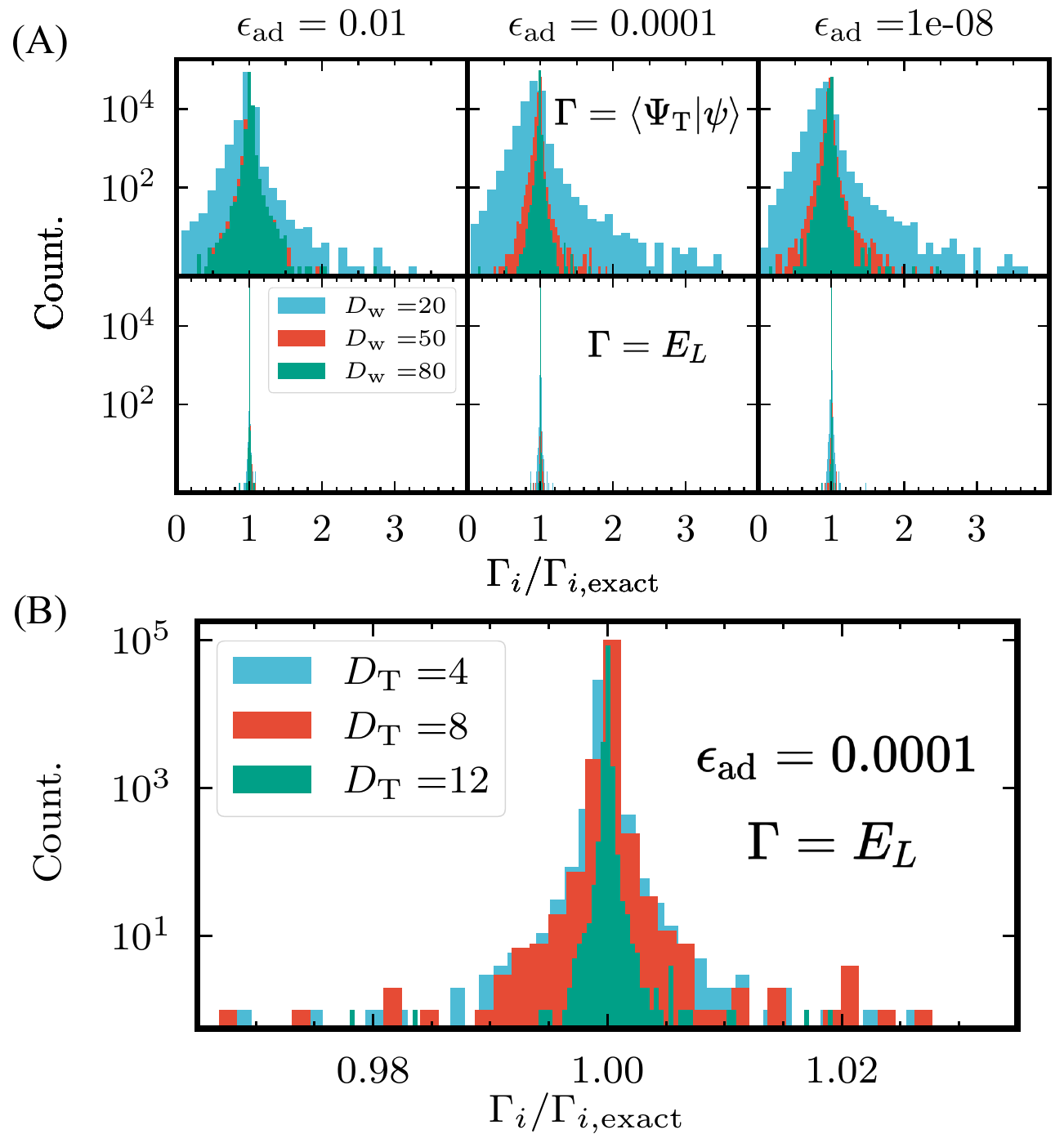}
    \caption{\textbf{Impact of approximating SD-to-MPS on overlap and local energy distribution of H$_{\textrm{10}}$.} 
    The trial are the same as in Fig.~\ref{fig:mps_msd_afqmc-1}, 
    where (A) shows the distribution of the overlap ratio using the MPS trial with $D_\textrm{T}=4$ and 
    (B) displays the distributions of the local energy ratio with trials at different $D_\textrm{T}$.
    The overlap ratio is defined as the ratio of the overlap calculated with a specific $D_{\textrm{w}}$ and $\epsilon_{\textrm{ad}}$ to the converged overlap value. 
    Similarly, the local energy ratio compares the local energy with its converged value. 
    These distributions are aggregated across the SD of 70,000 equilibrated walkers where spin-projection is used for walkers~\cite{purwanto2008eliminating}.
    }
    \label{fig:mps_msd_afqmc-2}
\end{figure}

Now, we investigate the impact of these approximations on the final ph-AFQMC energies using trials with varying quality.
% As demonstrated in Fig.~\ref{fig:chunk_ovlp}(B), for a given walker, the approximation of MPS-AFQMC with the GMPS strategy is determined by two parameters within the SD-to-MPS process: the adaptive threshold $\epsilon_{\textrm{ad}}$ and the maximal bond dimension $D_{\textrm{w}}$ allowed during the Givens gate application.
Within the GMPS-to-MPS method, we vary $\epsilon_{\textrm{ad}}$ and $D_{\textrm{w}}$ and compare the results using three different trials with different bond dimensions ($D_\textrm{T}$).
Here, the error is measured with respect to the corresponding MPS-AFQMC results without any approximations made to the overlap evaluation.
As $\epsilon_{\textrm{ad}}$ approaches 0 and $D_\text{w}$ approaches $2^{10}$ (maximum possible bond dimension for this problem), we expect to recover the exact MPS-AFQMC answer.
As shown in Fig.~\ref{fig:mps_msd_afqmc-1}, for a fixed $\epsilon_{\textrm{ad}}$, increasing $D_{\textrm{w}}$ leads to a smaller deviation from the exact MPS-AFQMC energies. 
A more stringent $\epsilon_{\textrm{ad}}$ gives a smaller deviation at its converged $D_\textrm{w}$, but it also requires a higher $D_{\textrm{w}}$ to achieve energy convergence.
This is consistent with our observation of the overlap convergence study in \cref{sec:overlapconv}.
% These results are shown in Fig.~\ref{fig:mps_msd_afqmc}(A), which also aligns with the findings previously discussed in Fig.\ref{fig:H20}(A)(B).
% It is crucial to note that, for a fixed $\epsilon_{\textrm{ad}}$, achieving convergence in $D_{\textrm{w}}$ is quicker when the trial quality is lower.
Similar to the overlap evaluation, practical applications of GMPS-to-MPS for MPS-AFQMC must balance $\epsilon_\text{ad}$ and $D_\text{w}$ to obtain good accuracy without increasing the cost too much.

In Fig.~\ref{fig:mps_msd_afqmc-2}, we examine the distributions of the overlap and local energy ratios over 70,000 walkers. Here, we use the ratio between the values from specific $D_{\textrm{w}}$ and $\epsilon_{\textrm{ad}}$ and their converged exact values. A narrower distribution centered around $1$ indicates smaller relative errors.
For a fixed adaptive threshold $\epsilon_{\textrm{ad}}$, an increase in $D_{\textrm{w}}$ leads to a narrower distribution of both overlap and local energy ratios, as expected. 
In contrast to the overlap, the local energies are typically much more robust against variations in both $D_{\textrm{w}}$ and $\epsilon_{\textrm{ad}}$.
This robustness largely explains why, despite sometimes significant deviations in overlap from the exact values, the final MPS-AFQMC energy deviation from the exact MPS-AFQMC remains small, with most values falling within the threshold of chemical accuracy (1 m$E_\mathrm{h}$). 
The deviation is even smaller for trials with better quality (higher $D_{\textrm{T}}$), as shown in Fig.~\ref{fig:mps_msd_afqmc-2}.
Similar observations are obtained for H$_{50}$, as shown in Fig.~\ref{fig:uhfmps}, where the trial is the UHF wavefunction.

The robustness of local energy calculations can be attributed to the zero-variance principle~\cite{assaraf_zero-variance_1999}.
More concretely, let us
consider the following error model:
\begin{align}
|\psi_\text{true}\rangle &= |\psi\rangle + |\epsilon\rangle\\
\langle \Psi_{\textrm{T}} | \psi_\text{true}\rangle &= \langle \Psi_{\textrm{T}}|\psi\rangle + 
\langle \Psi_{\textrm{T}}|\epsilon\rangle
\end{align}
where $|\epsilon\rangle$ quantifies the error in SD-to-MPS and the deviation from the true walker wavefunction. We see that the relative overlap error, $\langle \Psi_{\textrm{T}}|\epsilon\rangle/\langle \Psi_{\textrm{T}}|\psi_\text{true}\rangle$, can easily become large because the true value in the denominator can be small.
%Here, $|\epsilon\rangle$ appears due to the approximate conversion of Slater determinants into MPS.
Now, we consider another error model for the local energy evaluation:
\begin{align}
|\Psi_{\textrm{T}}\rangle &= |\Psi_0\rangle + |\epsilon_H\rangle\\ \label{eq:errormodel}
\langle \Psi_{\textrm{T}} |\hat{H} | \psi_\text{true}\rangle &= E_0\langle\Psi_{\textrm{T}}|\psi_\text{true}\rangle 
+\langle\epsilon_H | (\hat{H}-E_0) | \psi_\text{true}\rangle
\end{align}
where $|\epsilon_H\rangle$ quantifies the distance between $|\Psi_\text{T}\rangle$ and the exact ground state, $|\Psi_0\rangle$.
It is possible that the second term in \cref{eq:errormodel} is much smaller than the first term, especially if $|\Psi_{\textrm{T}}\rangle$ is close to $|\Psi_0\rangle$.
In that case, dividing \cref{eq:errormodel} by $\langle \Psi_{\textrm{T}}|\psi_\text{true}\rangle$ will give us an accurate estimation of the local energy even when $\langle \Psi_{\textrm{T}}|\psi_\text{true}\rangle$ is poorly approximated. 
This analysis suggests that MPS-AFQMC may achieve high accuracy even with a relatively loose adaptive threshold $\epsilon_{\textrm{ad}}$ and a smaller $D_{\textrm{w}}$.
These findings also provide insights for quantum-classical hybrid quantum Monte Carlo methods~\cite{huggins_unbiasing_2022,lee_response_2022,kiser_classical_2023}, where the evaluation of overlap is noisy and correct only up to some additive error and yet the final AFQMC energies were found to be accurate.

\begin{figure}
    \centering
    \includegraphics[width=0.4\textwidth]{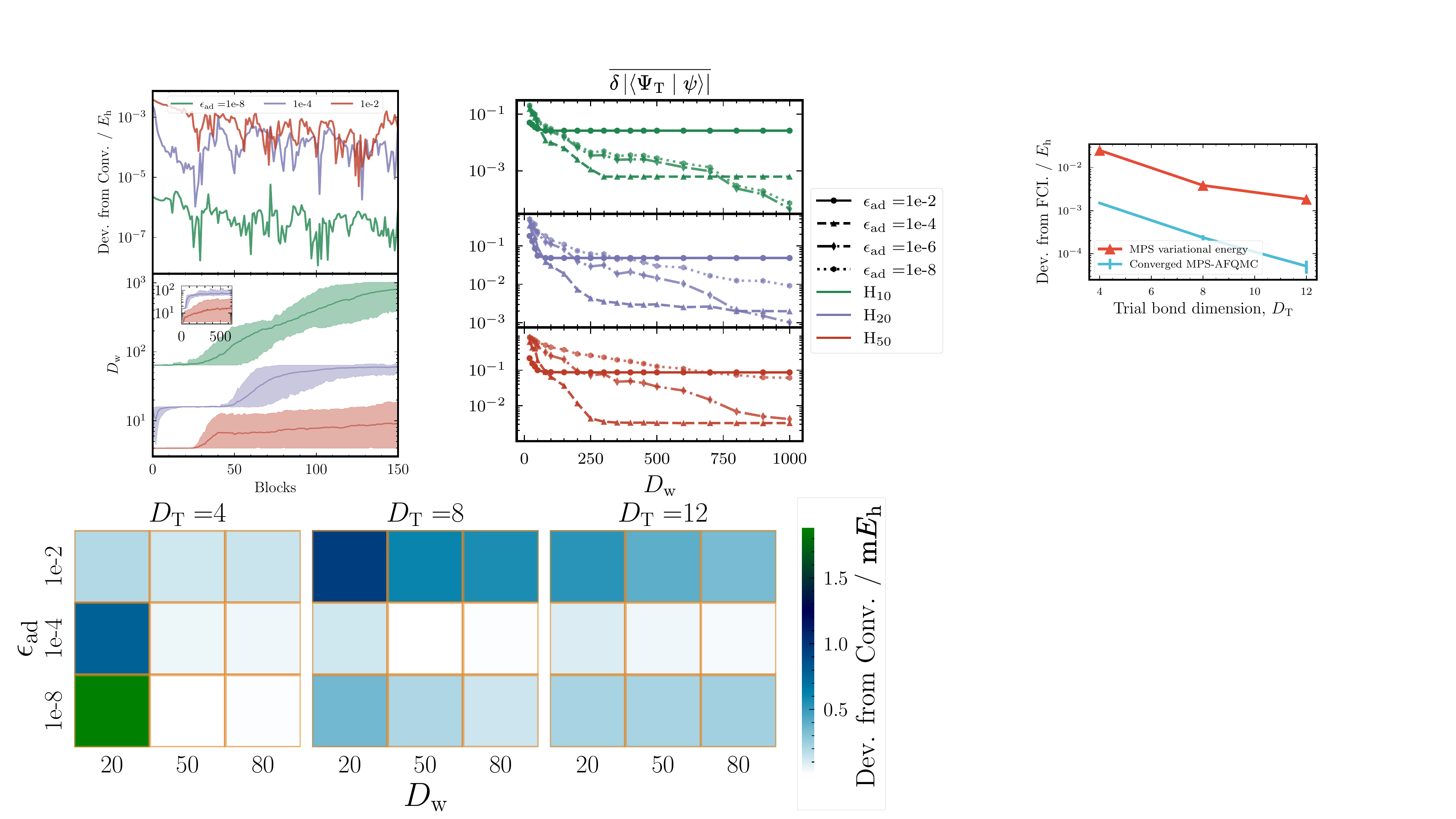}
    \caption{\textbf{MPS-AFQMC energy error and walker bond dimension as a function of imaginary time with various thresholds.} 
    The trial used here is the same as in Fig.~\ref{fig:mps_msd_afqmc-1} with $D_\textrm{T}=4$, with the maximum allowed bond dimension set at 2000. Spin-projection was used in the initial walker set-up. 
}
    \label{fig:evolution1}
\end{figure}

\begin{figure}
    \centering
    \includegraphics[width=0.4\textwidth]{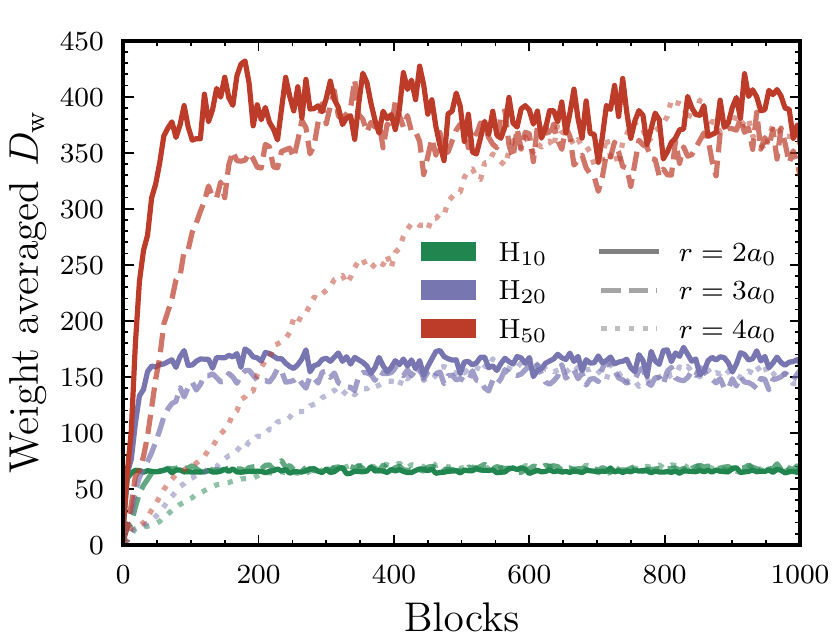}
    \caption{\textbf{Walker bond dimension as a function of imaginary time at various system sizes.} 
    The weight-averaged $D_{\textrm{w}}$ for AFQMC walkers employing a unrestricted Hartree-Fock wavefunction as the trial and initial walker for H$_{10}$, H$_{20}$, and H$_{50}$ systems, evaluated at various interatomic distances $r\in\{2a_0, 3a_0, 4a_0\}$. The adaptive threshold is fixed at $\epsilon_{\textrm{ad}}=10^{-3}$. 
    Each analysis block encompasses 12 steps, with the time step set at $0.01E_{\mathrm{h}}^{-1}$.
}
    \label{fig:evolution2}
\end{figure}
In Fig.~\ref{fig:evolution1}, we fix the trial at $D_{\textrm{T}}=4$, corresponding to the trial used in Fig.~\ref{fig:mps_msd_afqmc-1}, and present the evolution of MPS-AFQMC energy errors from the exact MPS-AFQMC at each step and walker's bond dimension as a function of imaginary time. 
We observe that, without a cap on the walker's bond dimension, a tighter threshold correlates with reduced errors and increased stationary bond dimension. The walker bond dimension initially increases and can grow very fast before reaching a plateau. 
However, as we showed in Fig.~\ref{fig:mps_msd_afqmc-1} and analyzed in Fig.~\ref{fig:mps_msd_afqmc-2}, one appears to need a much smaller walker bond dimension to obtain good MPS-AFQMC energies.

\subsubsection{System size and bond length dependence}
We have previously analyzed the increasing hardness of calculating overlap values for systems as their size increases, as illustrated in Figs.~\ref{fig:H20} and~\ref{fig:sampling}. 
Here, we examine how the walker bond dimension changes during imaginary time propagation as a function of system size. This result is shown in Fig.~\ref{fig:evolution2}.
The trend in these imaginary-time trajectories is similar to H$_{10}$, as we discussed in Fig.~\ref{fig:evolution1}.
Interestingly, while the interatomic distance does not affect the final saturated bond dimension, it does influence the rate of increase of the bond dimension; systems with longer interatomic distances exhibit a slower growth in bond dimension.
Contrary to typical DMRG studies for one-dimensional systems, where bond dimensions needed for a specified accuracy remain fixed across molecular chains of varying lengths due to the area law, in AFQMC calculations, the walker's bond dimension grows with system size. 
In DMRG studies, longer interatomic distances necessitate MPS with smaller bond dimensions, which also contrasts with AFQMC simulations. 

\subsection{Applications of MPS-AFQMC}
In this section, we apply MPS-AFQMC to a variety of systems, including one-dimensional long chains with varying interatomic distances, two-dimensional lattices, and scenarios involving large basis sets that incorporate our MPO-free algorithm.
\subsubsection{\ce{H50} in minimal basis}
\begin{figure*}
    \centering
    \includegraphics[width=0.9\textwidth]{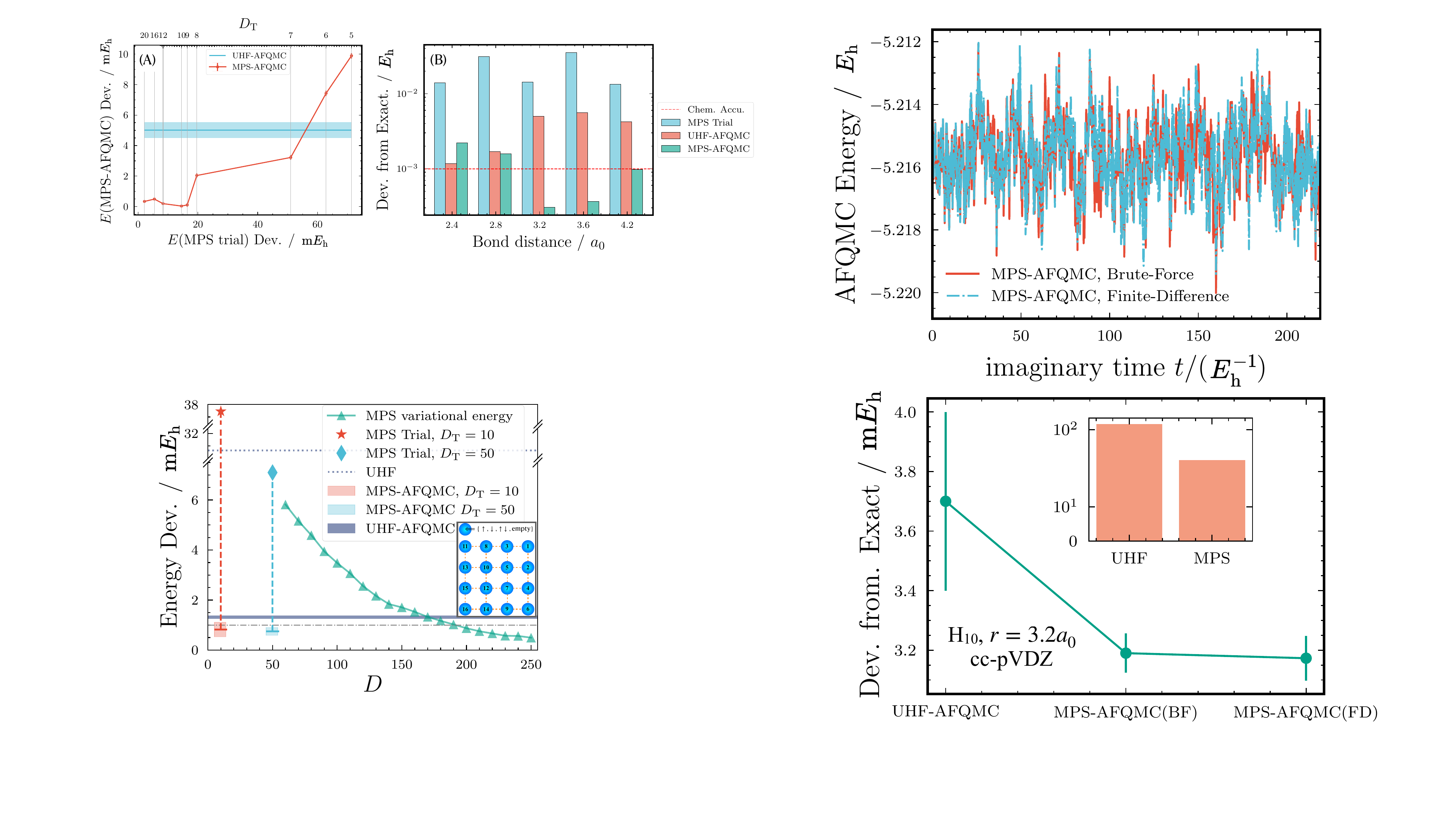}
    \caption{\textbf{The MPS-AFQMC energy deviation for H$_{50}$ at different bond distances and comparison with UHF-AFQMC.} (A) The dependence of the MPS-AFQMC energy deviation on the trial energy deviation. The UHF-AFQMC is also plotted, and the corresponding UHF trial energy deviation is 0.459$E_\textrm{h}$. We present the corresponding overlap/local energy distribution in Fig.~\ref{fig:uhfmps}. 
    (B) The MPS-AFQMC, MPS, and UHF-AFQMC energy for H$_{50}$ across different bond distances. The error bars of MPS-AFQMC calculations are all below $0.1$mEh. The total number of walkers is set to 640, and the time step used is $0.01 E_\textrm{h}^{-1}$ with population control performed every five steps.
    }
    \label{fig:bar}
\end{figure*}
\add{
    We test our approach on H$_{50}$, 
    where the ground state can be computed nearly exactly with DMRG~\cite{hachmann_multireference_2006}.
    Many single-reference methods, such as low-order 
    coupled cluster method (eg., CCSD(T))~\cite{hachmann_multireference_2006} 
    and UHF-AFQMC~\cite{simons_collaboration_on_the_many-electron_problem_towards_2017}, can struggle for some bond distances, making this a good testbed for new methods.
}
Using the GMPS approach with $\epsilon_{\textrm{ad}}=10^{-3}$ and $D_\text{w} = 50$, we apply MPS-AFQMC to calculate the electronic ground state energy of H$_{50}$ at various interatomic distances with STO-6G basis.
In Fig.~\ref{fig:bar}, we calculate the deviation of the ground state energy of H$_{50}$ from the exact reference provided in~\cite{hachmann_multireference_2006}.

In Fig.~\ref{fig:bar}(A), we present the MPS-AFQMC energy and the corresponding MPS variational energy using different MPS trials with varying bond dimensions $D_\textrm{T}$ for H$_{50}$ at $r=3.2a_0$.
MPS-AFQMC significantly improves accuracy compared to the trial's variational energy. Furthermore, as the trial quality improves, the accuracy of MPS-AFQMC also improves, which aligns with our initial results observed on smaller systems, as illustrated in Fig.~\ref{fig:trial}.

Next, we apply MPS-AFQMC to various interatomic distances and compare the results with the variational MPS energy and UHF-AFQMC taken from Ref.~\citenum{simons_collaboration_on_the_many-electron_problem_towards_2017}. These results are shown in Fig.~\ref{fig:bar}(B).
Instead of fixing the bond dimension of the MPS trial for all interatomic distances, we pick trials with a similar energy deviation from the true ground state at each distance. 
% Achieving the same accuracy for a shorter distance becomes harder for DMRG calculations, as it requires a larger bond dimension MPS to capture the delocalization of electrons.
\add{
    At shorter distances (metallic regime), an MPS with localized orbitals requires a larger bond dimension to capture long-range correlations effectively. As a result, the MPS trial does not outperform UHF at small bond lengths.
}
Here, the MPS trials have bond dimension $D_\textrm{T}=20, 10, 10, 8, 5$ for different interatomic distances $r=2.4a_0,2.8a_0,3.2a_0,3.6a_0,4.2a_0$, respectively.
For calculating the singlet ground state, MPS-AFQMC's accuracy could improve by using spin-adapted MPS~\cite{sharma_spin-adapted_2012,*keller2016spin} or spin-projected MPS~\cite{li_spin-projected_2017}, or by employing identical spin-up and spin-down walker wavefunctions. For simplicity, we opted for the latter approach, which is referred to as ``spin-projection'' in the AFQMC community~\cite{purwanto2008eliminating}. Following Eq.~\eqref{eq: propagate}, where both spin matrices use the same propagator, they remain identical throughout the imaginary time evolution.
The errors in MPS-AFQMC energy at relatively long distances (specifically at $r\in\{3.2a_0, 3.6a_0, 4.2a_0\}$) fall below the threshold of chemical accuracy (i.e., 1 m$E_{\mathrm{h}}$).
\add{In comparison, we observed AFQMC with a selected 
   configuration interaction trial (using up to $10^5$ determinants) performs less accurate than UHF-AFQMC, even for a smaller H$_{20}$ system
   at a stretched geometry (results not shown).
   }

We also explored entanglement projection to potentially speed up the conversion from SD to MPS and improve accuracy by controlling the entanglement growth in walker wavefunctions. The efficacy of this technique on AFQMC energy is preliminarily explored in Fig.~\ref{fig:lesp} in Appendix~\ref{sec:lesp}, warranting further investigation for potential enhancements.
MPS-AFQMC is as accurate as UHF-AFQMC at relatively shorter distances and produces much more accurate results for larger bond lengths when the underlying entanglement is smaller in a localized basis.

\add{Lastly, we compared the computational time of MPS-AFQMC and UHF-AFQMC for various system sizes, 
as shown in Fig.~\ref{fig:scaling}. 
Our results indicate that MPS-AFQMC exhibits similar scaling 
behavior to UHF-AFQMC, though with a larger prefactor.
We anticipate future improvements in the computational efficiency 
of MPS-AFQMC, particularly by optimizing 
the SD-to-MPS algorithm or leveraging GPU acceleration of tensor contraction
and GPU parallelization over walkers. 
In the past, AFQMC with SD and MSD trials was significantly accelerated by GPUs~\cite{malone2022ipie, Huang2024Jun,jiang2024improved}.   
}

\subsubsection{Two dimensional lattice: $4\times 4$ hydrogen lattice}\label{sec:4by4}
Expanding our exploration beyond one-dimensional systems, we investigate the effectiveness of our MPS-AFQMC in handling more complex two-dimensional systems. DMRG requires much larger bond dimensions for two-dimensional systems than the one-dimensional case, primarily due to entanglement growth beyond the one-dimensional area law.

Our test system involves a $4\times 4$ square lattice of hydrogen atoms with equivalent nearest-neighbor distances.
The orbitals for both DMRG and MPS-AFQMC calculations are ordered with the genetic ordering algorithm~\cite{olivares2015ab} (as shown in the inset).
The bond dimension of the MPO in the test case is 562, leading to an enlarged half-rotated trial MPS, $\hat{H}|\Psi_\textrm{T}\rangle$, with a bond dimension of $562 \times D_\textrm{T}$ which is impractical for local energy computations. We utilize variational optimization, starting with approximately compressed MPO and MPS as initial guesses. This strategy successfully reduces the bond dimension of the half-rotated trial MPS to 100 or less, enabling significantly more efficient calculations.
For more complex active space models or full space calculations where constructing the MPO is impractical or compressing the half-rotated MPS trial proves challenging, we suggest using the MPO-free approach in Section~\ref{sec:virt}, for which we later present numerical results in Sec.~\ref{sec:h10_dz}.
With these details, the ground state energies were computed with different methods and are shown in Fig.~\ref{fig:h4x4}.

As shown in Fig.~\ref{fig:h4x4}, the performance of DMRG in this two-dimensional system demonstrates a remarkably slow convergence with respect to the bond dimension toward the exact ground state energy.
In comparison, using the MPS trials with $D_\textrm{T}=10$ and $D_\textrm{T}=50$, MPS-AFQMC is highly effective for this system and yields energies within chemical accuracy from the exact answer in both cases. On the other hand, DMRG reaches similar accuracy at substantially higher bond dimensions of about $200$. 
However, employing the UHF trial leads to slightly worse energy, despite the UHF wavefunction having slightly better energy than $D_\textrm{T}=10$ MPS trial.
These calculations used 128 walkers and a time step of $0.1$E$_\textrm{h}^{-1}$. As demonstrated in Fig.~\ref{fig:bias}, Population bias and timestep errors were negligible.
This is a promising result because MPS-AFQMC could be used for systems where MPS alone struggles to achieve high accuracy.
\add{
    We note that the current example serves as a proof-of-concept 
    for MPS-AFQMC.
    Although DMRG is more computationally efficient than MPS-AFQMC for the current system, 
    as DMRG's computational cost increases rapidly with system size, 
    we anticipate MPS-AFQMC's advantage to become more pronounced
    for larger systems after further optimization.
} 

\begin{figure}
    \centering
    \includegraphics[width=0.45\textwidth]{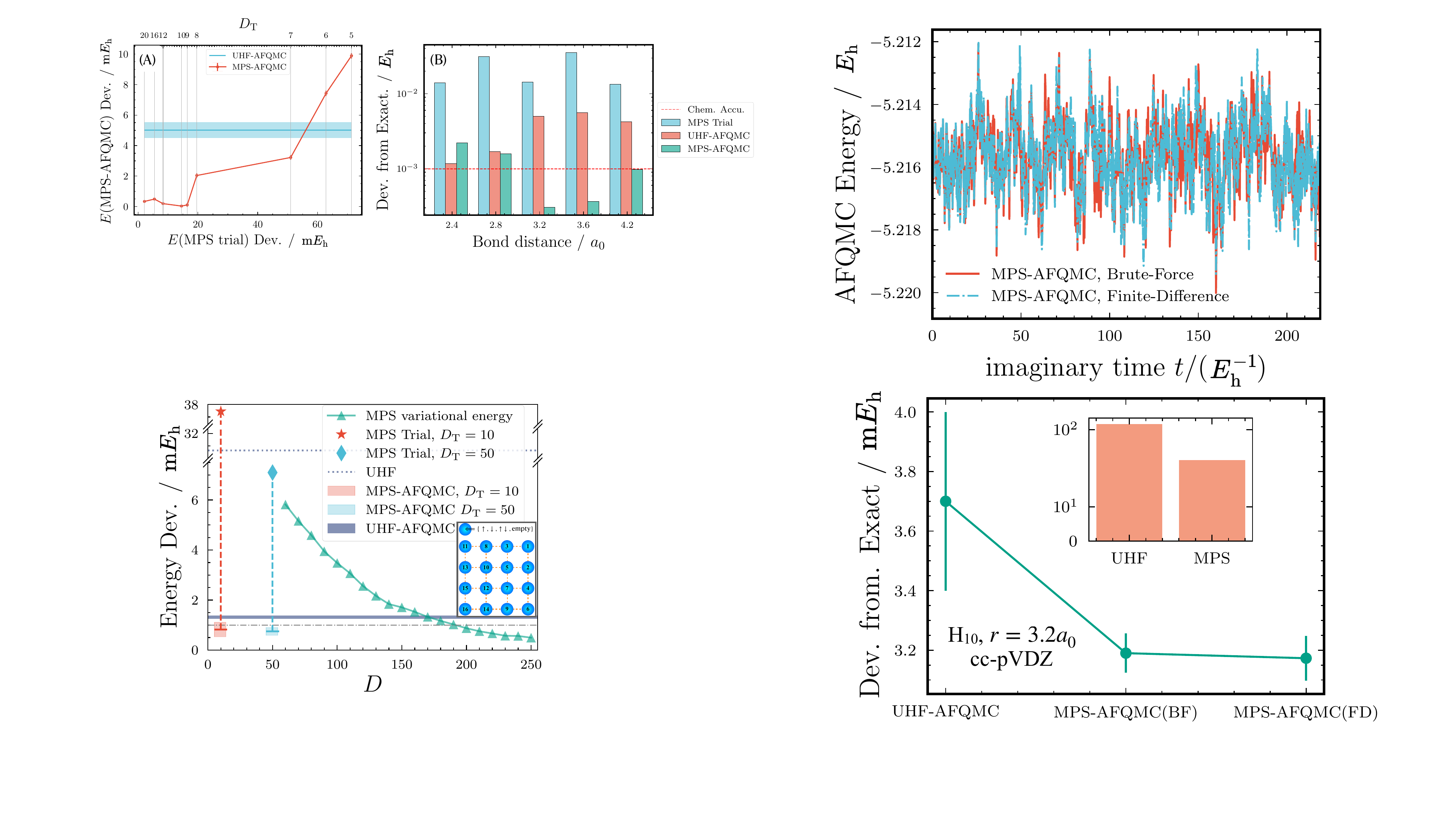}
    \caption{\textbf{Application of MPS-AFQMC to two-dimensional hydrogen lattice with $r=4.2a_0$.} Energy deviation from exact results using MPS-AFQMC, UHF-AFQMC, and DMRG with different bond dimensions.
    \add{The dashed dotted line coresponds to the 1 m$E_{h}$ error.}
    }
    \label{fig:h4x4}
\end{figure}
\subsubsection{Towards larger basis set simulations: \ce{H10} with cc-pVDZ}\label{sec:h10_dz}
\begin{figure}
    \centering
    \includegraphics[width=0.4\textwidth]{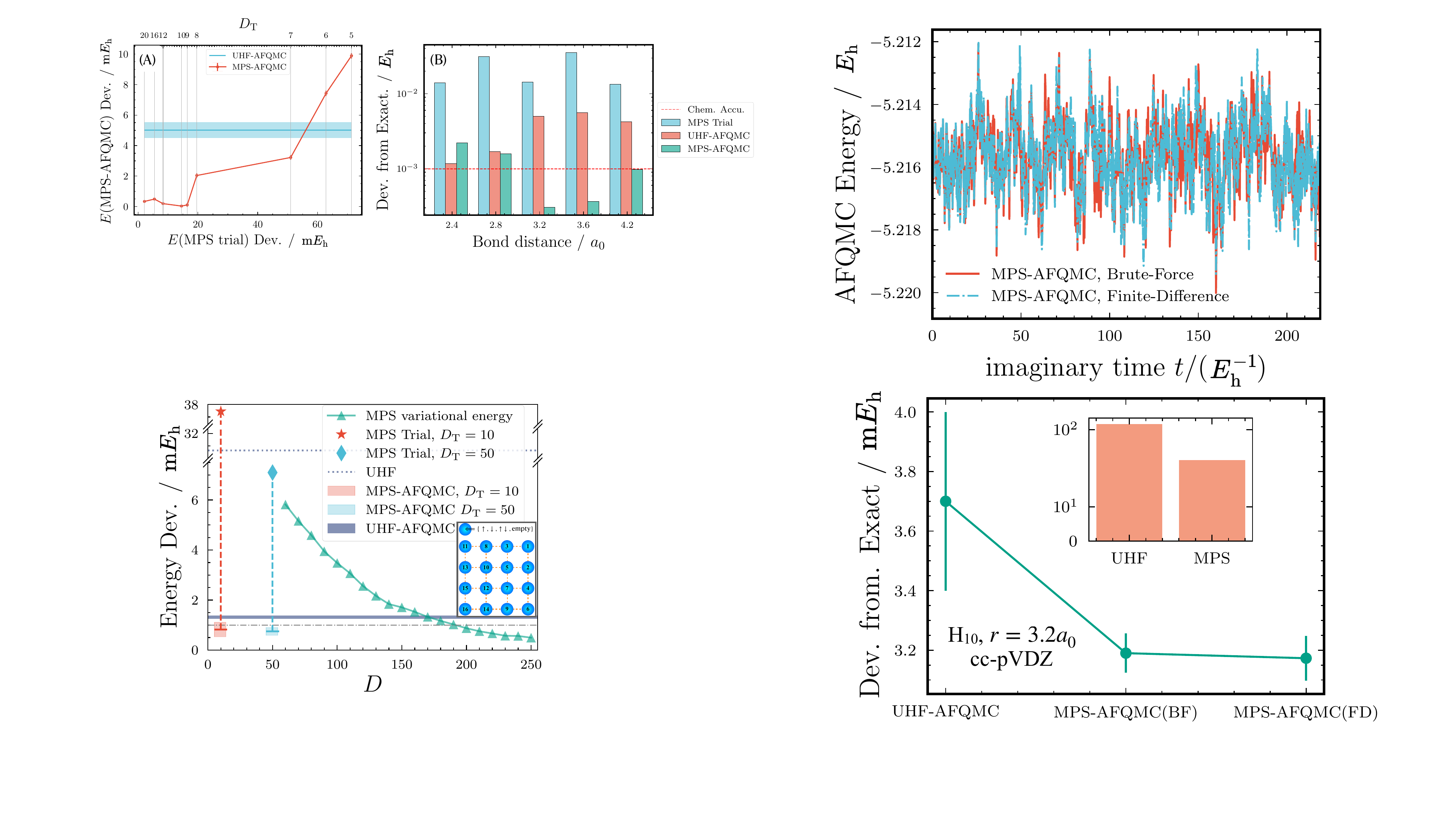}
    \caption{
    \textbf{Application of MPS-AFQMC beyond minimal basis.}
    The interatomic distance for H$_{10}$ is set at $3.2a_0$. 
    (A) The change in AFQMC energy over imaginary time evolution, comparing the results from brute-force MPS-AFQMC to those obtained using finite difference MPS-AFQMC. 
    (B) The energy deviation from the exact reference. The finite difference step is set to $10^{-5}$ for the one-body energy term and $10^{-4}$ for the two-body energy term.
}
    \label{fig:virtual}
\end{figure}
\add{
    We have demonstrated the effectiveness of MPS-AFQMC for systems 
    dominated by static correlation, such as stretched hydrogen chains. 
    Beyond these simple benchmarks, we argue that the most compelling feature of our framework is its ability to extend the capabilities of both DMRG and AFQMC. 
    %MPS-AFQMC is not intended to compete with or replace DMRG, where DMRG is already highly efficient due to the low-entanglement structure. 
    For example, MPS-AFQMC could enable large basis set calculations for systems with strong correlation within active space and non-negligible dynamic correlation outside active space. In these cases, neither method can handle alone. 
    % Additionally, the MPS trial in the active space can be paired with a less accurate but computationally efficient trial wavefunction for the frozen orbitals.
}

% The DMRG algorithm has been mainly optimized for and applied as the FCI solver within a so-called active space. Outside the active space, the remaining dynamic correlation from the larger basis set is missing. However, these dynamic correlations are also important in many quantum systems when one compares the computational results with experiments.
% Our MPS-AFQMC can use MPS within the active space as a trial for AFQMC but seamlessly incorporate dynamic correlation outside the active space. 
\add{As shown in Fig.~\ref{fig:wflow}(a), our approach leverages the strengths of both methods by limiting DMRG to the active space while using AFQMC to capture the remaining dynamic correlation.}
As a proof-of-concept benchmark, we employ MPS-AFQMC to calculate the correlation energy for H$_{10}$ with the cc-pVDZ basis set. We generate our MPS trial within an active space of 10 electrons in 10 orbitals (10e, 10o). 
% The active space was chosen by localizing the unrestricted natural orbitals (UNO)~\cite{bofill_unrestricted_1989} and subsequently employing the Pipek-Mezey (PM) localization~\cite{pipek_fast_1989} as implemented in \textsc{PySCF}. 
% The orbitals are then reordered in real space based on the magnitudes of their coefficients to maximize the efficacy of the variational MPS calculations.
% We utilize the resulting converged MPS wavefunction as the trial in MPS-AFQMC calculations. 
For demonstration, we transform this MPS into a linear combination of Slater determinants, which is possible because this is done in a small active space. 
The corresponding results are illustrated in Fig.~\ref{fig:virtual}, where the MPS-AFQMC employing finite difference refers to the algorithm introduced in Section~\ref{sec:virt}.
We found an excellent agreement between the finite difference algorithm and the brute force approach, demonstrating its correctness and future utility.

\add{ While the improvement of MPS-AFQMC over UHF-AFQMC appears modest, 
this serves as a proof-of-concept demonstration for applying MPS-AFQMC to 
larger systems with MPS restricted to the active space trial. 
For larger systems, UHF-AFQMC is expected to become less reliable, 
often requiring problem-specific trials, such as physics-informed 
Slater determinant expansions~\cite{simons_collaboration_on_the_many-electron_problem_towards_2017}. 
In these cases, our framework enables capturing dynamic correlation beyond 
the active space by using the MPS trial for the active space alongside a 
complementary trial wavefunction (e.g., UHF) for dynamic correlation, 
which will be explored in future work.}

\section{Conclusions and outlooks}\label{sec:conclusion}
In conclusion, we have, for the first time, realized using MPS trial wavefunctions within fermionic AFQMC calculations. In MPS-AFQMC, evaluating the overlap between MPS and a walker state (Slater determinant) is formally and practically the computational bottleneck.
We also proved that the estimation of this overlap up to a multiplicative error is \textit{\#P-hard}. Despite the hardness of this problem, we proposed several heuristics to approximate this overlap evaluation. We found that inaccurate overlap evaluation could still lead to accurate energy estimations due to the zero variance principle in ph-AFQMC.
The strategy we recommend is the one that applies Givens rotation gates to a product state to obtain an MPS representation of walker Slater determinants. This choice was made after examining other approaches, including brute-force conversion, bipartite decomposition, and perfect bitstring sampling. We also discussed balancing accuracy and computational efficiency for practical applications with suitable thresholds and maximum walker bond dimensions.
 
We then showcased accurate MPS-AFQMC results over prototypical strongly correlated systems.
We tested MPS-AFQMC on minimal and double-zeta basis models of one-dimensional hydrogen chains and a minimal basis model of a two-dimensional hydrogen lattice.
For these systems, our method significantly improves over ph-AFQMC with a most commonly used trial wavefunction (i.e., single Slater determinant).
Furthermore, we also proposed an algorithm to evaluate force bias and local energy via differentiation of certain overlap evaluations, which removes the significant overhead introduced by prior works~\cite{huggins_unbiasing_2022,kiser_classical_2023,Huang2024Apr}.
This allows for using an MPS trial obtained within the active space and computing the correlation energy of the entire problem only based on the overlap within the active space. 
Importantly, this low-scaling approach is generalizable to other types of trial wavefunctions.
\add{We also note that while energy is a useful benchmark, correlation functions are crucial for studying 
   phases and phase transitions. Evaluating correlation functions in AFQMC is  
   challenging for operators that do not commute with $\hat{H}$, though recent advancements like 
   algorithmic differentiation offer potential solutions for simple trials~\cite{mahajan2023response,jiang2024improved}. 
}

Our findings open new avenues for fermionic simulations where two state-of-the-art techniques, QMC and tensor network methods, can create synergistic impacts.
\add{
Our method combines the strengths 
of DMRG and AFQMC to achieve, in principle, higher accuracy 
in systems with mixed strong and weak correlations, going beyond the capability of UHF-AFQMC or DMRG alone.  
While our implementation is not yet fully optimized, 
the current bottleneck is the SD-to-MPS conversion 
algorithm. Future work will focus on improving computational efficiency 
through GPU acceleration, parallelization, and optimized SD-to-MPS algorithms
~\cite{malone2022ipie,Huang2024Jun,jiang2024improved}. 
SD-to-MPS conversion is an area of broad interest, and more efficient algorithms may emerge~\cite{Liu2024Aug}. 
}
% Furthermore, integrating GPU acceleration into both the SD-to-MPS subroutine and the AFQMC algorithm~\cite{ipie} can substantially enhance the efficiency of MPS-AFQMC. Such an implementation will extend its applicability to more complex systems.
We also noted that our insight partially explains the success of
quantum-classical hybrid QMC methods~\cite{huggins_unbiasing_2022}. 
We hope that the insight and findings presented in our work will lead to further developments in state-of-the-art fermionic simulation methods.
\section*{Acknowledgments}
T.J. and J.L. were supported by Harvard University’s startup funds, the DOE Office of Fusion Energy Sciences ``Foundations for quantum simulation of warm dense matter'' project, and Wellcome Leap as part of the Quantum for Bio Program. T.J. thanks Jiajun Ren for stimulating discussions and Zhigang Shuai for encouragement and support.
Computations were carried out partly on the FASRC cluster supported by the FAS Division of Science Research Computing Group at Harvard University. This work also used the
Delta system at the National Center for Supercomputing Applications through allocation CHE230032 and CHE230088 from the
Advanced Cyberinfrastructure Coordination Ecosystem:
Services \& Support (ACCESS) program, which is supported by National Science Foundation grants \#2138259,
\#2138286, \#2138307, \#2137603, and \#2138296.
% \nocite{*}

% \clearpage
% \onecolumngrid

\begin{appendix}
\renewcommand{\thefigure}{\thesection\arabic{figure}}
\setcounter{figure}{0} % Reset figure counter 
\section{Additional details on ph-AFQMC}\label{app:afqmc}
This appendix complements the theory of ph-AFQMC in Section~\ref{sec:afqmc}.
With the Cholesky decomposition of the two-electron integrals, the Hamiltonian in Eq.~\eqref{eq:qcham} can be expressed as
\begin{equation}
    \hat{H}=\hat{H}_1+\hat{H}_2 = \hat{v}_0-\frac{1}{2} \sum_{\gamma=1}^{N_\gamma} \hat{v}_\gamma^2
\end{equation}
where 
\begin{equation}
    \hat{v}_0 = \sum_{pq} (h_{pq} -\frac{1}{2}\sum_r (pr|rq))a_{p}^\dagger a_{q}
\end{equation}
\begin{equation}
\hat{v}_\gamma=i \sum_{p q} L_{p q}^\gamma \hat{a}_{p }^{\dagger} \hat{a}_{q}   \label{eq:chol}
\end{equation}
The short-time propagator is approximated using a Trotter decomposition:
\begin{equation}
    e^{-\Delta \tau \hat{H}}=e^{-\frac{\Delta\tau}{2} \hat{v}_0} e^{\frac{\Delta\tau}{2} \sum\hat{v}_{\gamma}^2} e^{- \frac{\Delta\tau}{2} \hat{v}_0}+\mathcal{O}\left(\Delta \tau^2\right)\label{eq:trotter}
\end{equation}
The Hubbard-Stratonovich transformation enables the rewriting of the two-body propagator as an integration over the auxiliary fields $\textbf{x}=(x_1, x_2, \cdots, x_{N_{\gamma}})$ as follows,
\begin{equation}
    e^{\Delta \frac{\tau}{2} \sum\hat{v}_{\gamma}^2} = \int \mathrm{d} \mathbf{x} p(\mathbf{x}) e^{-\sqrt{\Delta \tau} \mathbf{x}\cdot \mathbf{\hat{v}}}
\end{equation}
where $p(\mathbf{x})$ is the standard normal distribution of the auxiliary fields. 
\begin{equation}
    \hat{B}(\mathbf{x}) = e^{-\frac{\Delta\tau}{2} \hat{v}_0} e^{-\sqrt{\Delta \tau} \mathbf{x}\cdot \mathbf{\hat{v}}}e^{-\frac{\Delta\tau}{2} \hat{v}_0}\label{eq:Bprop}
\end{equation}
In practice, a mean-field shift $\left\langle\mathbf{\hat{v}}\right\rangle_\textrm{T}\equiv\left\langle\Psi_\textrm{T}\left|\mathbf{\hat{v}}\right| \Psi_\textrm{T}\right\rangle$ subtraction trick is often used to reduce the statistical fluctuations~\cite{motta_ab_2018, rom_shifted-contour_1997},
\begin{equation}
    \hat{H}_1^{\prime}=\hat{H}_1-\sum_{\gamma=1}^{N_\gamma} \hat{v}_\gamma\left\langle\hat{v}_\gamma\right\rangle_\textrm{T}+\frac{1}{2} \sum_{\gamma=1}^{N_\gamma}\left\langle\hat{v}_\gamma\right\rangle_\textrm{T}^2
\end{equation}
\begin{equation}
    \hat{H}_2^{\prime}=-\frac{1}{2} \sum_{\gamma=1}^{N_\gamma}\left(\hat{v}_\gamma-\left\langle\hat{v}_\gamma\right\rangle_\textrm{T}\right)^2
\end{equation}
which changes the expression of the propagator Eq.~\eqref{eq:Bprop} accordingly.

The force bias in Eq.~\eqref{eq: propagate} is defined as
\begin{equation}
    \overline{\mathbf{x}}_i(\Delta \tau, \tau)=
    -\sqrt{\Delta \tau} \frac{\left\langle\Psi_{\textrm{T}}\left|\hat{\mathbf{v}}-\langle\hat{\mathbf{v}}\rangle_\textrm{T}\right| \psi_i(\tau)\right\rangle}{\left\langle\Psi_{\textrm{T}}|\psi_i(\tau)\right\rangle}.\label{eq:fb}  
\end{equation}

In traditional AFQMC, the periodic orthogonalization of walkers is employed to eliminate round-off numerical errors, thus maintaining numerical stability, in our MPS-AFQMC algorithm, the orthogonalization is done at every step as required by the conversion of SD to MPS (SD-to-MPS). Population control is periodically utilized to remove walkers with small weights and replicate walkers with large weights.
\section{Additional methods and results for SD-to-MPS}\label{app:sd2mps_others}
\begin{figure}
    \centering
    \includegraphics[width=0.45\textwidth]{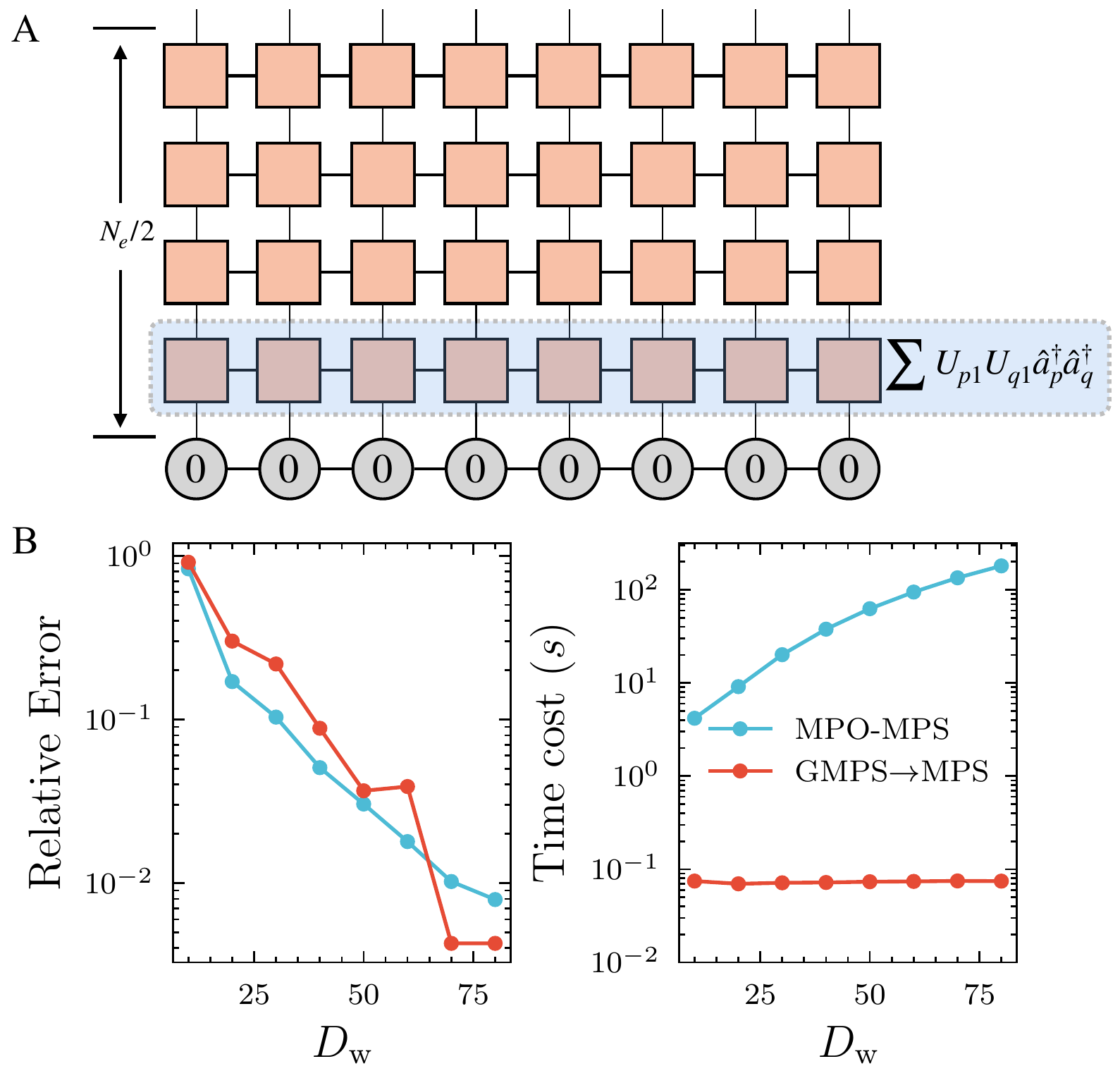}
    \caption{(a) Conversion of SD-to-MPS by using brute force long ranged MPO-MPS application (Eq.~\eqref{eq:mpo-mps}). (b) Comparing the brute force MPO-MPS approach and the Gaussian MPS to MPS approach in terms of relative error of overlap and the time cost as a function of walker's bond dimension.}
    \label{fig:mpomps}
\end{figure}

\paragraph{Brute-force MPO-MPS application.} The brute-force MPO-MPS application is a straightforward, though not computationally efficient method~\cite{wu_tensor_2020}. The conversion of SD-to-MPS is accomplished by sequentially applying a series of creation operators to the initial vacuum state. These creation operators inherently involve long-range MPO, which contributes to a rapid increase in the entanglement levels of the MPS and encounters challenges in efficiently compressing the resultant MPS. The high entanglement levels make it difficult to represent the state with a low-dimensional MPS without losing significant information. 
In Fig.~\ref{fig:mpomps}, we show its demanding computational time by comparing with the GMPS-to-MPS approach. 

\paragraph{Bipartite decomposition of SD}\label{para:bipartite}
Another approach we considered is the formation of MPS from a given SD using a bipartite decomposition~\cite{peschel_special_2012,klich_lower_2006,petrica_finite_2021}.
\add{A naive decomposition can be written as 
\begin{equation}
    |\psi\rangle = \prod_{k=1}^N (\sum_{p=1}^l U_{pi} a_p^\dagger + \sum_{p=l+1}^{N} U_{pi} a_p^\dagger)|0\rangle
\end{equation}
which makes the bipartite decomposition of $|\psi\rangle = \sum_\nu \lambda_\nu |L_\nu^l\rangle |R_\nu^l\rangle$ a  
maximally entangled state. 
}
Another bipartite decomposition of SD can be written as~\cite{peschel_special_2012,klich_lower_2006,petrica_finite_2021}
\begin{equation}
    |\psi\rangle = \Pi_{k=1}^N(\sqrt{\varepsilon_k}|l_k\rangle+\sqrt{1-\varepsilon_k}|r_k\rangle),\label{eq:factorize}
\end{equation}
where $\{|l_k\rangle\}$ and $\{|r_k\rangle\}$ represent the projected basis which spans the subset of spin orbitals $\mathcal{L}=\{\phi_1,\phi_2,\phi_3\cdots,\phi_l\}$ and $\mathcal{R}=\{\phi_{l+1},\phi_{l+2},\cdots,\phi_N\}$, respectively.
\add{This is similar to density matrix embedding theory~\cite{PhysRevLett.109.186404,wouters2016practical}.} 
$|l_k\rangle$ and $|r_k\rangle$ can be obtained by performing SVD on the subblock of the coefficient matrix,
\begin{equation}
    C^{\mathcal{L}}_{pq} \equiv U_{pq} = \sum_s Q_{ps}S_s V^\dagger_{sq}, \; p\le l \label{eq:svd_bipar}
\end{equation}
where $S$ contains $l$ non-zero singular values and $N-l$ zero singular values and these singular values correpsond to the $\sqrt{\varepsilon_k}$ in Eq.~\eqref{eq:factorize}.
A phase emerges for Eq.~\eqref{eq:svd_bipar}, $e^{i\theta}V^\dagger$, which should be removed by dividing the first row of $V^\dagger$ with the scaling factor det($V^\dagger$). 
Using this, $|l_k\rangle$ and $|r_k\rangle$ read
\begin{equation}
    \begin{aligned}
    & \add{a_{l_k}^\dagger} = \sum_{p=1}^l \sum_q U_{pq}V_{qk} a_p^\dagger, \\\nonumber
    & \add{a_{r_k}^\dagger} = \sum_{p=l+1}^N\sum_q U_{pq}V_{qk} a_p^\dagger    
\end{aligned}\label{eq:lr}
\end{equation}
Using the Schmidt decomposition at the $l$th site,
we have
\begin{equation}\label{eq:sch_decomp}
    |\psi\rangle = \sum_{\nu}\lambda_\nu|L^l_\nu\rangle|R^l_\nu\rangle
\end{equation}
where the $\lambda_\nu$, $|L^l_\nu\rangle$ and $|R^l_\nu\rangle$ can be computed from Eq.~\eqref{eq:factorize}~\eqref{eq:lr}.
\begin{equation}
    \lambda_{\nu} = \sqrt{\epsilon_{L_1}\epsilon_{L_2}\cdots\epsilon_{L_\mu}(1-\epsilon_{R_1})\cdots(1-\epsilon_{R_{N-\mu}})}\label{eq:bipar_coeff}
\end{equation}
\begin{equation}\label{eq:Lnu}
    |L_\nu^l\rangle = |l_{L_1}l_{L_2}\cdots l_{L_\mu}\rangle
\end{equation}
\begin{equation}\label{eq:Rnu}
    |R_\nu^l\rangle = |r_{R_1}r_{R_2}\cdots r_{R_{N-\mu}}\rangle
\end{equation}
A complete linear expansion of $|\psi\rangle$ in spin orbital basis is obtained with Eq.~\eqref{eq:sch_decomp} to \eqref{eq:Rnu}, with exponentially many terms.

By introducing a threshold \(\epsilon_{\text{trunc}}\), relatively smaller components can be effectively screened out. 
If the coefficients can be sorted by amplitudes and truncated based on \(\epsilon_{\text{trunc}}\), the SD is \textit{optimally} compressed.
To manage the $2^N$ components, we sort the components by an \(N\)-length binary bitstring for \(|L_\nu\rangle|R_\nu\rangle\), where a 0 in the bitstring signifies selecting \(|l_k\rangle\) and a 1 signifies \(|r_k\rangle\). The initial bitstring, chosen for having the largest coefficients, selects \(|l_k\rangle\) if \(\sqrt{\epsilon_k} > \sqrt{1-\epsilon_k}\) and \(|r_k\rangle\) otherwise.
Starting with the bitstring representing the largest coefficients, we first flip a single bit. Then, in a structured manner, we systematically explore further combinations by flipping pairs of bits ($\mathcal{C}_N^2$), followed by triples ($\mathcal{C}_N^3$), and so forth, increasing the number of flipped bits. 
A bitstring is retained only if its corresponding coefficients exceed the threshold $\epsilon_{\textrm{trunc}}$. 
The process halts at $K$ flips when no additional bitstrings are generated above the threshold after exploring all combinations within \(K\) flips.
To compute the MPS site tensor, we iterate through all bitstrings above the threshold and use the following:
\begin{equation}\label{eq:clark_A}
    \begin{aligned}
        & A_{a_{1}}^{n_1} = \lambda_{a_1}\langle n_1|L_{a_1}^1\rangle, A_{a_{N-1}}^{n_N}=\langle n_N|R_{a_{N-1}}^{N-1}\rangle \\
        & A_{a_{i},a_{i+1}}^{n_{i+1}}=\langle n_{i+1} R_{a_{i+1}}^{i+1}|R_{a_i}^{i}\rangle,\; 1\le i< N-1,
    \end{aligned}
\end{equation}
where $A$ is our MPS site tensor.
It is also noteworthy that this method can be parallelized across each site, where the bipartite decomposition (Eq.~\eqref{eq:sch_decomp}) is independent for each site, and the computation of local site elements (Eq.~\eqref{eq:clark_A}) relies only on the decomposition information of two neighboring sites.
\paragraph{Perfect MPS sampler of bit strings.}\label{sec:strategy3}  
Apart from performing SD-to-MPS with Strategies 1 and 2, the overlap can be computed by sampling bitstrings from the walker state~\cite{ferris_perfect_2012, sandvik_variational_2007, schuch_simulation_2008}. 
This is called ``perfect'' because unlike Metropolis sampling it produces uncorrelated samples~\cite{ferris_perfect_2012}.
\begin{equation}
    \langle\Psi_{\textrm{T}}|\phi\rangle = \sum_s \langle\phi|s\rangle\langle s|\phi\rangle\frac{\langle \Psi_\textrm{T}|s\rangle}{\langle \phi|s\rangle} \simeq \frac{1}{N_{\textrm{samp}}}\sum_s \frac{\langle \Psi_\textrm{T}|s\rangle}{\langle \phi|s\rangle}\label{eq:sample_from_walker}
\end{equation}
with the sampling probability coming from walker, $P(|s\rangle) = |\langle\phi|s\rangle|^2$~\cite{PhysRevLett.128.220503,wan_matchgate_2023}.
The state $|s\rangle=|s_1 s_2\cdots s_N\rangle$ is a product state that is represented as a bitstring with binary values ($s_i\in\{0, 1\}$) indicating the occupation or absence of spin orbitals.
As $|\phi\rangle$ is a matchgate, one can efficiently sample bit strings from this state without relying on Metropolis sampling.
$|s\rangle$ is a computational state, hence an MPS state of bond dimension 1.

The lower part in Fig.~\ref{fig:circuit}(B) shows using perfect sampler from the MPS to obtain the probability of occupation for the $i-$th bit of the occupancy bitstring, given the partial occupation bitstring $\tilde{s}_1\tilde{s}_2\cdots \tilde{s}_{i-1}$, the probability of occupation ($s_i=1$) and unoccupation ($s_i=0$) is obtained by
\begin{equation}
    P(s_i|\tilde{s}_1\tilde{s}_2\cdots \tilde{s}_{i-1}) = \left(\sum_{a_i} (\tilde{A}_{a_i}^{s_i})^*\tilde{A}_{a_i}^{s_i}\right) / P(\tilde{s}_1\tilde{s}_2\cdots \tilde{s}_{i-1})
\end{equation}
\begin{align*}
\label{eq:perfect_sampler}
    \tilde{A}_{a_i}^{s_i}=\sum_{a_j,j<i}A_{a_1}^{\tilde{s}_1}A_{a_1 a_2}^{\tilde{s}_2}\cdots A_{a_{i-2}a_{i-1}}^{\tilde{s}_{i-1}}A_{a_{i-1}a_i}^{s_i}
\end{align*}
as digramically expressed as follows,

\begin{equation}
P(s_i|\tilde{s}_1\tilde{s}_2\cdots \tilde{s}_{i-1})\Leftarrow
\begin{diagram_05}
\MpsWalkerCirclePattern{0}{1}{}{ur};
\MpsWalkerCirclePattern{0}{-1}{}{dr};
\draw (1.0,1) edge[out=0,in=0] (1.0, -1);
\draw (0, 2.5) node {$s_i$};
\draw (0, -2.5) node {$s_i$};
\end{diagram_05}
\end{equation}

\begin{equation}
\begin{diagram_05}
\MpsWalkerCircleOrange{0}{0}{$ $}{ur};
\MpsWalkerCircleOrange{1.5}{0}{$ $}{ulr};
\MpsWalkerCircleOrange{4.5}{0}{$ $}{ulr};
\MpsWalkerCircle{6}{0}{$ $}{ulr};
\draw (3, 0) node {$\cdots$};

\draw (0, 1.5) node {$\tilde{s}_1$};
\draw (1.5, 1.5) node {$\tilde{s}_2$};
\draw (4.5, 1.5) node {$\tilde{s}_{i-1}$};
\draw (6, 1.5) node {$s_i$};
\draw (-2.5, 1.5) node {$s_i$};
\draw (-1., 0) node {$=$};
\MpsWalkerCirclePattern{-2.5}{0}{$ $}{ur};
\end{diagram_05}\label{eq:pr1r2}
\end{equation}

% \begin{widetext}
% \begin{equation}
% \sum_{\substack{\{a_j\}, \\ \{a_j'\}, \\ j<i}} 
% \end{equation}
% \end{widetext}
A random number is proposed to determine the value of $s_i\in[0, 1]$.
The following algorithm describes the process of using the perfect sampler to compute the overlap between MPS and SD, which uses the sampling probability from the trial. The algorithm that uses the probability in Eq.~\eqref{eq:sample_from_walker} can be adjusted accordingly.
\begin{algorithm}[H]\label{mc_algo}
\SetAlgoLined
\caption{Perfect sampling of the overlap}
Set initial overlap $f=0$\;
\For{$sample = 1$ \KwTo $N_{\textrm{samp}}$}{
\For{$i=1$ \KwTo $N$}{Compute the probability matrix of site $i$, get $P_0$ and $P_1$\;
Draw $u \sim \text{Uniform}(0, 1)$\;
    \eIf{$u \leq P_0$}{
        Assign $s_i=0$;
    }{
        Assign $s_i=1$;
    }
    Next site sampling\;
    }
    Obtain $|s\rangle=|s_1s_2\cdots s_N\rangle$ and $f_0=\langle \Psi_{\textrm{T}}|s\rangle$\;
    Compute $f_1=\langle s|\psi\rangle$\;
    $f = f + f_1/f_0$\;
}
Get the sampled overlap, $f/N_{\textrm{samp}}$.
\end{algorithm}

\begin{figure}
\centering       
\includegraphics[width=0.5\textwidth]{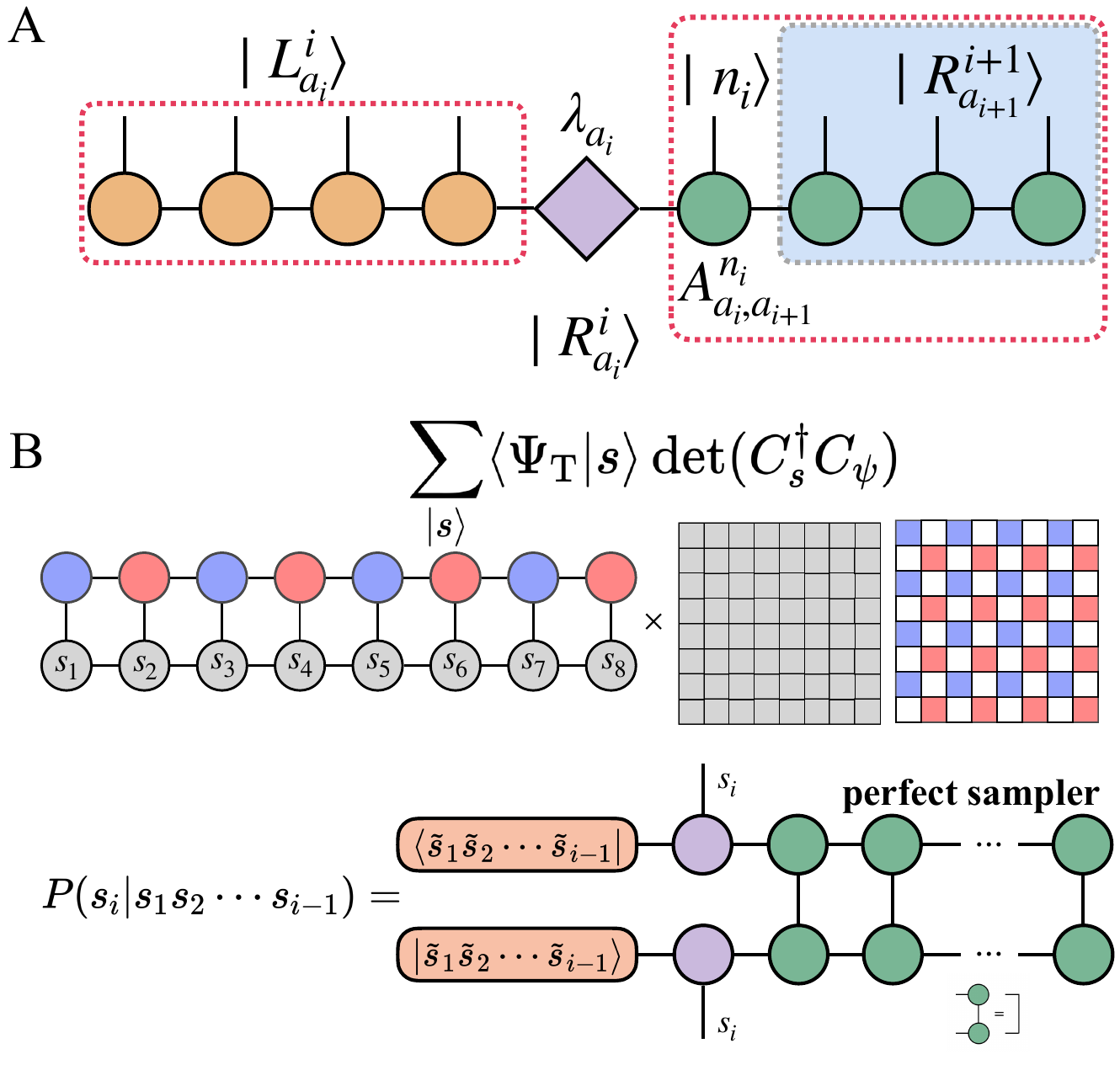}     
\caption{\textbf{Schematics of other overlap computation strategies for MPS-AFQMC}. 
(A) Strategy 2: computation of each local matrix of MPS by the bipartite 
decomposition of Slater determinants; 
(C) Strategy 3: sampling of the overlap between the MPS trial and SD with the perfect sampler.
}
\label{fig:circuit}
\end{figure}  
 
\paragraph{The GMPS-to-MPS approach.} We add some details on the GMPS-to-MPS approach we mentioned in Sec.~\ref{sec:sd2mps_White}.
The Abelian symmetry of MPS~\cite{zgid_spin_2008,sharma_spin-adapted_2012} is conserved by employing quantum number conserving SVD to preserve the numerical stability in Eq.~\eqref{eq:svd}. When converting the SD to an MPS, we observed that a random phase factor, denoted as $e^{i\theta}$, emerges due to the U(1) gauge freedom. In some applications, this random phase factor is inconsequential, as it does not affect the resulting state remaining an eigenstate of the one-body operator~\cite{fishman_compression_2015,petrica_finite_2021}. 
However, in our current context, it affects the phaseless constraint, which relies on the phase change in the overlap. To eliminate this gauge freedom and fix the gauge, we multiply the MPS by the following scaling factor $e^{i\theta}$, where
\begin{equation}
\theta = \arg{(\langle \psi_\textrm{SD}|\psi_{\textrm{max}}\rangle}) - \arg{(\langle \psi_\textrm{MPS}|\psi_{\textrm{max}}\rangle)}\label{eq:remove_phase}
\end{equation}
% \begin{equation}
%     \frac{\langle \psi_\textrm{MPS}|\psi_{\textrm{max}}\rangle}{\langle \psi_\textrm{SD}|\psi_{\textrm{max}}\rangle}
% \end{equation}
where $|\psi_{\textrm{max}}\rangle$ is the product state with the largest overlap with the MPS, which can be 
deterministically obtained easily by the perfect sampler algorithm. 

Another problem is that the compression applied during the contraction of the gates makes the resulting MPS unnormalized.
Therefore, we apply the normalization after all the gates are contracted, which mitigates the error. We analyze the impact of normalization on the error mitigation of GMPS-to-MPS.
We assume the GMPS-to-MPS approach produces an error of $|\epsilon\rangle$.
Without doing normalization of the resulting MPS, the relative error of the overlap value is equal to 
\begin{equation}
  \left|\frac{\langle\Psi_\textrm{T}|\epsilon\rangle}{\langle\Psi_\textrm{T}|\psi_\text{true}\rangle}\right|
  \label{eq:unnormalizederror}
\end{equation}
On the other hand, by doing normalization of the resulting MPS, the overlap value is expressed as
\begin{equation}
\frac{\langle\Psi_\textrm{T}|(|\psi_\text{true}\rangle+|\epsilon\rangle)}{\sqrt{1
+\langle\psi_\text{true}|\epsilon\rangle
+\langle\epsilon|\psi_\text{true}\rangle+\langle\epsilon|\epsilon\rangle}}\label{eq:b1}    
\end{equation}
and by performing first-order Taylor expansion of the denominator,
the overlap is approximated to be
\begin{align}\nonumber
     &\langle\Psi_\textrm{T}|\psi_\text{true}\rangle + \langle\Psi_\textrm{T}|\epsilon\rangle - \frac{1}{2}\left(\langle\psi_\text{true}|\epsilon\rangle +
     \langle\epsilon|\psi_\text{true}\rangle + \langle\epsilon|\epsilon\rangle\right)\\
     &\times(\langle\Psi_\textrm{T}|\psi_\text{true}\rangle + \langle\Psi_\textrm{T}|\epsilon\rangle)
\end{align}
Assuming a small error vector, which implies negligible second-order contributions, the relative error can be approximated (up to the first order) as
\begin{equation}
    \left|
    \frac{\langle\Psi_\textrm{T}|\epsilon\rangle}{\langle\Psi_\textrm{T}|\psi_\text{true}\rangle}
    - \frac{1}{2}\left(\langle\psi_\text{true}|\epsilon\rangle +
     \langle\epsilon|\psi_\text{true}\rangle\right)
    \right|
    \label{eq:normalizederror}
\end{equation}
While \cref{eq:normalizederror} appears to have a larger error than \cref{eq:unnormalizederror}, 
the extra terms in \cref{eq:normalizederror} 
are expected to be orders of magnitude smaller since they do not have a prefactor that is inverse of a small quantity.
Given that there is only a small additional error suggested by our analysis, we chose to normalize the resulting MPS representation of walkers.
Based on several numerical tests, we found that this choice does not affect the final MPS-AFQMC energies in a statistically significant way.
 
The conversion to MPS can be performed with various orderings of the site basis. However, the final ordering must match the trial MPS ordering to evaluate their overlap.
Fig.~\eqref{fig:wflow}(D) and Fig.~\ref{fig:abab} illustrate two distinct schemes for SD-to-MPS, both finally maintaining the same ordering. 
The second conversion is achieved using \textit{SWAP} gates and associated parity gates to recover the alternating ordering~\cite{li_fly_2022}. In the specific example depicted in Fig.~\ref{fig:abab}(B), the parity gate is denoted as:
\begin{equation}
   \hat{\mathcal{P}} = (-1)^{\sum_{k=1}^{N_{\downarrow}}\hat{\sigma}_{k}^{z\downarrow}\Pi_{l\le k}^{N_{\uparrow}}\hat{\sigma}_{l}^{z\uparrow}}(-1)^{\sum_{i=1}^{N_{\uparrow}}\hat{\sigma}_{i}^{z\uparrow}\Pi_{j<i}^{N_{\downarrow}}\hat{\sigma}_{j}^{z\downarrow}}
\end{equation}
% NEED to be double checked
where $\sigma_{i}^{z\uparrow(\downarrow)}$ is the Pauli Z operator for the $i$th spin-up(down) orbitals.
As shown in Figure~\ref{fig:abab}(B), the SD-to-MPS conversion using \textit{SWAP} gates is both less accurate and more time-consuming compared to the alternating ordering method used in the main text of the manuscript. 
\begin{figure}
    \centering
    \includegraphics[width=0.45\textwidth]{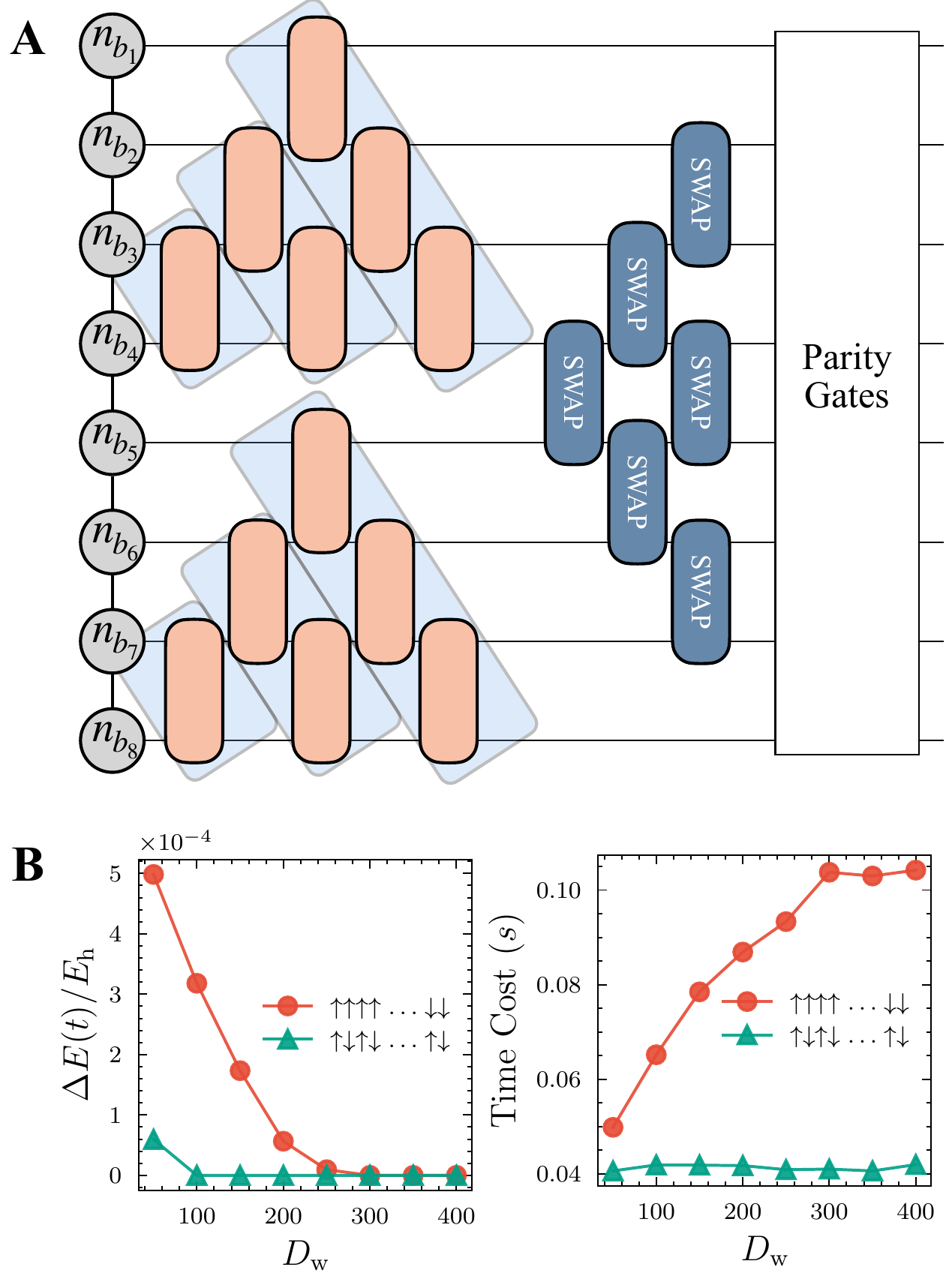}
    \caption{(A) Performing the GMPS-to-MPS for spin up and spin down block separately then reordering; (B) Comparing two ordering schemes with H$_{10}$ with interatomic distance $r=5.0a_0$; The left panel is the mean absolute errors compared to for the energy of selected 5 equilibrium AFQMC blocks and the right panel shows the time cost.}
    \label{fig:abab}
\end{figure}

\begin{figure} 
    \centering
    \includegraphics[width=0.4\textwidth]{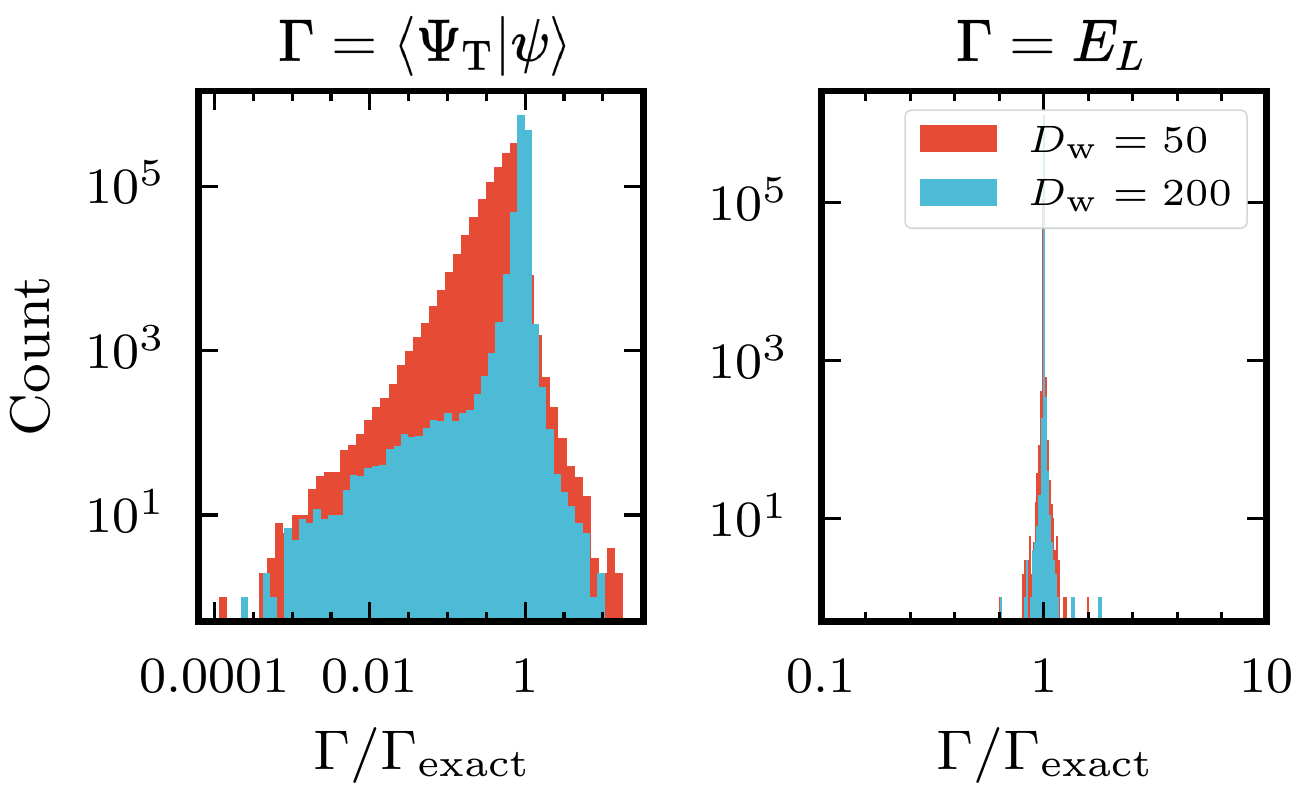}
    \caption{Distributions of the overlap ratio and the local energy ratio for H$_{50}$ with the MPS form of UHF trial.
    The overlap ratio is defined as the ratio of the overlap calculated with a specific $D_{\textrm{w}}$ and $\epsilon_{\textrm{ad}}$ to the exact overlap value, which can be obtained by AFQMC with UHF trial in the Slater determinant form. 
    Similarly, the local energy ratio compares the local energy with its converged value. 
    These distributions are aggregated across the SD of 1,280,000 equilibrated walkers.}
    \label{fig:uhfmps}
\end{figure}

\section{The AFQMC convergence test with time step and number of walkers}
In Fig.~\ref{fig:bias} we test the number of walkers and time step used in AFQMC calculation for the ${4\times 4}$ hydrogen lattice system.
When using different setups of the number of walkers and the time step, the AFQMC energies remain the same up to statistical error bars.

\begin{figure}
    \centering
    \includegraphics[width=0.4\textwidth]{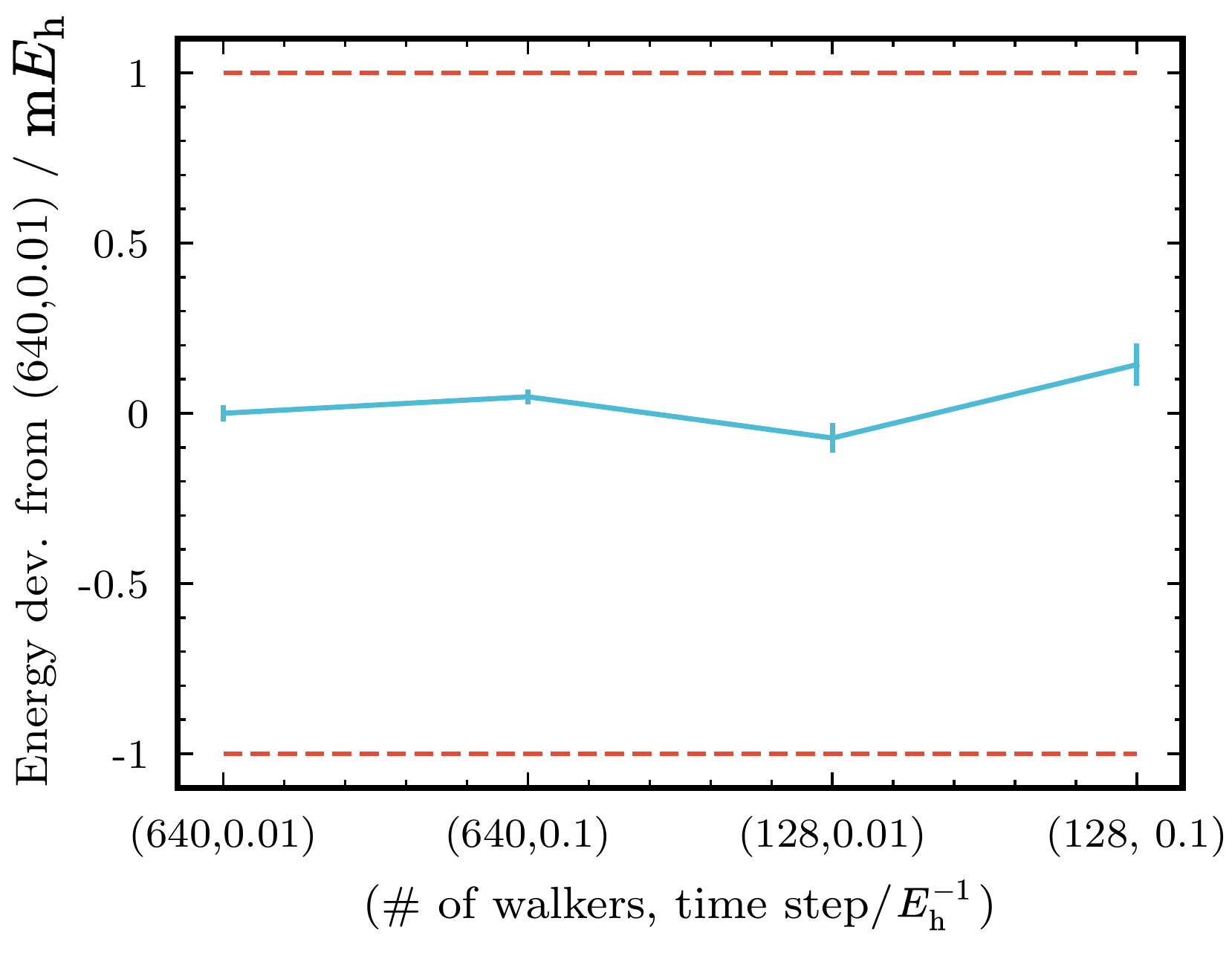}
    \caption{\textbf{The UHF-AFQMC Energy for the ${4\times 4}$ hydrogen lattice.} The energy differences associated with varying the number of walkers and the time step, compared to using 640 walkers and $dt = 0.01$ E$_\textrm{h}^{-1}$, are plotted.}
    \label{fig:bias}
\end{figure}

\section{Low bandwidth projection of correlation matrix}\label{sec:lesp}
\begin{figure}
    \centering
    \includegraphics[width=0.5\textwidth]{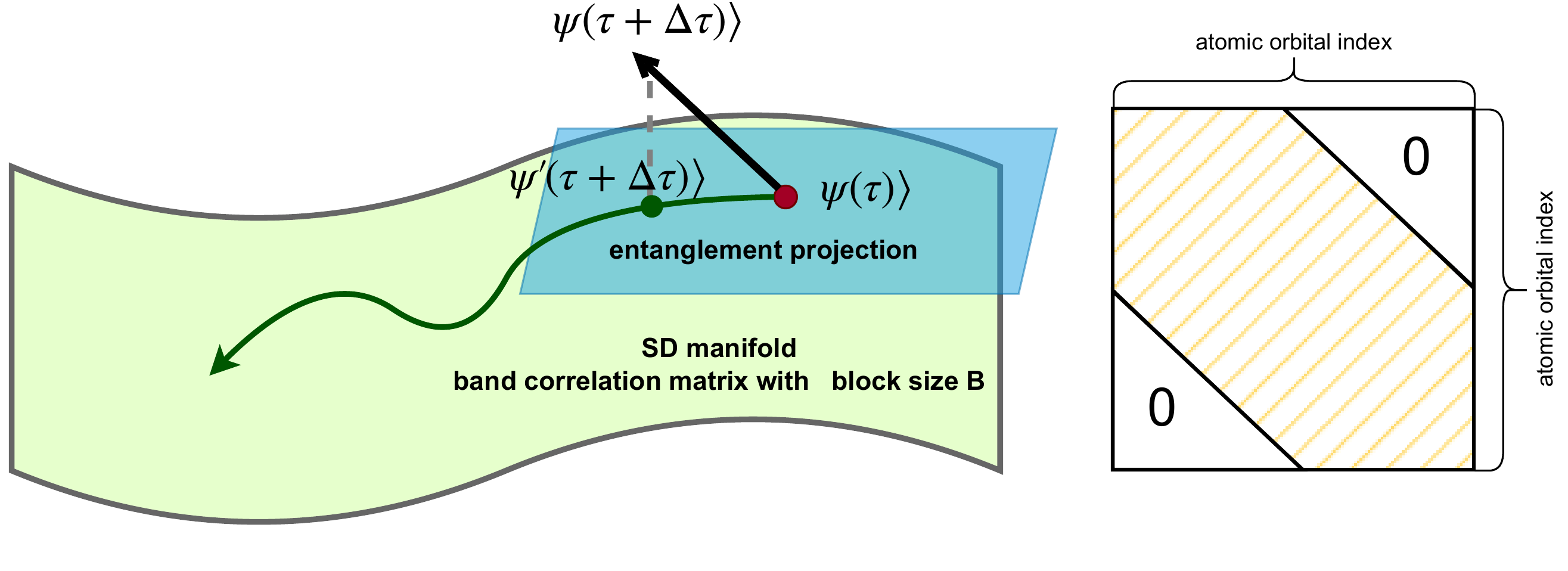}
    \caption{\textbf{Low entanglement projection.} Projecting the SD after imaginary time evolution to a subspace with fixed block size for the correlation matrix.}
    \label{fig:proj_scheme}
\end{figure}
\begin{figure*}
    \centering
    \includegraphics[width=0.8\textwidth]{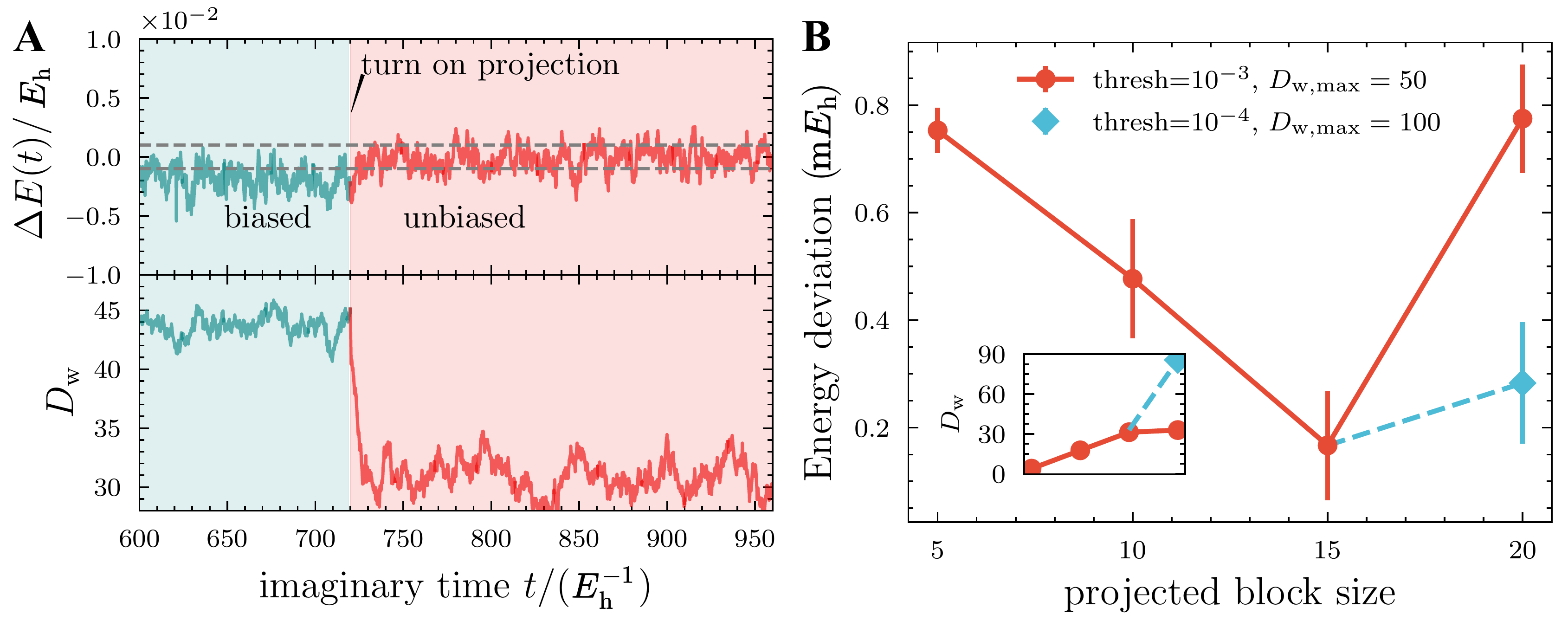}
    \caption{\textbf{Low entanglement projection.} 
    The system is H$_{\textrm{50}}$ with STO-6G basis and inter-atomic distance of $r=3a_0$. The upper panel of (A) shows the energy deviation from the exact answer during the imaginary time evolution and the lower panel of (A) gives the weight averaged bond dimension of walkers. The first stage (green shaded) corresponds to simulation with the adaptive threshold $\epsilon_{\textrm{ad}}=10^{-3}$ and maximally allowed walker's bond dimension $D_{\textrm{w}}=50$, and the second stage turns on the low entanglement projection with projected block size of 5. (B) The statistical energy deviation for choosing different projected block sizes with different adpative threshold. The inset show the weight-averaged bond dimension after performing low entanglement projection.}
    \label{fig:lesp}
\end{figure*}
A product state can be represented as an MPS with a bond dimension of 1, and the correlation matrix is characterized by having exactly $N_e$ diagonal elements set to one. These correspond to the occupied atomic orbitals in the product state. All other elements of the correlation matrix are zero.
As the randomness (Eq.~\eqref{eq:Bprop}\eqref{eq: 
propagate}) or off-diagonal correlation in a system increases, deviating from a purely diagonal form, and the entanglement of the system also increases. Consequently, a larger bond dimension is required to accurately represent the state as a MPS, as shown in
Fig.~\ref{fig:evolution2}. 
This increase in bond dimension allows the MPS to encapsulate the more intricate correlation patterns that arise in the system's state.
Consequently, compressing the walker wavefunction becomes more challenging, necessitating larger block sizes and a greater number of Givens gates. This, in turn, makes the compression of the MPS less effective. To address this challenge, we propose the subspace low entanglement projection during the time evolution, which can also speed up the conversion of SD-to-MPS.

Drawing inspiration from the method to approximate the Gaussian MPS of SD, the correlation matrix can be approximated as a band matrix. In this approximation, the matrix retains significant elements along the diagonal and in a finite band around it, effectively capturing the dominant correlations. This band matrix representation simplifies the structure while preserving the essential features of the correlation matrix, facilitating a more efficient and tractable representation in the context of MPS. This approach could be particularly useful when dealing with systems where the relevant correlations are localized or decay rapidly with distance.:
\begin{equation}
\Lambda'_{pq}=
\begin{cases}
\Lambda_{pq} & \text{if } p\leq q < p+\mathcal{B} \\
0 & \text{otherwise}
\end{cases}\label{eq:compress_Lambda}
\end{equation}
and the coefficient matrix of the projected walker $\psi'$ is obtained by the singular value decomposition of its correlation matrix
\begin{equation}
    \Lambda' = VSV^\dagger
\end{equation}
with $\mathcal{B}$ in Eq.~\eqref{eq:compress_Lambda} restrict the overlap $
\langle\psi|\psi'\rangle\ge1-\epsilon_{\textrm{proj}}$.
With a sufficiently small $\epsilon_{\textrm{proj}}$, $\Lambda'$ will exhibit $N_e$ singular values that are nearly equal to 1, and $N-N_e$ eigenvalues that are nearly equal to 0. 
We can express the approximated walker wavefunction, denoted as $U'$, with columns corresponding to those in $V$ with corresponding singular values close to 1. 
A phase introduced from the projection should be addressed using $\langle \psi|\psi'\rangle$.
The resulting $|\psi'\rangle$ should multiply the factor $e^{-i\theta}$ where 
\begin{equation}
\theta=\arg{(\langle\psi|\psi'\rangle)}    
\end{equation}
which is similar to what we did in Eq.~\eqref{eq:remove_phase}.

This procedure is referred to as low-entanglement subspace projection or low bandwidth projection of correlation matrix,
\begin{equation}
    |\psi'\rangle = \hat{P}|\psi\rangle
\end{equation}
which leads to a modification of Eq.~\eqref{eq: propagate},
\begin{equation}
\left|\psi(\tau+\Delta \tau)\right\rangle  =\hat{P}\hat{B}\left(\Delta \tau, \mathbf{x}-\overline{\mathbf{x}}\right)\hat{P}\left|\psi(\tau)\right\rangle\label{eq: proj_propagate}
\end{equation}
We present the schematics in Fig.~\ref{fig:proj_scheme}.
We apply the projection for every step and monitor the overlap of $\langle \psi|\psi'\rangle\ge 1-\epsilon_{\textrm{proj}}$, and it turns out the $\mathcal{B}$ will gradually decrease and saturate as a small block size.
The low entanglement subspace projection, as implied by its name, serves to reduce the entanglement of walker's SDs, leading to faster convergence with respect to $D_{\textrm{w}}$ and the adaptive threshold $\epsilon_{\textrm{ad}}$ in the GMPS approach.
The projection makes the SD-to-MPS easier.
Subsequently, we can obtain a more accurate MPS representation of the approximated walker wavefunction, effectively eliminating the errors associated with the conversion of the wavefunction to the GMPS format. 
As an example,  in Fig.~\ref{fig:H20}(B), we demonstrate the convergence of accuracy and the time cost concerning the bond dimension $D_{\textrm{w}}$ of equilibrated walkers for both the bipartite approach and the GMPS approach with different projected block size. 
Smaller block sizes greatly reduce computational time and improve accuracy.

Fig.~\ref{fig:lesp}(a) displays the energy deviation from the exact answer during the imaginary time evolution for H$_{50}$ with an interatomic distance of $r=3a_0$. The upper panel demonstrates the energy deviation, and the lower panel presents the corresponding bond dimension at each time. The first stage, marked with a green shade, corresponds to the simulation with the adaptive threshold $\epsilon_{\textrm{ad}}=10^{-3}$ and a fixed walker's bond dimension $D_{\textrm{w}}=50$. The second stage, marked with a red shade, corresponds to a projected block size equal to 5.
Without projection, there is a bias below the exact answer. Turning on the low entanglement projected technique with a block size of 5, as shown in Fig.\ref{fig:lesp}(a), the energy bias can be removed, and the final statistical error is below 1 $m\textrm{H}$, as demonstrated in Fig.\ref{fig:lesp}(b).
As the statistical evolution progresses, the walker wavefunction becomes more entangled, which makes converting to MPS more challenging. 
Increasing the projected block size for a given adaptive threshold leads to more accurate results. However, a larger block size also requires a tighter threshold and a larger bond dimension. For a given block size, tightening the adaptive threshold will also improve the accuracy of the results.

% \section{The timing comparison.}
\begin{figure}
    \centering
    \includegraphics[width=0.35\textwidth]{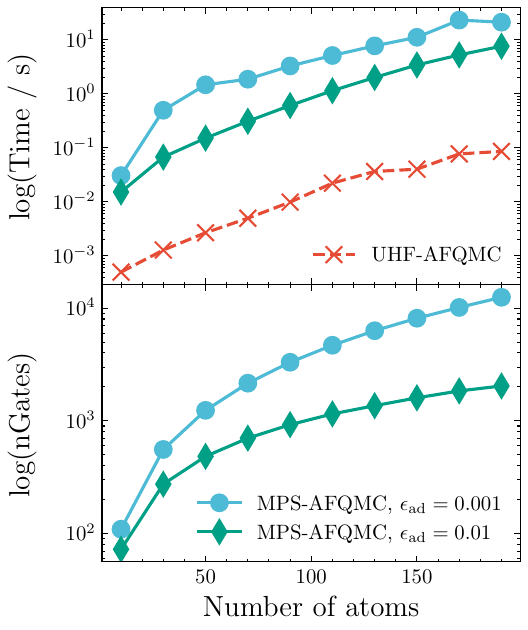} 
    \caption{\add{(a) The computational time 
    of MPS-AFQMC (1 walker's SD-to-MPS operation) and UHF-AFQMC (1 propagation step of 1 walker) for different system sizes.
    (b) The corresponding number of gates in the quantum circuit 
    for the SD-to-MPS operation using different adaptive threshold.
    The results are averaged over 3200 equilibrated walkers.}   
    }  
    \label{fig:scaling} 
\end{figure}    
\end{appendix}

\bibliography{references}
\end{document}